\documentclass[11pt,letterpaper,twoside,reqno,nosumlimits]{amsart}

\allowdisplaybreaks

\synctex=1

\usepackage[usenames,dvipsnames]{xcolor}
\usepackage{fancyhdr}
\usepackage{amsmath,amsfonts,amsbsy,amsgen,amscd,mathrsfs,amssymb,amsthm}
\usepackage{subfig}
\usepackage{url}

\usepackage{mathtools}

\usepackage{bm,bbm}
\providecommand{\mathbold}[1]{\bm{#1}}
\renewcommand{\mathbb}[1]{\mathbbm{#1}}
\newcommand{\mtx}[1]{\mathbold{#1}}
\newcommand{\Id}{\mathbf{I}}

\newcommand{\RR}{\mathbbm{R}}
\newcommand{\E}{\mathbbm{E}}
\newcommand{\zeromtx}{\bm{0}}
\newcommand{\bone}{\mathbbm{1}}
\newcommand{\ep}{\varepsilon}
\providecommand{\eps}{\ep}
\newcommand{\econst}{\mathrm e}
\usepackage[notcite,notref,final]{showkeys}
\newcommand{\Prob}[1]{\mathbbm{P}\left\{{#1}\right\}}
\DeclareMathOperator{\rank}{rank}

\def\todoc/{\textcolor{red}{\textbf{[citation]}}}

\newcommand{\lawa}{\mathrm P_{\mtx A}}
\newcommand{\stie}[2]{\mathcal S_{#1, #2}}
\newcommand*{\barrho}{\bar \rho}
\newcommand*{\qr}[1]{\mathsf{QR}[#1]}
\newcommand*{\subspacedist}[2]{\mathrm{dist}(#1, #2)}
\newcommand*{\gap}{\rho_k}
\newcommand*{\norm}[1]{\left\|#1\right\|}
\newcommand*{\good}{\mathcal G}
\newcommand*{\filtration}{\mathcal F}

\mathtoolsset{showonlyrefs}

\usepackage[font=small,margin=0.25in,labelfont={sc},labelsep={colon}]{caption}

\usepackage{tikz}
\usepackage{microtype}
\usepackage{enumitem}

\definecolor{dark-gray}{gray}{0.3}
\definecolor{dkgray}{rgb}{.4,.4,.4}
\definecolor{dkblue}{rgb}{0,0,.5}
\definecolor{medblue}{rgb}{0,0,.75}
\definecolor{rust}{rgb}{0.5,0.1,0.1}

\RequirePackage{doi}
\usepackage{hyperref}

\hypersetup{urlcolor=rust}
\hypersetup{citecolor=dkblue}
\hypersetup{linkcolor=dkblue}

\usepackage{setspace}

\usepackage{graphicx}
\usepackage{booktabs,longtable,tabu} 
\setlength{\tabulinesep}{1mm}
\usepackage{multirow} 

\usepackage{float}

\usepackage[full]{textcomp}

\usepackage[scaled=.98,sups,osf]{XCharter}
\usepackage[scaled=1.04,varqu,varl]{inconsolata}
\usepackage[type1]{cabin}
\usepackage[charter,vvarbb,scaled=1.07]{newtxmath}
\usepackage[cal=boondoxo]{mathalfa}
\linespread{1.04}

\usepackage[T1]{fontenc}

\usepackage{bm}

\graphicspath{{figures/}}

\newtheorem{theorem}{Theorem}[section]
\newtheorem{lemma}[theorem]{Lemma}

\newtheorem{proposition}[theorem]{Proposition}

\theoremstyle{definition}

\newtheorem{assumption}[theorem]{Assumption}

\numberwithin{equation}{section} 
\numberwithin{figure}{section}
\numberwithin{table}{section}

\floatstyle{plaintop}
\newfloat{recipe}{thp}{lor}
\floatname{recipe}{Recipe}
\numberwithin{recipe}{section}

\providecommand{\mathbold}[1]{\bm{#1}}

\renewcommand{\phi}{\varphi}

\renewcommand{\mid}{\mathrel{\mathop{:}}}

\providecommand{\mathbbm}{\mathbb}

\newcommand{\triplenorm}[1]{{\left\vert\kern-0.25ex\left\vert\kern-0.25ex\left\vert #1
    \right\vert\kern-0.25ex\right\vert\kern-0.25ex\right\vert}}

\evensidemargin=0in
\oddsidemargin=0in
\textwidth=6.5in
\topmargin=-0.33in
\headheight=0.25in
\textheight=9in

\begin{document}

\title{Streaming $k$-PCA: Efficient guarantees for Oja's algorithm, beyond rank-one updates}
\author[Huang et al.]{De Huang, Jonathan Niles-Weed, and Rachel Ward}
\thanks{The authors gratefully acknowledge the funding for this work.
DH was in part supported by NSF Grants DMS-1907977, DMS-1912654, and the Choi Family Postdoc Gift Fund.
JNW was supported under NSF grant DMS-2015291.
JNW and RW were supported in part by the Institute for Advanced Study,
where some of this research was conducted.
RW received support from AFOSR MURI Award N00014-17-S-F006 and NSF grant DMS-1952735 .}
\date{6 February 2021}

\begin{abstract}
We analyze Oja's algorithm for streaming $k$-PCA, and prove that it achieves performance nearly matching that of an optimal offline algorithm.
Given access to a sequence of i.i.d.~$d \times d$ symmetric matrices, we show that Oja's algorithm can obtain an accurate approximation to the subspace of the top $k$ eigenvectors of their expectation using a number of samples that scales polylogarithmically with $d$.
Previously, such a result was only known in the case where the updates have rank one.  

Our analysis is based on recently developed matrix concentration tools, which allow us to prove strong bounds on the tails of the random matrices which arise in the course of the algorithm's execution.
\end{abstract}


\maketitle

\section{Introduction}
Principal component analysis is one of the foundational algorithms of statistics and machine learning.
From a practical perspective, perhaps no optimization problem is more widely used in data analysis~\cite{Jolliffe2002}.
From a theoretical perspective, it is one of the simplest examples of a non-convex optimization problem that can nevertheless be solved in polynomial time; as such, it has been an important proving ground for understanding the fundamental limits of efficient optimization~\cite{Simchowitz2018}.

In the basic setting, the practitioner has access to a sequence of independent symmetric random matrices $\mtx A_1, \mtx A_2, \dots$ with expectation $\mtx M \in \RR^{d \times d}$.
The goal is to approximate the leading eigenspace of $\mtx M$ or, more generally, to approximate the subspace spanned by its leading $k$ eigenvectors.
While it is natural to attempt to solve this problem by performing an eigen-decomposition of the empirical average $\bar{\mtx A} = \frac 1T \sum_{i=1}^T \mtx A_i$, the amount of space required by this approach can be prohibitive when $d$ is large.
In particular, if the matrices $\mtx A_i$ are sparse or low-rank, performing incremental updates with the matrices $\mtx A_i$ may be significantly cheaper than storing all the iterates or their average.
A tremendous amount of attention has therefore been paid to designing algorithms which can cheaply and provably estimate the subspace spanned by the top $k$ eigenvectors of $\mtx M$ using limited memory and a single pass over the data, a problem known as \emph{streaming PCA}~\cite{JaiJinKak16}.

The simplest and most natural approach to this problem was proposed nearly 40 years ago by Oja~\cite{Oja82:Simplified-Neuron,Oja1985}:
\begin{enumerate}
\item Randomly choose an initial guess $\mtx Z_0 \in \RR^{d \times k}$, and set $\mtx Q_0 \gets \qr{\mtx Z_0}$
\item For $t \geq 1$, set ${\mtx Q}_t \gets \qr{(\Id + \eta_t \mtx A_t) \mtx Q_{t-1}}$.
\end{enumerate}
Here, $\qr{{\mtx Q}_t}$ returns an orthogonal $\RR^{d \times k}$ matrix obtained by performing the Gram--Schmidt process to the columns of ${\mtx Q}_t$.
It is easy to see~\cite[Lemma 2.2]{AllLi17} that the Gram--Schmidt step commutes with the multiplicative update, so that we can equivalently consider a version of the algorithm which performs a single orthonormalization at the end, and outputs
\begin{equation*}
\mtx Q_t = \qr{\mtx Z_t}\,, \quad \mtx Z_t = \mtx Y_t \dots \mtx Y_1 \mtx Z_0\,,
\end{equation*}
where $\mtx Y_i := (\Id + \eta_i \mtx A_i)$.

Oja's algorithm can be viewed as a noisy version of the classic orthogonal iteration algorithm for computing invariant subspaces of a symmetric matrix~\cite[Section 7.3.2]{GolVan96}; alternatively, it corresponds to projected stochastic gradient descent on the Stiefel manifold of matrices with orthonormal columns~\cite{EdeAriSmi99}.
Despite its simplicity and practical effectiveness, Oja's algorithm has proven challenging to analyze because of its inherent non-convexity.

As a benchmark against which to compare Oja's algorithm, we may consider the performance of the simple offline algorithm which computes the leading $k$ eigenvectors of $\bar{\mtx A}$.
We write $\mtx V \in \RR^{d \times k}$ for the orthogonal matrix whose columns are the leading $k$ eigenvectors of $\mtx M$ and $\hat{\mtx V} \in \RR^{d \times k}$ for the matrix containing the leading $k$ eigenvectors of $\bar{\mtx A}$, and measure the quality of $\hat{\mtx V}$ by the following standard measure of distance between subspaces:
\begin{equation*}
\subspacedist{\hat{\mtx V}}{\mtx V} := \|\mtx V \mtx V^* - \hat{\mtx V}\hat{\mtx V}^*\|\end{equation*}

If $\|\mtx A_i - \mtx M\| \leq M$ almost surely and the gap between the $k$th and $(k+1)$th eigenvalues is $\rho_k$, then the Matrix Bernstein inequality~\cite[Theorem 1.4]{Tro12} combined with
Wedin's Theorem~\cite{Wedin1972} implies that there exists a positive constant $C$ such that
\begin{equation}\label{eq:bernstein}
 \subspacedist{\hat{\mtx V}}{\mtx V} \leq C \frac{M}{\rho_k}\sqrt{\frac{\log(d/\delta)}{T}}\,.
\end{equation}
with probability at least $1-\delta$.

The key question is whether Oja's algorithm is able to achieve similar performance.
However, except in the special \emph{rank-one} case where either $k = 1$ or $\rank(\mtx A_i) = 1$ almost surely, no such bound is known.
\subsection{Our contribution}
We give the first results for Oja's algorithm nearly matching~\eqref{eq:bernstein}, for any $k \geq 1$ and updates of any rank.
Our main result (Theorem~\ref{thm:main}) establishes that, after a burn-in period of $T_0 = \tilde O \left(\frac{kM^2}{\delta^2 \rho_k^2}\right)$ steps, the output of Oja's algorithm satisfies
\begin{equation*}
\subspacedist{\mtx Q_T}{\mtx V} \leq C' \frac{M}{\rho_k} \sqrt{\frac{\log(kM/\delta\rho_k)}{T - T_0}}
\end{equation*}
with probability at least $1-\delta$ for a universal positive constant $C'$.
Ours is the first work to show that Oja's algorithm can achieve a guarantee similar to~\eqref{eq:bernstein} beyond the rank-one case.

The assumption that $k = 1$ or $\rank(\mtx A_i) = 1$ is fundamental to the proof strategies used in prior works.
To show that the error decays sufficiently quickly, prior work focuses on the quantity $\|\mtx U^* \mtx Z_t (\mtx V^* \mtx Z_t)^{-1}\|_2$, where the columns of $\mtx U$ are the last $d-k$ eigenvectors of $\mtx M$, which is an upper bound on $\subspacedist{\mtx Q_t}{\mtx V}$.
(See Lemma~\ref{lem:dist_to_w}, below.)
The key challenge is to control the inverse $(\mtx V^* \mtx Z_t)^{-1}$.
When $k = 1$, as in \cite{JaiJinKak16}, this quantity is a scalar, so it can be pulled out of the norm and bounded separately.
This is no longer possible when $k > 1$,  but if $\rank(\mtx A_i) = 1$, as in \cite{AllLi17}, then $\mtx V^* \mtx Z_t$ can be written as a rank-one perturbation of $\mtx V^* \mtx Z_{t-1}$.
The Sherman--Morrison formula then implies that $\mtx U^* \mtx Z_t (\mtx V^* \mtx Z_t)^{-1}$ can be written as $\mtx U^* \mtx Z_{t-1} (\mtx V^* \mtx Z_{t-1})^{-1}$ plus the sum of explicit, rank-one correction terms.
However, if neither $k = 1$ nor $\rank(\mtx A_i) = 1$, this approach quickly becomes infeasible, since the correction terms now involve a product of rank-$k$ matrices whose norm is difficult to bound.

A more subtle difficulty implicit in prior work is that proofs must be carried out entirely in expected (squared) Frobenius norm.
This requirement is necessitated by the fact that the Frobenius norm is Hilbertian, so it is possible to employ the crucial Pythagorean identity
\begin{equation}\label{eq:pythag}
\E\|\mtx Y\|_2^2 = \|\E \mtx Y\|_2^2 + \|\mtx Y - \E \mtx Y\|_2^2
\end{equation}
for any random matrix $\mtx Y$.
It is this identity that makes it possible to control the evolution of $\E \|\mtx U^* \mtx Z_t (\mtx V^* \mtx Z_t)^{-1}\|_2^2$.
However, as our proofs reveal, it is of significant utility to be able to recursively control the operator norm $\|\mtx U^* \mtx Z_t (\mtx V^* \mtx Z_t)^{-1}\|$ with high probability instead.
Unfortunately,~\eqref{eq:pythag} is of no help in proving statements of this kind.

Our argument handles both challenges and represents a significant conceptual simplification over earlier proofs.
Our crucial insight is that, rather than using the squared Frobenius norm, it is possible to prove a stronger recursion in a different norm, which implies high-probability bounds.
Using techniques recently developed by \cite{HuaNilTro20} to prove concentration inequalities for products of random matrices, we show that conditioned on $\|\mtx U^* \mtx Z_{t-1} (\mtx V^* \mtx Z_{t-1})^{-1}\|$ being 
well behaved, the probability that $\|\mtx U^* \mtx Z_{t} (\mtx V^* \mtx Z_{t})^{-1}\|$ deviates significantly from its expectation is exponentially small.

In other words, good concentration properties for $\|\mtx U^* \mtx Z_{t-1} (\mtx V^* \mtx Z_{t-1})^{-1}\|$ imply good concentration properties for the next iterate, $\|\mtx U^* \mtx Z_{t} (\mtx V^* \mtx Z_{t})^{-1}\|$.
These high-probability bounds significantly simplify the calculations, since they allow us to guarantee that the problematic error terms appearing in prior work are small.

If we knew that $\|\mtx U^* \mtx Z_0 (\mtx V^* \mtx Z_0)^{-1}\| = O(1)$ with high probability, then the above induction argument would allow us to conclude that $\|\mtx U^* \mtx Z_{t} (\mtx V^* \mtx Z_{t})^{-1}\| = O(1)$ for all $t$.
Unfortunately, this is not the case: if $\mtx Z_0$ is randomly initialized with i.i.d.~Gaussian entries, then typically
\begin{equation*}
\|\mtx U^* \mtx Z_0 (\mtx V^* \mtx Z_0)^{-1}\| \asymp \sqrt{d k}\,.
\end{equation*}
We therefore adopt a two-phase approach: in the first, short phase, of length approximately $\log d$, we show that the operator norm decays from $O(\sqrt{d k})$ to $O(1)$, and in the second phase we use the above recursive argument to establish that the operator norm decays to zero at a $O(1/\sqrt{T})$ rate.
To simplify the analysis of the first phase, we develop a coupling argument that allows us reduce without loss of generality to the case where the law $\lawa$ of the random matrices $\mtx A_1, \mtx A_2, \dots$ has finite support and obtain almost-sure guarantees by a simple union bound.
This weak control is enough to guarantee that $\|\mtx U^* \mtx Z_t (\mtx V^* \mtx Z_t)^{-1}\|$ decays exponentially fast, so that it is of constant order after approximately $\log d$ iterations.

\subsection{Prior work}
Obtaining non-asymptotic rates of convergence for Oja's algorithm and its variants has been an area of active recent interest~\cite{Sha15,Sha16,Sa2015,LiWanLiu18,LiLinLu16,BalDuWan16,BalDasFre13,HarPri14,JaiJinKak16,MitCarJai13}.
Apart from the results of~\cite{AllLi17} and~\cite{JaiJinKak16}, none of these works proves bounds matching~\eqref{eq:bernstein}.

A breakthrough in the project of obtaining optimal guarantees was due to~\cite{Sha15}, who gave an analysis of Oja's algorithm that works when provided with a warm start: he showed that, when $k = 1$ and $\rank(\mtx A_i) = 1$ almost surely, Oja's algorithm converges in a number of steps logarithmic in $d$ if it is initialized in a neighborhood of the optimum, but his result does not extend to random initialization and it is unclear how to find a warm start in practice.
This restriction was lifted by~\cite{JaiJinKak16}, who were the first to show a global, efficient guarantee for Oja's algorithm when $k = 1$.
Subsequently, \cite{AllLi17} gave a global, efficient guarantee for Oja's algorithm in the $k > 1$ case, but under the restriction that $\rank(\mtx A_i) = 1$ almost surely.

The idea of analyzing Oja's algorithm by developing concentration bounds for products of random matrices was suggested by~\cite{HenWar20}, who also proved such non-asymptotic concentration bounds in a simplified setting.
Those bounds were later improved by~\cite{HuaNilTro20} who developed a different technique based on martingale inequalities for Schatten norms, following a strategy pursued by~\cite{JudNem08} and~\cite{naor2012banach} for other Banach space norms.
The concentration inequalities of~\cite{HuaNilTro20} are not sharp enough to recover optimal rates for Oja's algorithm on their own; in this work, we use a similar proof techniques to establish tailor-made concentration results for the Oja setting.

\subsection{Organization of the remainder of the paper}
In Section~\ref{sec:techniques}, we give our main results and an overview of our techniques.
Our main tool is a recursive inequality which proves a concentration result for the iterates of Oja's algorithm, which we state and prove in Section~\ref{sec:recursion}.

Our analysis of Oja's algorithm involves two distinct phases, which we analyze separately.
Since the argument for the second phase is simpler, we present it first in Section~\ref{sec:phase2}, and present the slightly more complicated argument for the first phase in Section~\ref{sec:phase1}.
We conclude in Section~\ref{sec:conclusion} with open questions and directions for future work.
The appendices contain omitted proofs and supplementary results for each section.

\subsection{Notation}
We write $\lambda_1 \geq \dots \geq \lambda_d$ for the eigenvalues of the symmetric matrix $\mtx M$, and we write $\gap := \lambda_k - \lambda_{k+1}$ for the gap between the $k$th and $(k+1)$th eigenvalue.
We write $\mtx V \in \RR^{d \times k}$ for the orthogonal matrix whose columns are the $k$ leading eigenvectors of $\mtx M$, and $\mtx U \in \RR^{d \times (d-k)}$ for the orthogonal matrix whose columns are the remaining eigenvectors.
Given an orthogonal matrix $\mtx W \in \RR^{d \times k}$, we write~\cite{DavKah70}
\begin{equation*}
\subspacedist{\mtx W}{\mtx V} = \|\mtx V \mtx V^* - \mtx W \mtx W^*\| = \|\mtx U^* \mtx W\|\,,
\end{equation*}

The symbol $\norm{\cdot}$ denotes the spectral norm (i.e., $\ell_2$ operator norm) of a matrix, which is equal to its maximum singular value.
For $p \geq 1$, the symbol $\norm{\cdot}_p$ denotes the Schatten $p$-norm,
which is the $\ell_p$ norm of the singular values of its argument.
We also define the $L_p$ norm of a random matrix $\mtx{X}$ 
as 
$$
\norm{\mtx X}_{p, p} := \big( \E \norm{\mtx{X}}_p^p \big)^{1/p}\,.
$$

We employ standard asymptotic notation $a = O(b)$ to indicate that $a \leq C b$ for a universal positive constant $C$, and write $a = \Theta(b)$ if $a = O(b)$ and $b = O(a)$.
The notations $\tilde O(\cdot)$ and $\tilde \Theta(\cdot)$ suppress polylogarithmic factors in the problem parameters.
When $t$ is a positive integer, we write $[t] := \{1, \dots, t\}$.
\section{Techniques and main results}\label{sec:techniques}

We focus throughout on the following setup:
\begin{assumption}\label{assume1}
The matrices $\mtx A_i$ are symmetric, independent, identically distributed samples from a distribution~$\lawa$, with expectation $\mtx M$.
\end{assumption}
Note that while we require that each $\mtx A_i$ is symmetric, we do not require that $\mtx A_i \succeq \zeromtx$.

The requirement that $\mtx A_i$ is symmetric is not as restrictive as it may seem, since we can replace $\mtx A_i$ by its \emph{Hermitian dilation}:
\begin{equation*}
\mathcal{D}(\mtx A_i) := \begin{pmatrix}
\zeromtx & \mtx A_i \\
\mtx A_i^* & \zeromtx
\end{pmatrix} \in \RR^{2d \times 2d}\,.
\end{equation*}
Estimating the leading eigenvectors of $\mathcal{D}(\mtx M)$ is equivalent to estimating the leading singular vectors of $\mtx M$.
Our results therefore extend to the non-symmetric streaming SVD problem as well.
We refer the reader to~\cite{Tro15:Introduction-Matrix} for more details about this standard reduction.

The second requirement establishes that the random errors are bounded in a suitable norm.
We write $\stie{d}{k}$ for the Stiefel manifold of $d \times k$ matrices with orthonormal columns.
\begin{assumption}\label{assume2}
If $\mtx A \sim \lawa$, then $\sup_{\mtx P \in \stie{d}{k}} \|\mtx P^* (\mtx A - \mtx M)\|_2 \leq M$ almost surely.
\end{assumption}

Note that for any matrix $\mtx{X}\in \mathbb{R}^{d\times d}$, 
\[\sup_{P \in \stie{d}{k}} \|\mtx P^*\mtx{X}\|_2 = \left(\sum_{i=1}^k\sigma_i(\mtx{X})^2\right)^{1/2},\quad 1\leq k\leq d,\]
where $\sigma_1(\mtx{X}) \geq \sigma_2(\mtx{X})\geq \cdots \geq \sigma_d(\mtx{X})$ are the singular values of $\mtx{X}$.
This norm, sometimes known as the $(2, k)$ norm~\cite{LiTsi88} or the Ky Fan $2$-$k$ norm~\cite{DoaVav16}, satisfies
\begin{equation*}
\|\mtx X\| \leq \sup_{P \in \stie{d}{k}} \|\mtx P^*\mtx{X}\|_2 \leq \sqrt k \|\mtx X\| \leq \|\mtx X\|_2\,.
\end{equation*}
This choice of norm generalizes the error assumptions in the literature.
In the $k = 1$ case, it agrees with the operator norm, which is the condition used by \cite{JaiJinKak16}; and it weakens the requirement of~\cite{AllLi17} that $\|\mtx A_i\|_2 \leq 1$ almost surely.

The following theorem summarizes our main results for Oja's algorithm.

\begin{theorem}[Main, informal]\label{thm:main}
Adopt Assumptions~\ref{assume1} and~\ref{assume2}.
Let $\lambda_1 \geq \dots \lambda_d$ be the eigenvalues of $\mtx M$, and let $\gap = \lambda_k - \lambda_{k+1}$.

For every $\delta\in (0,1)$, define learning rates
\begin{equation*}
T_0 = \tilde \Theta\left(\frac{k M^2}{\delta^2 \rho_k^2}\right)\,, \quad \beta = \tilde \Theta\left(\frac{M^2}{\rho_k^2}\right)\,, \quad \eta_t=\left\{\begin{array}{ll}
\tilde \Theta\left(\frac{1}{\rho_kT_0}\right),& t\leq T_0\\
\Theta \left(\frac{1}{\rho_k (\beta + t-T_0)}\right),& t>T_0.
\end{array} \right.
\end{equation*}

Let $\mtx{V} \in \RR^{d \times k}$ be the orthogonal matrix whose columns are the $k$ leading eigenvectors of $\mtx M$. Then for any $T > T_0$, the output $\mtx{Q}_T$ of Oja's algorithm satisfies
\[\subspacedist{\mtx Q_T}{\mtx V}\leq C' \frac{M}{\rho_k} \sqrt{\frac{\log(M k/\rho_k \delta)}{T - T_0}}\]
with probability at least $1-\delta$, where $C'$ is a universal positive constant.
\end{theorem}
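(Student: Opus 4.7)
The plan is to track the auxiliary quantity $W_t := \norm{\mtx U^* \mtx Z_t (\mtx V^* \mtx Z_t)^{-1}}$ throughout the execution of the algorithm, since $W_t$ dominates the subspace distance $\subspacedist{\mtx Q_t}{\mtx V}$ (Lemma~\ref{lem:dist_to_w}). Following the two learning-rate regimes, I would split the argument into Phase 1 (constant rate $\eta_t \asymp 1/(\rho_k T_0)$ for $t \leq T_0$) and Phase 2 (decreasing rate $\eta_t \asymp 1/(\rho_k(\beta + t - T_0))$ for $t > T_0$). The design goal is that Phase 1 drives $W_t$ from its random initial value $W_0 = \tilde\Theta(\sqrt{dk})$ down to $O(1)$ with probability at least $1 - \delta/2$, after which Phase 2 can exploit a clean one-step concentration recursion to produce the advertised $O(1/\sqrt{T - T_0})$ rate.

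The engine in both phases is a sharp one-step concentration inequality: conditioned on the filtration up to time $t-1$, if $W_{t-1}$ is sufficiently small and $\eta_t$ is sufficiently small, then $W_t$ concentrates tightly around its conditional expectation with subgaussian tails. This is the main technical content of Section~\ref{sec:recursion}. I expect the core difficulty of the paper to sit here: unlike the prior Frobenius-norm approaches that rely on the Pythagorean identity~\eqref{eq:pythag}, controlling the operator norm requires martingale inequalities in Schatten $p$-norms in the spirit of~\cite{HuaNilTro20}, with $p \asymp \log(dk/\delta)$ to convert a Schatten bound into an operator-norm tail bound. The delicate step is to expand $\mtx U^* \mtx Y_t \mtx Z_{t-1} (\mtx V^* \mtx Y_t \mtx Z_{t-1})^{-1}$ around $\mtx U^* \mtx Z_{t-1} (\mtx V^* \mtx Z_{t-1})^{-1}$, isolate a contraction factor $1 - c \eta_t \rho_k$ from the deterministic drift, and bound the conditional Schatten-norm variance of the remainder by $O(\eta_t^2 M^2)$ uniformly in the realization of $\mtx A_t$, using Assumption~\ref{assume2}.

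For Phase 2 I would iterate the one-step recursion starting from $W_{T_0} = O(1)$. The contraction factor per step is $1 - \Theta(\eta_t \rho_k)$ while the variance increments accumulate as $\sum_{s=T_0+1}^t \eta_s^2 M^2 = O(M^2 / (\rho_k^2 (t - T_0)))$; a standard supermartingale-plus-concentration argument balances these two contributions and yields, with probability $1 - \delta/4$,
\begin{equation*}
W_T^2 \leq C'' \frac{M^2 \log(M k / \rho_k \delta)}{\rho_k^2 (T - T_0)},
\end{equation*}
which is the desired bound. The logarithmic factor enters through the exponent $p \asymp \log(Mk/\rho_k\delta)$ chosen for the Schatten norms combined with a union bound over the $T - T_0$ steps.

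Phase 1 is subtler because $W_0$ is large, so the variance proxy in the one-step inequality could swamp the conditional mean if applied naively. The plan suggested by the introduction is to first reduce, by a coupling argument, to the case where $\lawa$ has finite support, so that a single union bound over the support and over the $T_0$ burn-in steps produces almost-sure control; the constant learning rate is then small enough that each iterate satisfies $W_t \leq (1 - c\eta_t \rho_k) W_{t-1} + O(\eta_t M (1 + W_{t-1}))$ deterministically, and geometric decay over $T_0$ steps drives $W_{T_0}$ to $O(1)$. I expect the hardest bookkeeping here to be verifying that the finite-support discretization preserves the norms and moments relied on by the Schatten-$p$ machinery and that the geometric decay kicks in before the additive noise term begins to dominate; once Phase 1 delivers $W_{T_0} = O(1)$ with probability $1 - \delta/2$, a final union bound with Phase 2 completes the proof.
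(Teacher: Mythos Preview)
Your overall architecture—two phases, Schatten-$p$ martingale recursion for the operator norm, union bound to combine—matches the paper's. The Phase~2 sketch is accurate: the paper's Theorem~\ref{thm:phase2} iterates the one-step bound of Proposition~\ref{prop:one_step_recurrence} conditioned on good events $\good_i = \{\|\mtx W_i\| \leq \gamma\}$ with $\gamma = \sqrt 2 \econst$, then uses Markov at the Schatten-$p$ level to bound each $\Prob{\good_j^C \cap \good_{j-1}}$ and sum.

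Your Phase~1 picture has a genuine gap, though. You write that after the finite-support reduction one can run the deterministic inequality $W_t \leq (1 - c\eta_t\rho_k)W_{t-1} + O(\eta_t M(1+W_{t-1}))$ and let geometric decay do the work. This fails: the fluctuation term $O(\eta_t M W_{t-1})$ dominates the contraction $c\eta_t\rho_k W_{t-1}$ whenever $M \gtrsim \rho_k$, which is the regime of interest. The contraction only emerges \emph{in expectation} (equivalently, in $\|\cdot\|_{p,p}$), never pathwise, so no deterministic argument can close the loop here.

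The paper uses the finite-support reduction for a different purpose. The obstacle in Phase~1 is that the Phase~2 good events $\{\|\mtx W_i\| \leq \gamma\}$ would require $\gamma \asymp \sqrt{dk}$ at the start, which blows up the variance term in the recursion. Instead, the paper conditions on events of the form $\{\max_{\mtx E \in \mathcal E}\|\mtx V^* \mtx E \mtx U \mtx W_i\| \leq \gamma\}$ for a \emph{finite} family $\mathcal E$ built from the (now-finite) support of $\lawa$; this is precisely the quantity entering the conditional-variance bound~\eqref{eq:conditional_requirement}. Because each $\mtx E$ has $\|\mtx E\|_2 \leq 1$, Gaussian concentration gives $\|\mtx E \mtx W_0\|_2 = O(\sqrt{k}/\delta)$ at initialization rather than $O(\sqrt{dk}/\delta)$, and a union bound over the finite family handles the maximum. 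The paper then runs the \emph{same} Schatten-$p$ recursion as in Phase~2, together with an auxiliary recursion (Theorem~\ref{thm:prefix_recurrence}) that propagates $\max_{\mtx E \in \mathcal E_{r,\ell}}\|\mtx E \mtx W_t\|_{p,p}$ through time to certify the good events. So the finite-support trick is about making the conditioning events small enough to be useful while still union-boundable, not about going deterministic.
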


To prove Theorem~\ref{thm:main}, we adopt a two-phase analysis.
Our first result shows that after $T_0$ iterations, the output of Oja's algorithm satisfies $\|\mtx U^* \mtx Q_{T_0}(\mtx V^* \mtx Q_{T_0})^{-1}\| \leq 1$ with high probability.

\begin{theorem}[Phase I, informal]\label{thm:phase1_informal}
Adopt the same setting as Theorem~\ref{thm:main}, and let $\mtx Z_0 \in \RR^{d \times k}$ have i.i.d.~Gaussian entries.
Let
\begin{equation*}
T_0 = \Theta\left(\frac{k M^2}{\delta^2 \rho_k^2}\big(\log(dM/\delta\rho_k)\big)^4\right)\,.
\end{equation*}
Then after $T_0$ iterations of Oja's algorithm with constant step size $\eta = \Theta\left(\frac{\log(d/\delta)}{\rho_kT_0}\right)$ and initialization $\mtx Z_0$, the output $\mtx Q_{T_0}$ satisfies
\begin{equation*}
\|\mtx U^* \mtx Q_{T_0} (\mtx V^* \mtx Q_{T_0})^{-1}\| \leq 1
\end{equation*}
with probability at least $1-\delta$.
\end{theorem}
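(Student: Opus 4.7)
My plan is to track the $(d-k) \times k$ matrix $\mtx W_t := \mtx U^* \mtx Z_t (\mtx V^* \mtx Z_t)^{-1}$. The $R$-factor of a QR decomposition cancels on the right, so $\|\mtx W_{T_0}\| = \|\mtx U^* \mtx Q_{T_0} (\mtx V^* \mtx Q_{T_0})^{-1}\|$, and it suffices to show $\|\mtx W_{T_0}\| \leq 1$ with probability $1-\delta$. Writing $\mtx A_t = \mtx M + \mtx E_t$ and using the identity $\mtx Z_{t-1} = (\mtx V + \mtx U \mtx W_{t-1})\mtx V^* \mtx Z_{t-1}$, the Oja update rewrites cleanly as
\begin{equation*}
\mtx W_t = \bigl[\mtx D_U \mtx W_{t-1} + \eta\, \mtx U^*\mtx E_t(\mtx V + \mtx U \mtx W_{t-1})\bigr]\bigl[\mtx D_V + \eta\, \mtx V^* \mtx E_t (\mtx V + \mtx U \mtx W_{t-1})\bigr]^{-1},
\end{equation*}
where $\mtx D_V = \Id + \eta\,\mathrm{diag}(\lambda_1,\dots,\lambda_k)$ and $\mtx D_U = \Id + \eta\,\mathrm{diag}(\lambda_{k+1},\dots,\lambda_d)$. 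The noiseless map $\mtx W \mapsto \mtx D_U \mtx W \mtx D_V^{-1}$ contracts by $\|\mtx D_U\|\|\mtx D_V^{-1}\| \leq 1 - c\eta\gap$, so iterated $T_0$ times with the prescribed schedule $\eta\gap T_0 = \Theta(\log(d/\delta))$ it shrinks the initial value by a polynomial factor in $d/\delta$.

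First I would bound $\|\mtx W_0\|$: since $\mtx V^* \mtx Z_0$ and $\mtx U^* \mtx Z_0$ are independent Gaussian matrices of sizes $k \times k$ and $(d-k)\times k$, standard extreme singular value estimates give $\|\mtx W_0\| \leq C\sqrt{dk}/\delta$ with probability at least $1-\delta/3$. Second, I would execute the coupling mentioned in the overview: approximate $\lawa$ in total variation on the horizon $[T_0]$ by a finitely supported distribution still satisfying Assumption~\ref{assume2}, at negligible cost in probability. On the finite support a union bound delivers uniform almost-sure control of $\|\mtx V^*\mtx E_t\|_2$ and $\|\mtx U^*\mtx E_t\|_2$ for every $t \leq T_0$.

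The heart of the argument is the inductive step, carried out using the recursive concentration inequality of Section~\ref{sec:recursion}. Conditioning on $\filtration_{t-1}$, that inequality bounds the Schatten moment $\|\mtx W_t\|_{p,p}$ by $\|\mtx D_U \mtx W_{t-1}\mtx D_V^{-1}\|$ plus a fluctuation of order $\eta M(1+\|\mtx W_{t-1}\|)$. Choosing the exponent $p = \Theta(\log(T_0/\delta))$ converts this moment estimate into a conditional tail bound; combined with a union bound over $t \in [T_0]$, this yields, with probability at least $1-\delta/3$, the deterministic-looking recursion
\begin{equation*}
\|\mtx W_t\| \leq (1 - c\eta\gap)\|\mtx W_{t-1}\| + C\eta M .
\end{equation*}
Iterating gives $\|\mtx W_{T_0}\| \leq (1-c\eta\gap)^{T_0}\|\mtx W_0\| + O(M/\gap)$. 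The first term is at most $\tfrac12$ by the prescribed values of $\eta$ and $T_0$, and the second term is at most $\tfrac12$ once the constant in $T_0 = \tilde\Theta(kM^2/(\delta^2\gap^2))$ is chosen large enough. Together with the initialization bound this proves the claim.

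I expect the main obstacle to be that the inverse on the right-hand side of the one-step recursion is well conditioned only when $\eta M(1+\|\mtx W_{t-1}\|)$ is small, while at initialization $\|\mtx W_0\|$ is as large as $\sqrt{dk}/\delta$. The naive fix of shrinking $\eta$ enough to tame the worst case would blow $T_0$ far past what the theorem asserts. Avoiding this is precisely the point of the Schatten-norm machinery of Section~\ref{sec:recursion}: because it keeps the noise fluctuations tied to $1+\|\mtx W_{t-1}\|$ rather than to $\|\mtx W_{t-1}\|$ alone, the $\mtx D_U \mtx D_V^{-1}$ contraction still dominates on every scale, and the high-probability control propagates intact through all $T_0$ iterations.
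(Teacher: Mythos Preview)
You correctly identify the obstacle but your resolution does not work. The Schatten recursion of Theorem~\ref{thm:main_recurrence} requires the conditioning event~\eqref{eq:conditional_requirement}, which enters through $\ep_t = 2\eta M(1+\gamma)$; both the standing assumption $\ep_t \leq 1/2$ and the additive error $C_2 p k^{2/p}\ep_t^2 t$ degrade with $\gamma$. If the good event is $\{\|\mtx W_{t-1}\| \leq \gamma\}$, then $\gamma$ must be at least $\|\mtx W_0\| \asymp \sqrt{dk}/\delta$, and plugging this $\gamma$ into the recursion forces $T_0$ to scale linearly in $d$---precisely the blowup you say you are avoiding. Your final paragraph asserts that the machinery ``keeps the noise fluctuations tied to $1+\|\mtx W_{t-1}\|$,'' but tying to $1+\|\mtx W_{t-1}\|$ is no better than tying to $\|\mtx W_{t-1}\|$ when the latter is large. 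Separately, the recursion $\|\mtx W_t\| \leq (1-c\eta\gap)\|\mtx W_{t-1}\| + C\eta M$ has fixed point $O(M/\gap)$, independent of $T_0$; enlarging $T_0$ cannot drive it below $1$ when $M \gg \gap$, so the sentence ``the second term is at most $\tfrac12$ once the constant in $T_0$ is chosen large enough'' is simply false.

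The paper's fix is a different conditioning event. It conditions not on $\|\mtx W_{t-1}\|$ but on $\max_{\mtx E}\|\mtx E\, \mtx W_{t-1}\| \leq \gamma$ over the \emph{finite} family of matrices $\mtx E = M^{-1}\mtx V^*(\mtx A - \mtx M)\mtx U$ with $\|\mtx E\|_2 \leq 1$, which is exactly what~\eqref{eq:conditional_requirement} needs. For each fixed such $\mtx E$, Gaussian concentration (Lemma~\ref{lem:Gaussian_1}) gives $\|\mtx E\, \mtx W_0\|_2 = O(\sqrt{k\log(\cdot)}/\delta)$: the factor $\sqrt d$ disappears because $\mtx E\, \mtx U^* \mtx Z_0$ has expected Frobenius norm $\|\mtx E\|_2\sqrt k$, not $\sqrt{dk}$. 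This is why the finite-support reduction matters---it makes the family of $\mtx E$'s finite so a union bound applies. Propagating this control in $t$ then requires a second, auxiliary recursion (Theorem~\ref{thm:prefix_recurrence}) for $\max_{\mtx E \in \mathcal E_{r,\ell}}\|\mtx E\, \mtx W_t\|_{p,p}$, in which the prefix family $\mathcal E_{r,\ell}$ grows by one factor at each step. This auxiliary recursion, and the realization that $\gamma$ can be taken of order $\sqrt{k}/\delta$ rather than $\sqrt{dk}/\delta$, are the missing ideas in your proposal.
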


Our analysis of the second phase shows that, if Oja's algorithm is initialized with \emph{any} matrix satisfying $\|\mtx U^* \mtx Q_{0}(\mtx V^* \mtx Q_{0})^{-1}\| \leq 1$, then the output of Oja's algorithm decays at the rate $O(1/\sqrt T)$.

\begin{theorem}[Phase II, informal]\label{thm:phase2_informal}
Adopt the same setting as Theorem~\ref{thm:main}, and suppose that $\mtx Z_0 \in \RR^{d \times k}$ satisfies $\|\mtx U^* \mtx Z_{0} (\mtx V^* \mtx Z_{0})^{-1}\| \leq 1$.
Then after $T$ iterations of Oja's algorithm with step size $\eta_i = \frac{8}{(\beta + i)\rho_k}$ with $\beta = \Theta\left(\frac{M^2}{\rho_k^2} \log \left(\frac{M k }{\rho_k \delta} \right)\right)$ and initialization $\mtx Q_0$, the output $\mtx Q_T$ satisfies
\begin{equation}\label{eq:phase2_error}
\subspacedist{\mtx Q_T}{\mtx V} \leq 2 \econst \sqrt{\frac{\beta + 1}{\beta + T}}
\end{equation}
with probability at least $1-\delta$.
\end{theorem}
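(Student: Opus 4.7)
\textbf{Proof proposal for Theorem~\ref{thm:phase2_informal}.} The plan is to track the matrix $\mtx W_t := \mtx U^* \mtx Z_t (\mtx V^* \mtx Z_t)^{-1}$ along the iteration. Since $\subspacedist{\mtx Q_t}{\mtx V} = \|\mtx U^* \mtx Q_t\|$, and the standard fact that $\|\mtx U^* \mtx Q_t\| \leq \|\mtx W_t\|$ (this is Lemma~\ref{lem:dist_to_w}) reduces the problem to an operator-norm bound on $\mtx W_t$, the whole theorem follows once we show
\[
\|\mtx W_t\| \leq r_t := 2\econst \sqrt{\frac{\beta + 1}{\beta + t}}
\qquad\text{for all } t = 0, 1, \dots, T,
\]
with probability at least $1 - \delta$. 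The hypothesis $\|\mtx W_0\| \leq 1 \leq r_0$ supplies the base case.

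First, I would derive the one-step recursion for $\mtx W_t$. Writing $\mtx A_t = \mtx M + \mtx E_t$ with $\E \mtx E_t = \zeromtx$, and using that $\mtx M$ acts diagonally in the $(\mtx V, \mtx U)$ basis with eigenvalues separated by $\rho_k$, the noiseless part of the update contracts $\mtx W_{t-1}$ by a factor $(\Id + \eta_t \mtx \Lambda_U)(\Id + \eta_t \mtx \Lambda_V)^{-1}$ of operator norm at most $1 - \eta_t \rho_k / 2$ (for $\eta_t$ small). The stochastic contribution decomposes into a linear-in-$\mtx E_t$ martingale piece and higher-order terms involving $(\mtx V^* \mtx Z_t)^{-1}$, each of which is controlled as long as $\|\mtx W_{t-1}\|$ remains bounded.

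Next, I would invoke the recursive concentration inequality of Section~\ref{sec:recursion}, applied conditionally on the filtration $\filtration_{t-1}$. Its key output is: on the event $\{\|\mtx W_{t-1}\| \leq r_{t-1}\}$, the next iterate obeys
\[
\|\mtx W_t\|^2 \leq (1 - c_1 \eta_t \rho_k)\|\mtx W_{t-1}\|^2 + c_2 \eta_t^2 M^2 (1 + \|\mtx W_{t-1}\|^2)
\]
up to a slack of order $\eta_t^2 M^2$ that fails with probability exponentially small in a parameter tunable via Schatten-norm martingale tools of~\cite{HuaNilTro20}. Because the bound is multiplicative, I can iterate it along a stopping time
\[
\tau := \min\{\, t \leq T \mid \|\mtx W_t\| > r_t\,\}.
\]
A direct verification shows that, for the step size $\eta_t = 8 / ((\beta + t)\rho_k)$, the scalar envelope $r_t^2$ satisfies the deterministic recurrence $r_t^2 \geq (1 - 8/(\beta + t)) r_{t-1}^2 + (\text{noise at step } t)$ provided $\beta \gtrsim M^2/\rho_k^2$, so that the contraction dominates both the signal term and the per-step concentration slack. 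A union bound over $t \in [T]$ then gives $\Prob{\tau \leq T} \leq \delta$, and on $\{\tau > T\}$ the desired bound~\eqref{eq:phase2_error} holds.

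The main obstacle I expect is calibrating $\beta$ so that the per-step concentration inequality yields a failure probability summable to $\delta$ across all $T$ iterations, \emph{uniformly} over $T$. The exponential tail in the Schatten-norm recursion only dominates the deterministic contraction once $\eta_t M \ll 1$ and once $\beta$ absorbs a logarithmic union-bound factor of the form $\log(Mk/\rho_k\delta)$, exactly the extra $\log$ appearing in the stated value of $\beta$. Matching these constants, together with handling the nonlinear dependence of the error terms on $\|\mtx W_{t-1}\|$ through the fact that the stopping-time truncation keeps this quantity bounded by $r_{t-1} \leq 2\econst$, is the most delicate aspect; the remainder reduces to a routine scalar recurrence argument.
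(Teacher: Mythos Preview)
Your outline shares the paper's overall architecture (condition on a good event, iterate a recursion, union-bound the failures), but the central step you rely on does not hold as stated and the approach, as written, cannot close.

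The gap is the pointwise one-step recursion
\[
\|\mtx W_t\|^2 \leq (1 - c_1 \eta_t \rho_k)\|\mtx W_{t-1}\|^2 + c_2 \eta_t^2 M^2 (1 + \|\mtx W_{t-1}\|^2) \quad \text{up to slack } O(\eta_t^2 M^2).
\]
Conditional on $\filtration_{t-1}$, the martingale increment $\mtx J_{t,1}$ is a \emph{single} bounded random matrix of almost-sure norm $\Theta(\eta_t M)$; there is no averaging at a fixed $t$, so no Schatten-norm martingale tool can force its contribution to $\|\mtx W_t\|$ below $O(\eta_t M)$ with high probability. In particular the pointwise cross term in $\|\mtx H_t + \mtx J_{t,1}\|^2$ is $\Theta(\eta_t M\,\|\mtx W_{t-1}\|)$, not $O(\eta_t^2 M^2)$. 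Since $\eta_t M \asymp 1/(\beta+t)$ while $r_{t-1}-r_t \asymp 1/(\beta+t)^{3/2}$, the per-step noise swamps the allowed decrease in your envelope $r_t$, and the stopping-time argument with the decaying threshold cannot be sustained: you cannot guarantee $\|\mtx W_t\|\le r_t$ given $\|\mtx W_{t-1}\|\le r_{t-1}$ except with constant probability.

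The paper avoids this in two coupled ways. First, the good events use a \emph{constant} threshold $\gamma=\sqrt{2}\,\econst$ (not the target $r_t$), so one only needs $\|\mtx W_t\|\le\gamma$ step by step, which the $O(\eta_t M)$ fluctuation does not threaten. Second, the recursion is carried out entirely in the $L_p$ Schatten norm $\|\cdot\|_{p,p}$: uniform smoothness (Proposition~\ref{prop:smooth}) is precisely what annihilates the cross term, but only in that norm, not pointwise. Theorem~\ref{thm:main_recurrence} then unrolls the $L_p$ recursion \emph{all the way back to step $0$}, yielding a bound on $\|\mtx W_t \bone_{t-1}\|_{p,p}^2$ in which the initial condition is damped by $\econst^{-s_t\rho_k}$ and the accumulated noise is $O(pk^{2/p}\ep_t^2 t)$. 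Only then is Markov applied, with $p\asymp(\beta+t)$ chosen separately at each $t$, producing per-step failure probabilities $\exp(-c(\beta+t))$ that sum uniformly in $T$. The decay of the final iterate thus comes from the \emph{accumulated} contraction over all previous steps in the $L_p$ norm, not from any one-step tail bound; your plan conflates these two mechanisms.
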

This error guarantee is completely dimension free, and depends only logarithmically on $k$ and the failure probability $\delta$.

Theorem~\ref{thm:main} follows directly from Theorems~\ref{thm:phase1_informal} and~\ref{thm:phase2_informal}.
Theorem~\ref{thm:phase1_informal} guarantees that with probability $1-\delta$, the output of Phase I is a suitable initialization for Phase II, and, conditioned on this good event, Theorem~\ref{thm:phase2_informal} guarantees that the output of the second phase has error $O(\sqrt{\beta/T})$ with probability $1-\delta$.
By concatenating the analysis of the two phases and using the union bound, we obtain that the resulting two-phase algorithm succeeds with probability at least $1 - 2 \delta$, yielding Theorem~\ref{thm:main}.

In the remainder of this section, we describe the main technical tools we employ in our argument.

\subsection{A recursive expression}
To simplify the argument, we recall the following result of \cite[Lemma 2.2]{AllLi17}:
\begin{lemma}\label{lem:dist_to_w}
For all $t \geq 0$,
\begin{equation*}
\subspacedist{\mtx Q_t}{\mtx V} = \|\mtx U^* \mtx Q_t\| \leq \|\mtx U^* \mtx Q_t (\mtx V^* \mtx Q_t)^{-1}\| = \|\mtx U^* \mtx Z_t (\mtx V^* \mtx Z_t)^{-1}\|\,.
\end{equation*}
\end{lemma}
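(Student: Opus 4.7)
The statement decomposes naturally into three pieces, each of which is essentially algebraic, so the plan is to dispatch them in turn without any probabilistic input.

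The first equality $\subspacedist{\mtx Q_t}{\mtx V} = \|\mtx U^* \mtx Q_t\|$ is already recorded in the notation section as a consequence of the Davis--Kahan identity, so I would simply invoke it.

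For the middle inequality, the plan is to factor
\begin{equation*}
\mtx U^* \mtx Q_t = \bigl(\mtx U^* \mtx Q_t (\mtx V^* \mtx Q_t)^{-1}\bigr)\bigl(\mtx V^* \mtx Q_t\bigr),
\end{equation*}
assuming (generically, almost surely under the random initialization) that $\mtx V^* \mtx Q_t$ is invertible. Submultiplicativity of the spectral norm then gives
\begin{equation*}
\|\mtx U^* \mtx Q_t\| \le \|\mtx U^* \mtx Q_t (\mtx V^* \mtx Q_t)^{-1}\| \cdot \|\mtx V^* \mtx Q_t\|,
\end{equation*}
and since both $\mtx V$ and $\mtx Q_t$ have orthonormal columns one has $\|\mtx V^* \mtx Q_t\| \le \|\mtx V\|\|\mtx Q_t\| = 1$, which absorbs the second factor.

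The third equality is the cleanest: since $\mtx Q_t = \qr{\mtx Z_t}$, write $\mtx Z_t = \mtx Q_t \mtx R_t$ with $\mtx R_t$ upper triangular and invertible. Then $\mtx U^* \mtx Z_t = (\mtx U^* \mtx Q_t)\mtx R_t$ and $\mtx V^* \mtx Z_t = (\mtx V^* \mtx Q_t)\mtx R_t$, so
\begin{equation*}
\mtx U^* \mtx Z_t (\mtx V^* \mtx Z_t)^{-1} = (\mtx U^* \mtx Q_t) \mtx R_t \mtx R_t^{-1} (\mtx V^* \mtx Q_t)^{-1} = \mtx U^* \mtx Q_t (\mtx V^* \mtx Q_t)^{-1},
\end{equation*}
and taking norms yields the stated equality.

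The only thing that requires thought is the invertibility of $\mtx V^* \mtx Q_t$ and of $\mtx R_t$. For Gaussian initialization $\mtx Z_0$, $\mtx V^* \mtx Z_0$ is an almost-surely invertible $k\times k$ Gaussian block, and the multiplicative updates $(\Id + \eta_t \mtx A_t)$ can be taken invertible for small enough $\eta_t$; on this full-measure event both $\mtx R_t$ and $\mtx V^* \mtx Q_t$ are invertible, so the three steps go through. I expect no genuine obstacles here --- the whole content is bookkeeping with the QR factorization.
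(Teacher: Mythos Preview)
Your proof is correct. The paper itself does not prove this lemma at all --- it simply cites it as \cite[Lemma 2.2]{AllLi17} --- so there is no ``paper's own proof'' to compare against. The argument you sketch (Davis--Kahan for the first equality, submultiplicativity plus $\|\mtx V^*\mtx Q_t\|\le 1$ for the inequality, and cancellation of the $\mtx R_t$ factor from the QR decomposition for the last equality) is exactly the standard one, and is what lies behind the cited lemma in \cite{AllLi17}.
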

We therefore focus on bounding the norm of the matrix
\begin{equation}\label{eq:w_def}
\mtx W_t := \mtx U^* \mtx Z_t (\mtx V^* \mtx Z_t)^{-1}\,.
\end{equation}

Under the assumption that $\eta_t$ is small, we might expect that we can write $\mtx W_t$ as a sum of the dominant term
\begin{align}
\mtx H_t & := \mtx U^* (\Id + \eta_t \mtx M) \mtx Z_{t-1} (\mtx V^* (\Id + \eta_t \mtx M) \mtx Z_{t-1})^{-1} \label{eq:h_def}
\end{align}
plus lower order terms.

To argue that $\mtx W_t$ is close to $\mtx H_t$, we need to argue that the inverse $(\mtx V^* \mtx Z_t)^{-1}$ does not blow up, which will be the case so long as the fluctuation term $\eta_t \mtx V^* (\mtx A_t - \mtx M)\mtx Z_{t-1}$ is smaller than the main term $\mtx V^* (\Id + \eta_t \mtx M) \mtx Z_{t-1}$.
In order to make this requirement precise, we write
\begin{equation}\label{eq:delta_def}
\mtx \Delta_t := \eta \mtx V^* (\mtx A_t - \mtx M)\mtx Z_{t-1} (\mtx V^*(\Id + \eta_t \mtx M) \mtx Z_{t-1})^{-1}\,.
\end{equation}
So long as this matrix has small norm, the inverse term will be well behaved.
As we discuss in the following section, we will be able to guarantee that this is the case by conditioning on an appropriate good event.

The following lemma shows that, modulo a term involving $\mtx \Delta_t$, we can indeed express $\mtx W_t$ as $\mtx H_t$ plus a small correction.
\begin{lemma}\label{lem:decomposition}
Let $\mtx W_t$, $\mtx H_t$, and $\mtx \Delta_t$ be defined as in~\eqref{eq:w_def}--\eqref{eq:delta_def}.
Then we can write
\begin{equation}\label{eq:decomposition}
\mtx{W}_t(\Id - {\mtx{\Delta}}_t^2) = \mtx{H}_t + \mtx{J}_{t,1} + \mtx{J}_{t,2}\,,
\end{equation}
for matrices $\mtx{J}_{t,1}$ and $\mtx{J}_{t,2}$ of norm $O(\eta_t)$ and $O(\eta_t^2)$, respectively.
\end{lemma}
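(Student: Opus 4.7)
The statement is essentially an algebraic identity, so the plan is to carry out the matrix arithmetic carefully, using a clever rearrangement to avoid an explicit Neumann expansion of $(\Id + \mtx{\Delta}_t)^{-1}$.

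The starting point is the defining recursion $\mtx{Z}_t = (\Id + \eta_t \mtx{A}_t) \mtx{Z}_{t-1}$ together with the split $\mtx{A}_t = \mtx{M} + (\mtx{A}_t - \mtx{M})$. Applied on the left by $\mtx{V}^*$ and $\mtx{U}^*$, this gives
\begin{equation*}
\mtx{V}^* \mtx{Z}_t = \mtx{B} + \mtx{E}, \qquad \mtx{U}^* \mtx{Z}_t = \mtx{C} + \mtx{F},
\end{equation*}
where $\mtx{B} = \mtx{V}^*(\Id + \eta_t \mtx{M}) \mtx{Z}_{t-1}$, $\mtx{C} = \mtx{U}^*(\Id + \eta_t \mtx{M}) \mtx{Z}_{t-1}$, $\mtx{E} = \eta_t \mtx{V}^*(\mtx{A}_t - \mtx{M}) \mtx{Z}_{t-1}$, and $\mtx{F} = \eta_t \mtx{U}^*(\mtx{A}_t - \mtx{M}) \mtx{Z}_{t-1}$. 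Observe that $\mtx{H}_t = \mtx{C}\mtx{B}^{-1}$ and $\mtx{\Delta}_t = \mtx{E}\mtx{B}^{-1}$ by definition.

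Next, I would factor the inverse from the right: writing $\mtx{B} + \mtx{E} = (\Id + \mtx{E}\mtx{B}^{-1})\mtx{B}$ yields
\begin{equation*}
(\mtx{V}^* \mtx{Z}_t)^{-1} = \mtx{B}^{-1} (\Id + \mtx{\Delta}_t)^{-1},
\end{equation*}
so that
\begin{equation*}
\mtx{W}_t = (\mtx{C} + \mtx{F}) \mtx{B}^{-1} (\Id + \mtx{\Delta}_t)^{-1} = (\mtx{H}_t + \mtx{G}_t)(\Id + \mtx{\Delta}_t)^{-1},
\end{equation*}
where $\mtx{G}_t := \mtx{F}\mtx{B}^{-1} = \eta_t \mtx{U}^*(\mtx{A}_t - \mtx{M})\mtx{Z}_{t-1}\mtx{B}^{-1}$ carries an explicit factor of $\eta_t$.

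The decisive trick is the factorization $\Id - \mtx{\Delta}_t^2 = (\Id + \mtx{\Delta}_t)(\Id - \mtx{\Delta}_t)$, which lets me right-multiply by $\Id - \mtx{\Delta}_t^2$ and cancel the awkward inverse without any infinite expansion, obtaining
\begin{equation*}
\mtx{W}_t (\Id - \mtx{\Delta}_t^2) = (\mtx{H}_t + \mtx{G}_t)(\Id - \mtx{\Delta}_t) = \mtx{H}_t + \bigl(\mtx{G}_t - \mtx{H}_t \mtx{\Delta}_t\bigr) + \bigl(-\mtx{G}_t \mtx{\Delta}_t\bigr).
\end{equation*}
This matches the claimed form with $\mtx{J}_{t,1} := \mtx{G}_t - \mtx{H}_t \mtx{\Delta}_t$ and $\mtx{J}_{t,2} := -\mtx{G}_t \mtx{\Delta}_t$. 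Since $\mtx{G}_t$ and $\mtx{\Delta}_t$ each carry a factor of $\eta_t$ (with the remaining factors controlled under the good event that makes $\mtx{B}^{-1}$ well behaved and $\mtx{H}_t$ bounded), the first bracket is of order $\eta_t$ and the second of order $\eta_t^2$, as required. There is no real obstacle here beyond the bookkeeping of left versus right multiplication; the only nontrivial idea is the identity $\Id - \mtx{\Delta}_t^2 = (\Id + \mtx{\Delta}_t)(\Id - \mtx{\Delta}_t)$, which sidesteps the Neumann series that would otherwise produce an infinite tail of higher-order terms.
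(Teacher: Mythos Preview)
Your proof is correct and follows essentially the same route as the paper: you factor $\mtx{V}^*\mtx{Z}_t = (\Id + \mtx{\Delta}_t)\mtx{B}$, use the identity $\Id - \mtx{\Delta}_t^2 = (\Id + \mtx{\Delta}_t)(\Id - \mtx{\Delta}_t)$ to cancel the inverse, and expand. Your $\mtx{G}_t$ is exactly the paper's $\widehat{\mtx{\Delta}}_t$, and your $\mtx{J}_{t,1} = \mtx{G}_t - \mtx{H}_t\mtx{\Delta}_t$ and $\mtx{J}_{t,2} = -\mtx{G}_t\mtx{\Delta}_t$ coincide with the paper's definitions.
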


Below, in Propositions~\ref{j_bounds} and~\ref{prop:one_step_recurrence}, we use Lemma~\ref{lem:decomposition} to develop an explicit recursive bound on the norm of $\mtx W_t$.

\subsection{Matrix concentration via smoothness}\label{sec:smoothness}
In order to exploit the expression~\eqref{eq:decomposition}, we need concentration inequalities that allow us to conclude that $\mtx W_t$ is near $\mtx H_t$ with high probability.
\cite{HuaNilTro20} recently developed new tools to control the norms of products of independent random matrices, in an attempt to extend the mature toolset for bounding \emph{sums} of random matrices to the product setting.
Their techniques are based on a simple but deep property of the Schatten $p$-norms known as \emph{uniform smoothness}.
The most elementary expression of this fact is the following inequality, which is the analogue of~\eqref{eq:pythag} for the $L_p$ norm.
\begin{proposition}[{\cite[Proposition 4.3]{HuaNilTro20}}]\label{prop:smooth}
Let $\mtx X$ and $\mtx Y$ be random matrices of the same size, with $\E[\mtx Y \mid \mtx X] = \zeromtx$.
Then for any $p \geq 2$,
\begin{equation*}
\|\mtx X + \mtx Y + \mtx Z\|_{p, p}^2 \leq \|\mtx X\|_{p, p}^2 + (p-1) \|\mtx Y\|_{p, p}^2\,.
\end{equation*}
\end{proposition}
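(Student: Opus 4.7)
The plan is to recognize this inequality as a type-$2$ statement for random elements of $L^p(\Omega;\mathcal{S}^p)$, derived from the $2$-uniform smoothness of the Schatten $p$-norm for $p \geq 2$. (I read the stray $\mtx Z$ in the display as a typo and work with $\|\mtx X + \mtx Y\|_{p,p}^2$.) The deterministic starting point is the classical non-commutative Clarkson / Ball--Carlen--Lieb inequality: for $p \geq 2$ and any matrices $\mtx A, \mtx B$ of the same size,
\begin{equation*}
\tfrac{1}{2}\big(\|\mtx A + \mtx B\|_p^2 + \|\mtx A - \mtx B\|_p^2\big) \leq \|\mtx A\|_p^2 + (p-1)\|\mtx B\|_p^2,
\end{equation*}
which expresses that $(\mathcal{S}^p,\|\cdot\|_p)$ has $2$-smoothness constant $\sqrt{p-1}$.

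The random-matrix step would transfer this smoothness from $\mathcal{S}^p$ to the Bochner space $L^p(\Omega;\mathcal{S}^p)$ under the norm $\|\cdot\|_{p,p}$. Because $L^p(\Omega;\mathcal{S}^p)$ can be identified with a non-commutative $L^p$-space over the enlarged algebra $L^\infty(\Omega)\otimes\mathcal{B}(\mathbb{C}^d)$, the smoothness modulus is preserved (Pisier--Xu), yielding the analogous inequality with $\|\cdot\|_{p,p}$ in place of $\|\cdot\|_p$. One then exploits the centering hypothesis $\E[\mtx Y\mid\mtx X] = \zeromtx$ exactly as in the classical type-$2$/martingale argument: expand $\|\mtx X + t\mtx Y\|_{p,p}^2$ in $t$ at the origin using the Fr\'echet derivative of $\|\cdot\|_p^p$ (which, up to trace pairing, is $\operatorname{sgn}(\cdot)|\cdot|^{p-1}$); the conditional centering annihilates the first-order term, and the smoothness modulus bounds the quadratic remainder by $(p-1)\|\mtx Y\|_{p,p}^2$.

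The main obstacle I foresee is attaining the sharp constant $(p-1)$ rather than a weaker $2(p-1)$ or $4(p-1)$. A pointwise symmetrization route---introducing an independent copy $\mtx Y'$ of $\mtx Y$ conditional on $\mtx X$ and a Rademacher sign $\varepsilon$, applying Jensen to pass from $\mtx Y$ to $\mtx Y - \mtx Y'$, and applying BCL to $\mtx X$ and $\varepsilon(\mtx Y - \mtx Y')$---does eliminate the $\|\mtx X - \mtx Y\|_{p,p}^2$ term by $\varepsilon$-symmetry, but loses a multiplicative factor coming from $\|\mtx Y - \mtx Y'\|_{p,p}^2$ versus $\|\mtx Y\|_{p,p}^2$. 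Recovering the sharp constant therefore seems to require either invoking Pisier--Xu's non-commutative smoothness transfer as a black box, or executing the direct perturbative expansion of $\|\mtx X + t\mtx Y\|_{p,p}^2$ at $t=0$ with careful bookkeeping of the second-order remainder---plausibly via the one-step case of the non-commutative Burkholder--Gundy inequality, which is itself a consequence of the same smoothness modulus.
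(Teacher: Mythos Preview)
The paper does not supply a proof of this proposition: it is quoted verbatim as \cite[Proposition~4.3]{HuaNilTro20} and used as a black box, so there is no in-paper argument to compare against. Your reading of the stray $\mtx Z$ as a typo is correct; the intended inequality is $\|\mtx X+\mtx Y\|_{p,p}^2\le\|\mtx X\|_{p,p}^2+(p-1)\|\mtx Y\|_{p,p}^2$.

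Your outline is the standard one and is sound. The Ball--Carlen--Lieb two-point inequality for Schatten~$p$ transfers unchanged to $L^p(\Omega;\mathcal S^p)$ (either by the noncommutative $L^p$ identification you mention, or simply by taking $L^p$-expectations of the pointwise inequality and using $\E\|\cdot\|_p^p$), giving
\[
\|\mtx A+\mtx B\|_{p,p}^2+\|\mtx A-\mtx B\|_{p,p}^2\le 2\|\mtx A\|_{p,p}^2+2(p-1)\|\mtx B\|_{p,p}^2.
\]
Regarding your worry about the sharp constant: the symmetrization route does indeed lose a factor, but the perturbative route you sketch recovers $(p-1)$ exactly. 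Setting $f(t)=\|\mtx X+t\mtx Y\|_{p,p}^2$, the two-point inequality applied with $\mtx A=\mtx X+t\mtx Y$, $\mtx B=s\mtx Y$ yields $f(t+s)+f(t-s)\le 2f(t)+2(p-1)s^2\|\mtx Y\|_{p,p}^2$, hence $f''\le 2(p-1)\|\mtx Y\|_{p,p}^2$ wherever $f$ is twice differentiable. The centering $\E[\mtx Y\mid\mtx X]=\mtx 0$ kills $f'(0)$ via the trace-pairing form of the Fr\'echet derivative of $\|\cdot\|_p^p$, and Taylor's theorem gives $f(1)\le f(0)+\int_0^1(1-t)f''(t)\,dt\le f(0)+(p-1)\|\mtx Y\|_{p,p}^2$. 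This is precisely the argument used in the cited source, so no black-box appeal to Pisier--Xu or Burkholder--Gundy is needed.
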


We will employ the following corollary of Proposition~\ref{prop:smooth}, which extends the inequality to non-centered random matrices.
\begin{proposition}\label{non-centered-smoothness}
Let $\mtx X$, $\mtx Y$, and $\mtx Z$ be random matrices of the same size, with $\E[\mtx Y \mid \mtx X] = \zeromtx$.
Then for any $p \geq 2$ and $\lambda > 0$,
\begin{equation*}
\|\mtx X + \mtx Y + \mtx Z\|_{p, p}^2 \leq (1 + \lambda)(\|\mtx X\|_{p, p}^2 + (p-1) \|\mtx Y\|_{p, p}^2 + \lambda^{-1} \|\mtx Z\|_{p, q}^2)
\end{equation*}
\end{proposition}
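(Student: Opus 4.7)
The plan is to derive Proposition \ref{non-centered-smoothness} from Proposition \ref{prop:smooth} by peeling off the non-centered term $\mtx Z$ using the triangle inequality together with Young's inequality. Since $\mtx Z$ need not have any mean-zero structure, we cannot apply uniform smoothness to it directly, so the strategy is to absorb $\mtx Z$ into a crude triangle-inequality split with a tunable parameter $\lambda$ that rebalances the error.

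First, I would note that the Schatten $p$-norm is a norm, so $\|\mtx X + \mtx Y + \mtx Z\|_p \leq \|\mtx X + \mtx Y\|_p + \|\mtx Z\|_p$ pointwise, and hence by the triangle inequality for the $L_p$ norm of real-valued random variables,
\begin{equation*}
\|\mtx X + \mtx Y + \mtx Z\|_{p,p} \leq \|\mtx X + \mtx Y\|_{p,p} + \|\mtx Z\|_{p,p}.
\end{equation*}
Squaring both sides and applying the elementary Young-type inequality $(a+b)^2 \leq (1+\lambda) a^2 + (1+\lambda^{-1}) b^2$, valid for all $a,b \geq 0$ and $\lambda > 0$, I obtain
\begin{equation*}
\|\mtx X + \mtx Y + \mtx Z\|_{p,p}^2 \leq (1+\lambda) \|\mtx X + \mtx Y\|_{p,p}^2 + (1+\lambda^{-1}) \|\mtx Z\|_{p,p}^2.
\end{equation*}

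Next, I would apply Proposition \ref{prop:smooth} to the centered pair $(\mtx X, \mtx Y)$, which is permitted because $\E[\mtx Y \mid \mtx X] = \zeromtx$ and $p \geq 2$. This yields $\|\mtx X + \mtx Y\|_{p,p}^2 \leq \|\mtx X\|_{p,p}^2 + (p-1)\|\mtx Y\|_{p,p}^2$. Substituting this bound and using the identity $1 + \lambda^{-1} = (1+\lambda)\lambda^{-1}$ to pull a common factor of $(1+\lambda)$ in front of the $\mtx Z$ term gives
\begin{equation*}
\|\mtx X + \mtx Y + \mtx Z\|_{p,p}^2 \leq (1+\lambda)\bigl( \|\mtx X\|_{p,p}^2 + (p-1)\|\mtx Y\|_{p,p}^2 + \lambda^{-1} \|\mtx Z\|_{p,p}^2 \bigr),
\end{equation*}
which is exactly the claimed inequality.

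There is no real obstacle here: the whole proposition reduces to two lines once the right tool is identified. The only thing worth emphasizing is why one wants this form — the parameter $\lambda$ lets the user later tune the trade-off between the main quadratic term and the lower-order noncentered perturbation $\mtx Z$, which is essential for iterating the bound along the recursion for $\mtx W_t$ without accumulating multiplicative constants that blow up with $t$.
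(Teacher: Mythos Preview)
Your proof is correct and follows exactly the same approach as the paper: apply the triangle inequality to split off $\mtx Z$, use the elementary inequality $(a+b)^2 \leq (1+\lambda)(a^2 + \lambda^{-1}b^2)$, and then invoke Proposition~\ref{prop:smooth} on the centered pair $(\mtx X,\mtx Y)$. Your extra step of writing $1+\lambda^{-1}=(1+\lambda)\lambda^{-1}$ to pull out the common factor is just a rephrasing of the same inequality the paper uses.
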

The benefit of working in the $L_p$ norm is that bounding this norm for $p$ large yields good tail bounds on the operator norm, which are not available if the argument is carried out solely in expected Frobenius norm.
We will rely heavily on this fact heavily in our argument.

\subsection{Conditioning on good events}
Obtaining control on $\mtx W_t$ via~\eqref{eq:decomposition} requires ensuring that the matrix $\Id - \mtx \Delta_t^2$ is invertible, with inverse of bounded norm.
To accomplish this, we define a sequence of good events $\good_0 \supset \good_1 \supset \dots$, where each $\good_i$ is measurable with respect to the $\sigma$-algebra $\filtration_i := \sigma(\mtx Z_0, \mtx Y_1, \dots, \mtx Y_i)$.
We write $\bone_i$ for the indicator of the event $\good_i$, and we will define $\good_i$ in such a way that $(\Id - \mtx \Delta_t^2\bone_{t-1})$ is invertible almost surely.

During Phase II, the good events are defined by
\begin{align*}
\good_0 & := \{\|\mtx W_0\| \leq 1\} \\
\good_i & := \{\|\mtx W_i\| \leq \gamma\} \cap \good_{i-1}\,,  \quad \forall i \geq 1
\end{align*}
for some $\gamma \geq 1$ to be specified.
Since Assumption~\ref{assume2} implies that $\|\mtx A_i - \mtx M\| \leq M$ almost surely, this definition guarantees that for all $i \geq 1$,
\begin{equation}\label{eq:conditional_requirement}
\|\mtx V^* (\mtx A_i - \mtx M) \mtx U \mtx W_{i-1} \bone_{i-1}\| \leq M \gamma \quad \text{almost surely.}
\end{equation}

As we show in Proposition~\ref{j_bounds} below, if the step size is sufficiently small, then~\eqref{eq:conditional_requirement} implies that $\Id - \mtx \Delta_t^2$ is almost surely invertible on $\good_{t-1}$, which allows us to employ~\eqref{eq:decomposition} to bound the norm of $\mtx W_t \bone_{t-1}$.

During Phase I, we condition on a slightly more complicated set of events, which we describe explicitly in Section~\ref{sec:phase1}.
However, these events are constructed so that~\eqref{eq:conditional_requirement} still holds for all $i \geq 1$.

Our matrix concentration results described in Section~\ref{sec:smoothness} allow us to show that, during both Phase I and Phase II, $\|\mtx W_t \bone_{t-1}\|$ is small with high probability, for all $t \geq 1$.
Using this fact, we show that, conditioned on $\good_{t-1}$, the probability that $\good_t$ holds is also large.
Bounding the failure probability at each step, we are able to conclude that, conditioned on the initialization event $\good_0$, the good events $\good_t$ hold for all $t \geq 1$ with high probability.

\section{Main recursive bound}\label{sec:recursion}
In this section, we state our main recursive bound, which we use in both Phase I and Phase II.
A proof appears in Section~\ref{main-recurrence-proof}.
\begin{theorem}\label{thm:main_recurrence}
Let $t$ be a positive integer, and for all $i \in [t]$, let $\ep_i = 2 \eta_i M(1+\gamma)$.
Let $\bone_1, \dots, \bone_t$ be the indicator functions of a sequence of good events satisfying~\eqref{eq:conditional_requirement} for all $i \in [t]$.

Assume that for all $i \in [t]$,
\begin{equation}\label{main_assumptions}
\ep_i  \leq \frac 12 \,, \quad \quad 
\eta_i \|\mtx M\|  \leq \frac 12 \,,  \quad \quad
\econst^{- \eta_i \rho_k/4}  \leq \frac{\ep_i}{\ep_{i-1}}\,,
\end{equation}
with the convention that the last requirement is vacuous when $i = 1$.
Then for any $p \geq 2$,
\begin{equation}\label{eq:main_bound1}
\|\mtx W_t \bone_t \|_{p, p}^2 \leq \|\mtx W_t \bone_{t-1}\|_{p, p}^2 \leq \econst^{- s_t \rho_k} \|\mtx W_0 \bone_0\|_{p, p}^2 + C_1 p  \ep_t^2 \sum_{i=0}^{t-1} \|\mtx W_i \bone_i\|_{p, p}^2 + C_2 p k^{2/p} \ep_t^2 t\,,
\end{equation}
where $s_t = \sum_{i=1}^t \eta_i$, $C_1 = 21$, and $C_2 = 5$.
Moreover, if in addition for all $i \in [t]$,
\begin{align}
p \ep_i^2 & \leq \frac{\eta_i \rho_k}{50}\,, \label{assume:p_small}
\end{align}
then
\begin{equation}\label{eq:main_bound2}
\|\mtx W_t \bone_t \|_{p, p}^2 \leq \|\mtx W_t \bone_{t-1}\|_{p, p}^2 \leq \econst^{- s_t \rho_k/2} \|\mtx W_0 \bone_0\|_{p, p}^2 + C_2 p k^{2/p} \ep_t^2 t\,.
\end{equation}

\end{theorem}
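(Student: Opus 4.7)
\textbf{Proof plan for Theorem~\ref{thm:main_recurrence}.} The plan is to develop a one-step bound on $\|\mtx W_t \bone_{t-1}\|_{p,p}^2$ in terms of $\|\mtx W_{t-1}\bone_{t-1}\|_{p,p}^2$ using the decomposition from Lemma~\ref{lem:decomposition}, then unroll the recursion. I will work throughout with the eigen-block decomposition $\mtx V^* \mtx M = \mtx \Lambda_1 \mtx V^*$ and $\mtx U^* \mtx M = \mtx \Lambda_2 \mtx U^*$, so that $\mtx H_t = (\Id + \eta_t \mtx \Lambda_2)\mtx W_{t-1}(\Id + \eta_t \mtx \Lambda_1)^{-1}$.

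First I would verify that, on $\good_{t-1}$, the matrix $\mtx \Delta_t$ has small norm. Expanding $\mtx Z_{t-1} = \mtx V \mtx V^* \mtx Z_{t-1} + \mtx U \mtx U^* \mtx Z_{t-1}$ and using $(\mtx V^*(\Id + \eta_t \mtx M)\mtx Z_{t-1})^{-1} = (\mtx V^* \mtx Z_{t-1})^{-1}(\Id + \eta_t \mtx \Lambda_1)^{-1}$ together with the step-size condition $\eta_t\|\mtx M\|\le 1/2$, the hypothesis~\eqref{eq:conditional_requirement} on $\good_{t-1}$ gives $\|\mtx \Delta_t \bone_{t-1}\| \leq 2\eta_t M(1+\gamma) = \ep_t \leq \tfrac12$. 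The same expansion yields $\|\mtx \Delta_t' \bone_{t-1}\|_p \leq 2\eta_t M(k^{1/p} + \|\mtx W_{t-1}\bone_{t-1}\|_p)$, where $\mtx \Delta_t' := \eta_t \mtx U^*(\mtx A_t-\mtx M)\mtx Z_{t-1}(\mtx V^*(\Id+\eta_t \mtx M)\mtx Z_{t-1})^{-1}$, since $\mtx U^*(\mtx A_t-\mtx M)\mtx V$ is $(d-k) \times k$ and thus has Schatten $p$-norm at most $k^{1/p}M$. Because $\|(\Id - \mtx \Delta_t^2\bone_{t-1})^{-1}\|\le (1-\ep_t^2)^{-1}$ almost surely, I can push the inverse outside the Schatten $p$-norm:
\begin{equation*}
\|\mtx W_t\bone_{t-1}\|_p \leq (1-\ep_t^2)^{-1}\|(\mtx H_t + \mtx J_{t,1} + \mtx J_{t,2})\bone_{t-1}\|_p.
\end{equation*}

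Next I apply the non-centered smoothness inequality (Proposition~\ref{non-centered-smoothness}) with $\mtx X = \mtx H_t\bone_{t-1}$, $\mtx Y = \mtx J_{t,1}\bone_{t-1}$ (which has conditional mean zero given $\filtration_{t-1}$, since $\mtx J_{t,1} = \mtx \Delta_t' - \mtx H_t\mtx \Delta_t$ is linear in $\mtx A_t-\mtx M$), and $\mtx Z = \mtx J_{t,2}\bone_{t-1} = -\mtx \Delta_t'\mtx \Delta_t\bone_{t-1}$, together with the parameter $\lambda = \ep_t^2$. The deterministic contraction $\|\mtx H_t\|_p^2 \leq \frac{(1+\eta_t\lambda_{k+1})^2}{(1+\eta_t\lambda_k)^2}\|\mtx W_{t-1}\|_p^2 \leq \econst^{-\eta_t\rho_k}\|\mtx W_{t-1}\|_p^2$ (which uses $(1-x)^2 \leq \econst^{-2x}$ together with $\eta_t\|\mtx M\|\leq 1/2$) controls the $\mtx X$ term; Minkowski's inequality converts the almost-sure bound on $\|\mtx J_{t,1}\bone_{t-1}\|_p$ into a Schatten $L_p$ bound of order $\ep_t(k^{1/p}+\|\mtx W_{t-1}\bone_{t-1}\|_{p,p})$; and $\|\mtx J_{t,2}\bone_{t-1}\|_p \leq \ep_t\|\mtx \Delta_t'\bone_{t-1}\|_p$ is of higher order. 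Carefully tracking constants (absorbing the $(1-\ep_t^2)^{-2}(1+\ep_t^2)$ prefactor using $\ep_t\le 1/2$) yields the one-step recursion
\begin{equation*}
\|\mtx W_t\bone_{t-1}\|_{p,p}^2 \leq \econst^{-\eta_t\rho_k}\|\mtx W_{t-1}\bone_{t-1}\|_{p,p}^2 + C_1 p\ep_t^2 \|\mtx W_{t-1}\bone_{t-1}\|_{p,p}^2 + C_2 p k^{2/p}\ep_t^2.
\end{equation*}

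To finish, I iterate this inequality using $\bone_i \leq \bone_{i-1}$ (which both gives the first inequality in~\eqref{eq:main_bound1} and lets the recursion close on $\|\mtx W_{i-1}\bone_{i-1}\|_{p,p}^2$). Telescoping produces $\econst^{-s_t\rho_k}\|\mtx W_0\bone_0\|_{p,p}^2$ together with two sums of the form $\sum_{i\leq t}\econst^{-\rho_k(s_t - s_i)}\ep_i^2(\cdot)$. The third hypothesis in~\eqref{main_assumptions}, squared, gives $\ep_{i-1}^2\econst^{-\eta_i\rho_k/2}\leq \ep_i^2$; inductively this yields $\ep_i^2\econst^{-\rho_k(s_t-s_i)/2}\leq \ep_t^2$, and combined with $\sum_{i\leq t}\econst^{-\rho_k(s_t-s_i)/2}\leq t$ this upgrades the sums to $\ep_t^2\sum\|\mtx W_{i-1}\bone_{i-1}\|_{p,p}^2$ and $\ep_t^2 t$, delivering~\eqref{eq:main_bound1}. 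For~\eqref{eq:main_bound2}, the additional hypothesis~\eqref{assume:p_small} ensures $\econst^{-\eta_i\rho_k} + C_1 p \ep_i^2 \leq \econst^{-\eta_i\rho_k/2}$, letting me absorb the middle term into a weaker exponential decay; iterating the modified recursion gives the stated bound directly. The main technical obstacle is the constant-tracking in the smoothness application, since the prefactors $(1-\ep_t^2)^{-2}(1+\lambda)$, the $p-1$ weight on $\mtx Y$, and the $\lambda^{-1}$ weight on $\mtx Z$ must all be controlled simultaneously while preserving the clean exponent $\eta_t\rho_k$ on the dominant contraction.
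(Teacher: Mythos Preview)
Your proposal is correct and follows essentially the same route as the paper: you use the decomposition of Lemma~\ref{lem:decomposition}, apply Proposition~\ref{non-centered-smoothness} with $\lambda=\ep_t^2$ to get a one-step recursion of the form $\|\mtx W_t\bone_{t-1}\|_{p,p}^2 \le (\econst^{-\eta_t\rho_k}+C_1 p\ep_t^2)\|\mtx W_{t-1}\bone_{t-1}\|_{p,p}^2 + C_2 p k^{2/p}\ep_t^2$, and then unroll using the monotonicity $\bone_i\le\bone_{i-1}$ and the third hypothesis in~\eqref{main_assumptions} to replace each $\ep_i^2$ by $\ep_t^2$. The paper packages the first two steps as separate propositions (Propositions~\ref{j_bounds} and~\ref{prop:one_step_recurrence}) and carries out the unrolling as an explicit induction rather than a telescoping sum, but the argument is otherwise identical, including the choice $\lambda=\ep_t^2$ and the constant bookkeeping you flag.
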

Theorem~\ref{main_assumptions} shows that, up to small error, $\|\mtx W_t \bone_{t-1}\|_{p, p}^2$ decays exponentially fast.
We will use this fact to prove high probability bounds on $\|\mtx W_t \bone_{t-1}\|$, which then imply bounds on $\|\mtx W_t\|$.
\section{Phase II}\label{sec:phase2}
In this section, we use Theorem~\ref{thm:main_recurrence} to prove a formal version of Theorem~\ref{thm:phase2_informal}.

For this phase, recall that we define the good events $\good_i$ by
\begin{equation}\label{eq:phase2_good}
\good_0 = \{\|\mtx W_0\| \leq 1\}\,, \quad \quad 
\good_i = \{\|\mtx W_i\| \leq \gamma\} \cap \good_{i-1}\,, \quad \forall i \geq 1\,.
\end{equation}
For Phase II, we set $\gamma = \sqrt{2} \econst$.

We first show that, with a specific step-size schedule, we obtain good bounds on the norm of the last iterate.
\begin{proposition}\label{prop:phase2_norm_bound}
Define the good events as in~\eqref{eq:phase2_good}.
Set $\eta_i = \frac{\alpha}{(\beta + i) \rho_k}$, for positive quantities $\alpha$ and $\beta$, and define the normalized gap
\begin{equation}\label{eq:barrho}
\barrho_k = \min\left\{\frac{\rho_k}{M}, \frac{\rho_k}{\|\mtx M\|}, 1\right\}\,.
\end{equation}
If
\begin{equation}\label{alpha_beta_conds}
\alpha \geq 8\,, \quad \beta \geq \frac{4 (1+\sqrt{2}\econst) \alpha}{\barrho_k}\,,
\end{equation}
then for any $t \geq 1$,
\begin{equation}\label{eq:pq-normbound_decay}
\|\mtx W_t \bone_t\|_{p, p}^2 \leq k^{2/p} \left( \frac{\beta + 1}{\beta + t}\right)^ \alpha + p k^{2/p} \cdot \left(\frac{C_3 \alpha}{\barrho_k}\right)^2 \cdot  \frac{t}{(\beta + t)^2}\,,
\end{equation}
where $C_3$ is a numerical constant less than $175$.
\end{proposition}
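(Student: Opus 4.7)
The plan is to apply the weaker bound~\eqref{eq:main_bound1} of Theorem~\ref{thm:main_recurrence} in a single shot, exploiting the fact that the good events $\good_i$ enforce a uniform operator-norm ceiling on $\mtx W_i \bone_i$. Specifically, $\good_i$ guarantees $\|\mtx W_i\bone_i\| \leq \gamma = \sqrt{2}\econst$ almost surely for $i \geq 1$ (and $\|\mtx W_0 \bone_0\| \leq 1$ for $i=0$), so the running sum $\sum_{i=0}^{t-1}\|\mtx W_i\bone_i\|_{p,p}^2$ appearing in~\eqref{eq:main_bound1} can be bounded by a trivial almost-sure estimate rather than by an induction. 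This observation is what lets the argument close without any $p$-dependent restriction on $\beta$.

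First I would verify the three hypotheses~\eqref{main_assumptions} of Theorem~\ref{thm:main_recurrence}. The bounds $\ep_i \leq 1/2$ and $\eta_i\|\mtx M\| \leq 1/2$ are easy consequences of the assumption $\beta \geq 4(1+\sqrt{2}\econst)\alpha/\barrho_k$, together with the inequalities $M/\rho_k \leq 1/\barrho_k$ and $\|\mtx M\|/\rho_k \leq 1/\barrho_k$ built into the definition~\eqref{eq:barrho} of $\barrho_k$. The remaining condition $\econst^{-\eta_i \rho_k/4} \leq \ep_i/\ep_{i-1}$ becomes, via the identity $\ep_i/\ep_{i-1} = (\beta+i-1)/(\beta+i)$, a claim that $\alpha/(4(\beta+i)) \geq -\log(1 - 1/(\beta+i))$; this follows from $\alpha \geq 8$ and the elementary inequality $-\log(1-x) \leq 2x$ valid on $[0,1/2]$.

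Next I would convert the operator-norm ceilings into $L_p$ estimates. Since $\mtx W_i$ has rank at most $k$, the inequality $\|\mtx X\|_p \leq (\rank \mtx X)^{1/p}\|\mtx X\|$ yields $\|\mtx W_0\bone_0\|_{p,p}^2 \leq k^{2/p}$ and $\|\mtx W_i\bone_i\|_{p,p}^2 \leq 2\econst^2 k^{2/p}$ for $i \geq 1$, and therefore $\sum_{i=0}^{t-1}\|\mtx W_i\bone_i\|_{p,p}^2 \leq 2\econst^2 k^{2/p} t$ for every $t \geq 1$. Plugging this into~\eqref{eq:main_bound1}, together with the decay bound $\econst^{-s_t\rho_k} \leq ((\beta+1)/(\beta+t))^\alpha$ (obtained from $s_t\rho_k \geq \alpha \log((\beta+t+1)/(\beta+1))$ by comparing the partial sum $\sum_{i=1}^t 1/(\beta+i)$ to an integral) and the bound $\ep_t^2 \leq 4(1+\sqrt{2}\econst)^2 \alpha^2/(\barrho_k^2(\beta+t)^2)$, the right-hand side takes the claimed shape.

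All that remains is the numerical check that $C_3^2 := 4(1+\sqrt{2}\econst)^2(2\econst^2 C_1 + C_2)$ satisfies $C_3 < 175$, which with $C_1 = 21$ and $C_2 = 5$ is straightforward. There is no serious obstacle in the argument: the key simplification afforded by working inside the good event is that the sum term in~\eqref{eq:main_bound1} becomes of the same order as the noise term $C_2 p k^{2/p} \ep_t^2 t$, so no recursive or cross-term analysis is needed, and in particular no $p$-dependent constraints on $\alpha$ or $\beta$ appear.
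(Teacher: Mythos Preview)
Your proposal is correct and follows essentially the same route as the paper: verify the hypotheses~\eqref{main_assumptions} (the paper defers this to Lemma~\ref{phase2-assumption-verification}), apply the first bound~\eqref{eq:main_bound1} of Theorem~\ref{thm:main_recurrence}, and bound the sum $\sum_{i=0}^{t-1}\|\mtx W_i\bone_i\|_{p,p}^2$ by $\gamma^2 k^{2/p} t$ using the almost-sure ceiling from the good events. Your constant $C_3^2 = 4(1+\sqrt 2\econst)^2(2\econst^2 C_1 + C_2)$ is exactly the paper's $C_3 = (C_1\gamma^2+C_2)^{1/2}C_\ep$ with $\gamma=\sqrt 2\econst$ and $C_\ep = 2(1+\sqrt 2\econst)$.
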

\begin{proof}
Since the good events defined in~\eqref{eq:phase2_good} satisfy~\eqref{eq:conditional_requirement}, we can apply Theorem~\ref{thm:main_recurrence}.
In the appendix, we show (Lemma~\ref{phase2-assumption-verification}) that \eqref{alpha_beta_conds} implies that the assumptions in~\eqref{main_assumptions} hold.
Theorem~\ref{thm:main_recurrence} then yields
\begin{align*}
\|\mtx W_t \bone_t\|_{p, p}^2 & \leq \econst^{-s_t \rho_k} \|\mtx W_0 \bone_0\|_{p, p}^2 + C_1 p \ep_t^2 \sum_{i=1}^{t-1} \|\mtx W_i \bone_i\|_{p, p}^2 + C_2 p k^{2/p} \ep_t^2 t\\
& \leq \econst^{-s_t \rho_k} k^{2/p} + (C_1\gamma^2 + C_2) p  k^{2/p} \ep_t^2 t\,,
\end{align*}
since~\eqref{eq:phase2_good} implies $\|\mtx W_0 \bone_0\|_{p, p}^2 \leq k^{2/p}$ and $\|\mtx W_i \bone_i\|_{p, p}^2 \leq \gamma^2 k^{2/p}$ for all $i \geq 1$.

The definition of $\eta_i$ implies
\[\rho_k s_t = \alpha\sum_{i=1}^t\frac{1}{\beta+i}\geq \alpha \log\left(\frac{\beta+t}{\beta+1}\right).\]
We obtain
\begin{equation*}
\|\mtx W_t \bone_t\|_{p, p}^2 \leq k^{2/p} \left( \frac{\beta + 1}{\beta + t}\right)^ \alpha + p k^{2/p} \cdot \left(\frac{C_3 \alpha}{\barrho_k}\right)^2 \cdot  \frac{t}{(\beta + t)^2}\,,
\end{equation*}
where
\begin{equation*}
C_3 = (C_1 \gamma^2 + C_2)^{1/2} C_\ep < 175\,,
\end{equation*}
as desired.
\end{proof}

Finally, we remove the conditioning and prove the full version of Theorem~\ref{thm:phase2_informal}.

\begin{theorem}\label{thm:phase2}
Assume $\|\mtx W_0\| \leq 1$, and adopt the step size $\eta_i = \frac{\alpha}{(\beta + i) \rho_k}$,
with
\begin{equation*}
\alpha \geq 8\,, \quad \beta \geq  2 \left(\frac{C_3 \alpha}{\barrho_k}\right)^2 \log\left(\frac{C_3 \alpha}{\barrho_k} \cdot 2k/\delta\right)\,,
\end{equation*}
where $\bar \rho_k$ is as in~\eqref{eq:barrho} and $C_3$ is as in~\eqref{eq:pq-normbound_decay}.
Then
\begin{equation*}
\|\mtx W_T\| \leq 2 \econst \sqrt{\frac{\beta + 1}{\beta + T}}
\end{equation*}
with probability at least $1 - \delta$.
\end{theorem}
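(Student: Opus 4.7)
The plan is to upgrade the Schatten moment estimate of Proposition~\ref{prop:phase2_norm_bound}, which controls $\norm{\mtx W_t \bone_t}_{p,p}$, into a high-probability statement about the unconditional operator norm $\norm{\mtx W_T}$. Because $\mtx W_T \bone_T = \mtx W_T$ on the good event $\good_T$, I would split the failure event as
\[
\Prob{\norm{\mtx W_T} > u} \leq \Prob{\norm{\mtx W_T \bone_T} > u} + \Prob{\good_T^c}
\]
with $u = 2\econst \sqrt{(\beta+1)/(\beta+T)}$, and control each summand by $\delta/2$ via Markov's inequality combined with the moment bound of Proposition~\ref{prop:phase2_norm_bound}.

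For the good-event probability, I exploit the nested decomposition $\good_T^c = \bigsqcup_{t=1}^T \good_{t-1} \cap \{\norm{\mtx W_t} > \gamma\}$ together with $\bone_{t-1} = 1$ on $\good_{t-1}$ to obtain
\[
\Prob{\good_T^c} \leq \sum_{t=1}^T \Prob{\norm{\mtx W_t \bone_{t-1}} > \gamma} \leq \gamma^{-p} \sum_{t=1}^T \norm{\mtx W_t \bone_{t-1}}_{p,p}^p.
\]
Substituting Proposition~\ref{prop:phase2_norm_bound} and using the elementary estimates $((\beta+1)/(\beta+t))^\alpha \leq 1$ and $t/(\beta+t)^2 \leq 1/(4\beta)$ gives a uniform-in-$t$ moment bound $\norm{\mtx W_t \bone_{t-1}}_{p,p}^2 \leq k^{2/p}[1 + p(C_3\alpha/\barrho_k)^2/(4\beta)]$. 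The hypothesis on $\beta$ is tailored so that, for $p$ of order $\log(C_3\alpha k/(\barrho_k\delta))$, the bracketed quantity is at most $2$; raising to the $p/2$ power and dividing by $\gamma^p = 2^{p/2}\econst^p$ collapses each summand to $\econst^{-p} k$, which is exponentially small in $p$, so the union bound over $t$ yields $\Prob{\good_T^c} \leq \delta/2$.

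For the final iterate I would apply Markov once more at threshold $u$ to $\norm{\mtx W_T \bone_T}_{p,p}$, this time exploiting the \emph{decaying} form of the moment bound rather than its worst case. Using $\alpha \geq 2$ gives $((\beta+1)/(\beta+T))^\alpha \leq (\beta+1)/(\beta+T)$, while $T/(\beta+T)^2 \leq 1/(\beta+T)$, and these combine into $\norm{\mtx W_T\bone_T}_{p,p}^2 \leq k^{2/p}[(\beta+1) + p(C_3\alpha/\barrho_k)^2]/(\beta+T)$. The hypothesis on $\beta$ dominates the second summand by the first, leaving $\norm{\mtx W_T\bone_T}_{p,p}^2 \leq 2k^{2/p}(\beta+1)/(\beta+T) = k^{2/p}u^2/(2\econst^2)$; Markov at threshold $u$ then yields $\Prob{\norm{\mtx W_T \bone_T} > u} \leq k/(\econst\sqrt{2})^{p} \leq \delta/2$ for the same choice of $p$. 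Assembling the two pieces gives the claim.

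The main obstacle is the simultaneous tuning of $p$, $\alpha$, and $\beta$ so that both pieces are at most $\delta/2$ under a $T$-independent hypothesis on $\beta$: the factor of $T$ arising from the union bound over good events must be absorbed into the logarithmic factors defining $p$ (and hence $\beta$), or eliminated by a stopping-time argument that controls the first exit from $\{\norm{\mtx W_t}\leq \gamma\}$ without a naive union bound. Modulo this bookkeeping, the argument is a clean double application of Markov's inequality to the moment bounds already in hand.
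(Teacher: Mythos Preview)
Your overall strategy---split $\{\norm{\mtx W_T} > u\}$ into $\{\norm{\mtx W_T\bone_T} > u\} \cup \good_T^c$, then control each piece by Markov applied to the moment bound of Proposition~\ref{prop:phase2_norm_bound}---is exactly the paper's, and your treatment of the final-iterate term is essentially the same as the paper's (carried out in Lemma~\ref{lem:phase1-last-bound}).

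The gap is in the good-event sum, and it is not mere bookkeeping. With a \emph{fixed} $p$, your argument gives
\[
\sum_{t=1}^T \Prob{\norm{\mtx W_t \bone_{t-1}} > \gamma} \leq T \cdot k\,\econst^{-p},
\]
and making this at most $\delta/2$ forces $p \geq \log(2Tk/\delta)$. But your own constraint that the bracketed quantity be at most $2$ requires $p(C_3\alpha/\barrho_k)^2 \leq 4\beta$, so combining the two forces $\beta \gtrsim (C_3\alpha/\barrho_k)^2\log T$, contradicting the $T$-free hypothesis of the theorem. Neither a stopping-time argument nor sharper bookkeeping rescues this: with a uniform-in-$t$ moment bound you simply cannot avoid the factor of $T$.

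The paper's resolution is to let $p$ \emph{depend on $t$}: for the $t$-th summand take $p_t = (\beta+t)\,\barrho_k^2/(C_3\alpha)^2$. Substituting into~\eqref{eq:pq-normbound_decay} and using $\gamma^2 = 2\econst^2$ gives
\[
\gamma^{-p_t}\norm{\mtx W_t \bone_{t-1}}_{p_t,p_t}^{p_t} \leq k\exp\Bigl(-(\beta+t)\,\tfrac{\barrho_k^2}{C_3^2\alpha^2}\Bigr),
\]
so the failure probabilities decay \emph{geometrically} in $t$. Summing the geometric series yields a $T$-independent bound $k\,(C_3\alpha/\barrho_k)^2\exp\bigl(-\beta\,\barrho_k^2/(C_3^2\alpha^2)\bigr)$, which is at most $\delta/2$ precisely under the stated hypothesis on $\beta$. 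This $t$-dependent Markov exponent is the missing idea.
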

\begin{proof}
For any $s\geq 0$, it holds $\Prob{\|\mtx{W}_T\| \geq s} \leq \Prob{\|\mtx{W}_T \mathbb{1}_T\| \geq s} + \Prob{\good_T^C}$.
First, we have
\begin{equation*}
\Prob{\mathcal G_T^C} \leq \Prob{\good_0^C} + \sum_{j=1}^T \Prob{\mathcal G_j^C \cap \mathcal G_{j-1}}\,.
\end{equation*}
Since we have assumed that the initialization satisfies $\|\mtx W_0\| \leq 1$, the event $\good_0$ holds with probability $1$, so it suffices to bound the second term.
By Markov's inequality, we have
\begin{equation*}
\Prob{\mathcal G_j^C \cap \mathcal G_{j-1}} = \Prob{\|\mtx W_j \bone_{j-1}\| \geq \gamma} \leq \inf_{p \geq 2} \gamma^{-p} \|\mtx W_j \bone_{j-1}\|_{p, p}^p\,.
\end{equation*}
For fixed $j\geq 1$, we choose $p =(\beta + j) \cdot \frac{\barrho_k^2}{C_3^2 \alpha^2 }$. It follows from \eqref{eq:pq-normbound_decay} that, 
\begin{align*}
\gamma^{-p}\|\mtx W_j \mathbb 1_{j-1}\|_{p,p}^p &\leq \left(\frac{1}{\gamma^2}k^{2/p}\left(\frac{\beta+1}{\beta+j}\right)^{\alpha} +  \frac{1}{\gamma^2}p k^{2/p} \cdot \frac{C_3^2 \alpha^2 }{\barrho_k^2} \cdot  \frac{j}{(\beta + j)^2}\right)^{p/2}\\
&\leq k\left(\frac{1}{2\econst^2} + \frac{1}{2 \econst^2} \frac{j}{\beta +j}\right)^{p/2}\\
& \leq k\econst^{-p} = k\exp\left(-(\beta + j) \cdot \frac{\barrho_k^2}{C_3^2 \alpha^2}\right)\,. 
\end{align*}
Therefore, for any $T\geq 1$,
\[\sum_{j = 1}^{T} \Prob{\mathcal G_j^C | \mathcal G_{j-1}}\leq k\sum_{j = 1}^{T} \exp\left(-(\beta + j) \cdot \frac{\barrho_k^2}{C_3^2 \alpha^2}\right)\leq k \frac{C_3^2 \alpha^2}{\barrho_k^2} \econst^{-\beta \cdot \frac{\barrho_k^2}{C_3^2 \alpha^2}}.\]
This quantity is smaller than $\delta/2$ if
\[\beta \geq 2 \frac{C_3^2 \alpha^2}{\barrho_k^2} \log\left(\frac{C_3 \alpha M}{\barrho_k} \cdot 2k/\delta\right)\,.\]

It remains to bound $\Prob{\|\mtx{W}_T \mathbb{1}_T\| \geq s}$.
A simple argument (Lemma~\ref{lem:phase1-last-bound}) based on~\eqref{eq:pq-normbound_decay} shows that this probability is at least $\delta/2$ for
\begin{equation*}
s = 2 \econst \sqrt{\frac{\beta + 1}{\beta + T}}\,.
\end{equation*} 
The claim follows.
\end{proof}

\section{Phase I}\label{sec:phase1}
In this section, we describe the slightly more delicate proof of the formal version of Theorem~\ref{thm:phase1_informal}.
As in Section~\ref{sec:phase2}, we will employ Theorem~\ref{thm:main_recurrence}.
However, we will also need to develop an auxiliary recurrence to bound the growth of an additional matrix sequence.

Before we analyze Phase I, we first show that we can reduce to the case that that $\lawa$ has finite support.
We prove the following result in Appendix~\ref{ghost}.
\begin{proposition}\label{prop:reduction}
Fix $\rho > 0$.
Suppose that there exists a choice of constant step size $\eta$ and $T_0 \geq \frac{9 M}{\rho \delta} \log(d/\delta)$ such that for any \emph{finitely-supported} distribution with support size at most $T_0^3$ satisfying Assumptions~\ref{assume1} and~\ref{assume2} and with $\rho_k \geq \rho/2$, we have
\begin{equation}\label{better_error}
\|{\mtx U}^* \mtx Q_{T_0} ({\mtx V}^* \mtx Q_{T_0})^{-1}\| \leq \frac 16
\end{equation}
with probability at least $1 - \delta/3$.

Then for this same $\eta$ and $T_0$ it in fact holds that for \emph{any} distribution satisfying Assumptions~\ref{assume1} and~\ref{assume2} and with $\rho_k \geq \rho$, we have
\begin{equation*}
\|{\mtx U}^* \mtx Q_{T_0} ({\mtx V}^* \mtx Q_{T_0})^{-1}\| \leq 1
\end{equation*}
with probability at least $1 - \delta$.
\end{proposition}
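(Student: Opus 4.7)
The plan is a standard ghost-sample reduction. Introduce an auxiliary pool of $N = T_0^3$ i.i.d.\ draws $\tilde{\mtx A}_1, \dots, \tilde{\mtx A}_N$ from $\lawa$, and let $\hat{\lawa} := \frac{1}{N} \sum_{j=1}^N \delta_{\tilde{\mtx A}_j}$ be the empirical measure. Then $\hat{\lawa}$ is finitely supported, with support size at most $T_0^3$, and (since the support of $\hat{\lawa}$ is contained in the support of $\lawa$) it inherits Assumption~\ref{assume2} automatically. Write $\hat{\mtx M}$ for its expectation and $\hat{\mtx V}, \hat{\mtx U}$ for the analogues of $\mtx V, \mtx U$ associated with $\hat{\mtx M}$.

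The first step is a matrix Bernstein bound on the ghost pool. Because $N = T_0^3$ is enormous compared to $(M/\rho)^2 \log(d/\delta)$ under the hypothesis on $T_0$, with probability at least $1 - \delta/3$ we have $\|\hat{\mtx M} - \mtx M\|$ small enough (e.g.\ $\lesssim \rho/T_0$) that (i)~Weyl's inequality gives $\rho_k(\hat{\lawa}) \geq \rho/2$, and (ii)~the Davis--Kahan theorem gives $\|\hat{\mtx V}\hat{\mtx V}^* - \mtx V\mtx V^*\| \ll 1$. Call the event that both of these hold $\mathcal E_{\mathrm{pool}}$.

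The second step is the coupling. Draw $T_0$ indices without replacement from $[N]$; the resulting subsample is distributed exactly as $T_0$ i.i.d.\ draws from $\lawa$, i.e.\ as the input to the original Oja run. Draw $T_0$ indices with replacement from $[N]$; conditional on the pool, the resulting subsample is exactly i.i.d.\ from $\hat{\lawa}$. Couple the two samplings so that they agree unless a collision occurs among the $T_0$ with-replacement indices, which happens with probability at most $\binom{T_0}{2}/N \leq 1/T_0$. The hypothesis $T_0 \geq 9M\rho^{-1}\delta^{-1}\log(d/\delta)$ ensures $1/T_0 \leq \delta/3$.

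On $\mathcal E_{\mathrm{pool}}$, the distribution $\hat{\lawa}$ satisfies all the conditions of the hypothesis with gap $\geq \rho/2$, so running Oja with the given $\eta$ on i.i.d.\ samples from $\hat{\lawa}$ produces $\mtx Q_{T_0}$ with $\|\hat{\mtx U}^* \mtx Q_{T_0}(\hat{\mtx V}^* \mtx Q_{T_0})^{-1}\| \leq 1/6$ with probability at least $1 - \delta/3$ (over the sampling, conditional on the pool). The final step converts this bound from the $(\hat{\mtx U}, \hat{\mtx V})$ basis to the $(\mtx U, \mtx V)$ basis using the subspace proximity from $\mathcal E_{\mathrm{pool}}$: a short Davis--Kahan / perturbation computation shows that when the two pairs of subspaces are sufficiently close, an upper bound of $1/6$ in one frame implies an upper bound of $1$ in the other, and this is precisely the slack built into the hypothesis. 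Then a union bound over (a) $\mathcal E_{\mathrm{pool}}^c$, (b) collision of the with/without-replacement couplings, and (c) failure of the assumed bound on $\hat{\lawa}$ gives total failure probability at most $\delta$, yielding the conclusion.

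The main obstacle is the perturbation conversion in the last step: we need to turn a quantitative bound on $\|\hat{\mtx U}^* \mtx Q(\hat{\mtx V}^* \mtx Q)^{-1}\|$, which measures the tangent of principal angles relative to the perturbed eigenspaces, into one on $\|\mtx U^* \mtx Q(\mtx V^* \mtx Q)^{-1}\|$ relative to the true eigenspaces, without losing a dimension-dependent factor. The resolution is to drive the pool size $N = T_0^3$ large enough that Davis--Kahan yields $\|\hat{\mtx V}\hat{\mtx V}^* - \mtx V\mtx V^*\|$ very small (polynomially in $1/T_0$), which lets us absorb the perturbation into the $1/6 \to 1$ slack using only cheap spectral-norm manipulations rather than any delicate calculation.
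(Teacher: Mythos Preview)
Your proposal is correct and follows essentially the same ghost-sample coupling argument as the paper: introduce an i.i.d.\ pool, compare with- and without-replacement sampling via the birthday/Freedman total-variation bound, use matrix Bernstein plus Davis--Kahan (the paper uses Wedin, equivalent here) to control the gap and subspace perturbation of the empirical distribution, apply the finite-support hypothesis, and convert back to the true $(\mtx U,\mtx V)$ frame using exactly the $1/6\to1$ slack via a short stability lemma. One small caution: $\hat{\lawa}$ does not inherit Assumption~\ref{assume2} ``automatically,'' since the mean has shifted from $\mtx M$ to $\hat{\mtx M}$; on the good event the bound holds with $M$ replaced by, say, $2M$, which is how the paper handles it.
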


Proposition~\ref{prop:reduction} implies that it suffices to prove the error guarantee~\eqref{better_error} in the special case when $\lawa$ has finite support of cardinality at most $T_0^3$.

Let us fix a time horizon $T_0$ and assume in what follows that $m := |\mathrm{supp}(\lawa)| \leq T_0^3$.We begin by defining the good events for Phase I.
We adopt a constant step size $\eta$, to be specified.
Denote 
\[\mathcal E := \{M^{-1}(\mtx A - \mtx M) \mtx U \mtx U^*: \mtx A \in \mathrm{supp}(\lawa)\}.\]
For $i \geq 1$, we will set
\begin{equation}\label{eq:phase1_gooda}
\good_i = \{\max_{\mtx E \in \mathcal E} \|\mtx V^* \mtx E \mtx U \mtx W_i\| \leq \gamma\} \cap \good_{i-1}\,.
\end{equation}
Note that this choice satisfies~\eqref{eq:conditional_requirement} for all $i > 1$.

To define the initial good event $\good_0$, we need to define a larger set of matrices to condition on.
For all $r, \ell \geq 1$, set
\begin{align*}
\mathcal E_{r, \ell} := \{\mtx V^* \mtx F_1 \cdots \mtx F_r \mtx U : &\mtx F_i \in \mathcal E \text{ for at most $\ell$ distinct indices $i \in [r]$}, \\
&\text{and $\mtx F_i = (1 + \eta \lambda_{k+1})^{-1} (\Id + \eta \mtx M)\mtx U \mtx U^* $ otherwise}\}
\end{align*}
The set $\mathcal E_{r, \ell}$ has cardinality less than $(r(m+1))^\ell$, and $\|\mtx E\|_2 \leq 1$ for any $\mtx E \in \mathcal E_{r, \ell}$, and any $r, \ell \geq 1$.
We have defined $\mathcal E_{r, \ell}$ so that control over $\max_{\mtx E \in \mathcal E_{r +1, \ell+1}} \|\mtx E \mtx W_{t-1}\|$ gives control over $\max_{\mtx E \in \mathcal E_{r , \ell}} \|\mtx E \mtx W_t\|$.

Finally, we define
\begin{equation}\label{eq:phase1_goodb}
\mathcal G_0 := \bigcap_{r, \ell = 1}^{T_0+1} \left\{\max_{\mtx E \in \mathcal E_{r, \ell }} \|\mtx E \mtx W_0\|_2 \leq \frac{\sqrt{\ell} \gamma}{\sqrt{2}\econst} \right\} \cap \{\|\mtx W_0\|_2 \leq \sqrt{d}\gamma\}\,. 
\end{equation}
Since $\mtx V^* (\mtx A_1 - \mtx M) \mtx U \in \mathcal E_{1, 1}$ almost surely, this choice satisfies \eqref{eq:conditional_requirement} for $i = 1.$

Our strategy will be similar to the one used in Section~\ref{sec:phase2}.
However, in order to show that the good events $\good_i$ hold with high probability, we will also need a second recurrence that allows us to control the norm of matrices of the form $\mtx E \mtx W_t$, for $\mtx E \in \mathcal E_{r, \ell}$.
The details appear in Section~\ref{sec:additional-phase1}.

\section{Conclusion}\label{sec:conclusion}
This work gives the first nearly optimal analysis of Oja's algorithm for streaming PCA beyond the rank one case.
Our analysis is conceptually simple: we show that the spectral norm of the matrix $\mtx W_t$ concentrates well around its expectation, once we condition on $\mtx W_{t-1}$ having the same behavior.
And our concentration results are strong enough that we can pay to union bound over the entire course of the algorithm, to show that $\mtx W_t$ is well behaved for all $t \geq 1$.

The matrix concentration techniques we have applied here could be useful in analyzing other PCA-like algorithms, or, more generally, other stochastic algorithms for simple non-convex optimization problems.
An interesting question is whether these techniques can prove \emph{gap-free} rates for Oja's algorithm outside the rank-one setting.
This would extend the results of~\cite{AllLi17} to the general case.

Finally, we stress that the algorithm we have described here requires \emph{a priori} knowledge of the problem parameters (including the gap $\rho_k$) to set the step sizes, which is a serious limitation in practice.
Recently, \cite{HenWar19} developed a data-driven procedure to adaptively select the optimal step sizes.
Obtaining theoretical guarantees for this or similar algorithms is an important open problem.

\section*{Acknowledgement}
We thank Joel Tropp and Amelia Henriksen for valuable discussions which greatly improved this manuscript.


\appendix
\section{Additional results for Section~\ref{sec:recursion}}
The following proposition develops the expansion described in Lemma~\ref{lem:decomposition} and gives explicit bounds on the norms of the error matrices $\mtx J_{t, 1}$ and $\mtx J_{t, 2}$.

We recall the following definitions
\begin{align*}
\mtx W_t & = \mtx U^* \mtx Z_t (\mtx V^* \mtx Z_t)^{-1} \\
\mtx H_t & = \mtx U^* (\Id + \eta \mtx M) \mtx Z_{t-1} (\mtx V^* (\Id + \eta \mtx M) \mtx Z_{t-1})^{-1} \\
\mtx \Delta_t & = \eta_t \mtx V^*(\mtx A_t - \mtx M) \mtx Z_{t-1} (\mtx V^*(\Id + \eta_t \mtx M) \mtx Z_{t-1})^{-1}
\end{align*}

\begin{proposition}\label{j_bounds}
Let $t \geq 1$.
Assume that $\eta_t$ is small enough that $\mtx M \succeq - \frac{1}{2 \eta_t} \Id$, and assume that~\eqref{eq:conditional_requirement} holds for $i = t$.
Let
\begin{align*}
E_t & = (k^{1/p} + 2 \|\mtx W_{t-1} \bone_{t-1}\|_{p, p}) \\
\ep_t & = 2 \eta_t M(1 + \gamma)\,.
\end{align*}

Then $\|\mtx \Delta_t \bone_{t-1}\| \leq \ep_t$ almost surely, and
\begin{equation}\label{eq:w_decomp}
\mtx{W}_t(\Id - {\mtx{\Delta}}_t^2) = \mtx{H}_t + \mtx{J}_{t,1} + \mtx{J}_{t,2}
\end{equation}
for $\mtx J_{t, 1}$ and $\mtx J_{t, 2}$ satisfying
\begin{align*}
\|\mtx{J}_{t,1} \bone_{t-1}\|_{p, p} & \leq  E_t \ep_t  \\ 
\|\mtx{J}_{t,2} \bone_{t-1}\|_{p, p} & \leq E_t \ep_t^2\,,  \\ 
\end{align*}
and $\E[\mtx J_{t, 1} \mid \filtration_{t-1}] = \zeromtx$.
\end{proposition}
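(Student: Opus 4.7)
The plan is to first derive the identity~\eqref{eq:w_decomp} by pure algebra, then identify the terms linear and quadratic in $\mtx A_t - \mtx M$, and finally bound their $L_p$ norms using the spectral structure of $\mtx M$. Diagonalize $\mtx M = \mtx V \mtx \Lambda_V \mtx V^* + \mtx U \mtx \Lambda_U \mtx U^*$ and set $\mtx P := (\Id + \eta_t \mtx \Lambda_V)^{-1}$. Under the hypothesis $\mtx M \succeq -\tfrac{1}{2\eta_t}\Id$, every eigenvalue satisfies $1 + \eta_t \lambda_j \geq \tfrac{1}{2} > 0$, so $\|\mtx P\| \leq 2$ and $\mtx R_t := \mtx V^*(\Id + \eta_t \mtx M)\mtx Z_{t-1} = \mtx P^{-1}\mtx V^* \mtx Z_{t-1}$ is invertible. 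A one-line computation gives $\mtx V^* \mtx Z_t = (\Id + \mtx \Delta_t)\mtx R_t$, so that $(\mtx V^* \mtx Z_t)^{-1} = \mtx R_t^{-1}(\Id + \mtx \Delta_t)^{-1}$; factoring $(\Id - \mtx \Delta_t^2) = (\Id + \mtx \Delta_t)(\Id - \mtx \Delta_t)$ then collapses the inverse to yield $\mtx W_t(\Id - \mtx \Delta_t^2) = \mtx U^* \mtx Z_t \mtx R_t^{-1}(\Id - \mtx \Delta_t)$.

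Next I would expand $\mtx U^* \mtx Z_t \mtx R_t^{-1} = \mtx H_t + \mtx G_t$ using the resolution $\mtx Z_{t-1}(\mtx V^* \mtx Z_{t-1})^{-1} = \mtx V + \mtx U \mtx W_{t-1}$, where $\mtx G_t := \eta_t[\mtx U^*(\mtx A_t - \mtx M)\mtx V + \mtx U^*(\mtx A_t - \mtx M)\mtx U \mtx W_{t-1}]\mtx P$. Distributing $(\Id - \mtx \Delta_t)$ and sorting by order in $\mtx A_t - \mtx M$ identifies $\mtx J_{t,1} := \mtx G_t - \mtx H_t \mtx \Delta_t$ (linear) and $\mtx J_{t,2} := -\mtx G_t \mtx \Delta_t$ (quadratic); the conditional-mean-zero property $\E[\mtx J_{t,1} \mid \filtration_{t-1}] = \zeromtx$ is immediate since $\E[\mtx A_t - \mtx M \mid \filtration_{t-1}] = \zeromtx$ and the other factors lie in $\filtration_{t-1}$. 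The same $\mtx V$-$\mtx U$ split applied to $\mtx \Delta_t$, combined with $\|\mtx P\| \leq 2$, gives $\|\mtx \Delta_t \bone_{t-1}\| \leq \ep_t$: the $\mtx V$-part contributes at most $2\eta_t M$ by Assumption~\ref{assume2} (with $\mtx V \in \stie{d}{k}$), and the $\mtx U$-part at most $2\eta_t M\gamma$ by~\eqref{eq:conditional_requirement}. An analogous calculation, using also that $\mtx U^*(\mtx A_t - \mtx M)\mtx V$ has rank at most $k$ and hence $\|\mtx U^*(\mtx A_t - \mtx M)\mtx V\|_p \leq k^{1/p} M$, gives $\|\mtx G_t \bone_{t-1}\|_{p,p} \leq 2\eta_t M(k^{1/p} + \|\mtx W_{t-1}\bone_{t-1}\|_{p,p})$; this alone controls $\mtx J_{t,2}$ after multiplication by $\|\mtx \Delta_t \bone_{t-1}\| \leq \ep_t$.

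The step I expect to be the main obstacle is bounding $\|\mtx H_t \mtx \Delta_t \bone_{t-1}\|_{p,p}$ without incurring an undesirable factor of $\|\Id + \eta_t \mtx \Lambda_U\|$, which the hypotheses do not control. The resolution is to apply the non-commutative H\"older inequality to $\mtx H_t = (\Id + \eta_t \mtx \Lambda_U)\mtx W_{t-1}\mtx P$ in the form $\|\mtx H_t\|_p \leq \|\Id + \eta_t \mtx \Lambda_U\|\cdot\|\mtx W_{t-1}\|_p\cdot\|\mtx P\|$, and then exploit the positivity of every $1 + \eta_t \lambda_j$ to rewrite the product of the two operator norms as $\max_{j > k,\, i \leq k}(1+\eta_t \lambda_j)/(1+\eta_t \lambda_i)$, which is at most $1$ by the eigenvalue ordering $\lambda_j \leq \lambda_{k+1} \leq \lambda_k \leq \lambda_i$. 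Thus $\|\mtx H_t \bone_{t-1}\|_{p,p} \leq \|\mtx W_{t-1}\bone_{t-1}\|_{p,p}$, whence $\|\mtx H_t \mtx \Delta_t \bone_{t-1}\|_{p,p} \leq \ep_t \|\mtx W_{t-1}\bone_{t-1}\|_{p,p}$. Combining this with the $\mtx G_t$ estimate and tracking constants with $\ep_t = 2\eta_t M(1+\gamma)$ delivers the claimed inequalities $\|\mtx J_{t,1}\bone_{t-1}\|_{p,p} \leq E_t \ep_t$ and $\|\mtx J_{t,2}\bone_{t-1}\|_{p,p} \leq E_t \ep_t^2$.
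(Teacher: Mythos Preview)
Your proposal is correct and follows essentially the same route as the paper: the paper's $\widehat{\mtx\Delta}_t$ is exactly your $\mtx G_t$, the decomposition $\mtx J_{t,1}=\widehat{\mtx\Delta}_t-\mtx H_t\mtx\Delta_t$, $\mtx J_{t,2}=-\widehat{\mtx\Delta}_t\mtx\Delta_t$ is identical, and the bounds are obtained the same way (the $\mtx V$--$\mtx U$ split for $\mtx\Delta_t$ and $\widehat{\mtx\Delta}_t$, and the eigenvalue-ratio contraction $\|\mtx H_t\|_p\le\frac{1+\eta_t\lambda_{k+1}}{1+\eta_t\lambda_k}\|\mtx W_{t-1}\|_p\le\|\mtx W_{t-1}\|_p$ for $\mtx H_t$). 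The only cosmetic difference is that the paper invokes Lemma~\ref{lem:decomposition} for the algebraic identity while you re-derive it inline, and you make the block-diagonal structure of $\Id+\eta_t\mtx M$ explicit via $\mtx P$ and $\mtx\Lambda_U$.
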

\begin{proof}
We employ the notation of the proof of Lemma~\ref{lem:decomposition}.
(See Appendix~\ref{sec:omit}.)
First, we show the bound on $\mtx \Delta_t$.
Since $\eta_t \mtx M \succeq - \frac{1}{2} \Id$, we have $\|\mtx V^* (\Id + \eta_t \mtx M)^{-1} \mtx V\| \leq 2$. 
Moreover, since $\|\mtx{V}^* (\mtx{A}_t - \mtx{M})\mtx{U} W_{t-1}\| \leq M \gamma$ almost surely, we have that 
\begin{align*}
\|{\mtx{\Delta}}_t\mathbb{1}_{t-1}\| &\leq 2 \|\eta_t\mtx{V}^*(\mtx{A}_t-\mtx{M})(\mtx{U} \mtx{U}^* + \mtx{V} \mtx{V}^*)\mtx Z_{t-1} (\mtx{V}^*\mtx{Z}_{t-1})^{-1}\mathbb{1}_{t-1}\|\\
&\leq 2\eta_t\|\mtx{V}^*(\mtx{A}_t-\mtx{M}) \mtx{U}\mtx{U}^* \mtx Z_{t-1} (\mtx{V}^*\mtx{Z}_{t-1})^{-1}\mathbb{1}_{t-1}\| + 2\eta_t\|\mtx{V}^*(\mtx{A}_t-\mtx{M}) \mtx{V} \|\\
&=  2\eta_t\|\mtx{V}^*(\mtx{A}_t-\mtx{M}) \mtx{U}\mtx{W}_{t-1}\mathbb{1}_{t-1}\| + 2\eta_t\|\mtx{V}^*(\mtx{A}_t-\mtx{M}) \mtx{V} \|\\
&\leq 2\eta_t M(1+\gamma) =: \eps_t\,.
\end{align*}

We can bound $\|\widehat{\mtx \Delta}_t \bone_{t-1} \|_{p, p}$ by a similar argument.
First, note that Assumption~\ref{assume2} implies that $\|\mtx A_t - \mtx M\| \leq M$ almost surely.
Hence
\begin{align*}
\|\widehat{\mtx \Delta}_t \mathbb{1}_{t-1}\|_{p,p} & \leq 2\eta_t\|\mtx{U}^*(\mtx{A}_t-\mtx{M}) \mtx{U}\mtx{U}^* Z_{t-1} (\mtx{V}^*\mtx{Z}_{t-1})^{-1}\mathbb{1}_{t-1}\|_{p,p} +2\eta_t\|\mtx{U}^*(\mtx{A}_t-\mtx{M}) \mtx{V} \mathbb{1}_{t-1}\|_{p,p} \\
&= 2\eta_t\|\mtx{U}^*(\mtx{A}_t-\mtx{M}) \mtx{U}\|\, \|\mtx{W}_{t-1}\mathbb{1}_{t-1}\|_{p,p} + 2\eta_t\|\mtx{U}^*(\mtx{A}_t-\mtx{M}) \mtx{V} \mathbb{1}_{t-1}\|_{p,p} \\
& \leq( \|\mtx{W}_{t-1}\mathbb{1}_{t-1}\|_{p,p}+ k^{1/p})2\eta_t  M\\
& \leq ( \|\mtx{W}_{t-1}\mathbb{1}_{t-1}\|_{p,p}+ k^{1/p}) \eps_t\,,
\end{align*}

Finally, we have
\[\|\mtx{H}_t\mathbb{1}_{t-1}\|_{p,p} \leq \frac{1+\eta_t\lambda_{k+1}}{1+\eta_t\lambda_k}\|\mtx{W}_{t-1}\mathbb{1}_{t-1}\|_{p,p}\leq \|\mtx{W}_{t-1}\mathbb{1}_{t-1}\|_{p,p}\,.\]

We now employ Lemma~\ref{lem:decomposition}.
The term $\mtx J_{t, 1}$ satisfies
\begin{equation*}
\E[\mtx J_{t, 1} \bone_{t-1} | \filtration_{t-1}] = \zeromtx\,, 
\end{equation*}
and we have
\begin{align}
\|\mtx J_{t,1} \bone_{t-1}\|_{p, p} & \leq \|\widehat{\mtx \Delta}_t \bone_{t-1} \|_{p, p} + \|\mtx H_t \bone_{t-1}\|_{p, p}\|\mtx \Delta_t \bone_{t-1}\| \\
& \leq ( \|\mtx{W}_{t-1}\mathbb{1}_{t-1}\|_{p,p}+ k^{1/p}) \eps_t + \|\mtx{W}_{t-1}\mathbb{1}_{t-1}\|_{p,p} \eps_t \\
& \leq E_t \eps_t\,.
\end{align}
Finally,
\begin{equation*}
\|\mtx J_{t, 2}\|_{p, p} \leq \|\widehat{\mtx \Delta}_t \bone_{t-1}\|_{p, p}\|\mtx \Delta_{t} \bone_{t-1}\| \leq ( \|\mtx{W}_{t-1}\mathbb{1}_{t-1}\|_{p,p}+ k^{1/p}) \eps_t^2 \leq E_t \eps_t^2\,.
\end{equation*}
\end{proof}

Combining Proposition~\ref{j_bounds} with Proposition~\ref{non-centered-smoothness} immediately yields a recursive bound.
\begin{proposition}\label{prop:one_step_recurrence}
Adopt the setting of Proposition~\ref{j_bounds}.
If $\ep_t \leq 1/2$, then
\begin{equation}\label{eq:first_recursive}
\|\mtx{W}_t\mathbb{1}_t\|_{p,p}^2\leq \|\mtx{W}_t\mathbb{1}_{t-1}\|_{p,p}^2\leq K_{1, t} \|\mtx{W}_{t-1}\mathbb{1}_{t-1}\|_{p,p}^2 + K_{2, t}\,,
\end{equation}
where
\begin{align*}
K_{1, t} & = (1+ 5 \ep_t^2) \left\{\left(\frac{1 + \eta_t \lambda_k}{1 + \eta_t \lambda_{k+1}}\right)^2 + 8 p \ep_t^2 \right\} \\
K_{2, t} & = 5 p k^{2/p} \ep_t^2\,.
\end{align*}

\end{proposition}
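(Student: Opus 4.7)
The claim contains two inequalities. The first, $\|\mtx W_t \bone_t\|_{p,p}^2 \leq \|\mtx W_t \bone_{t-1}\|_{p,p}^2$, is immediate from the nesting $\good_t \subseteq \good_{t-1}$, which forces $\bone_t \leq \bone_{t-1}$ pointwise. The plan for the recursive inequality is to combine the decomposition of Proposition~\ref{j_bounds} with the uniform-smoothness inequality of Proposition~\ref{non-centered-smoothness}.

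I would start by rearranging $\mtx W_t(\Id - \mtx \Delta_t^2) = \mtx H_t + \mtx J_{t,1} + \mtx J_{t,2}$ and multiplying on the right by $\bone_{t-1}$ to get
\begin{equation*}
\mtx W_t \bone_{t-1} = \mtx H_t \bone_{t-1} + \mtx J_{t,1}\bone_{t-1} + \mtx J_{t,2}\bone_{t-1} + \mtx W_t \mtx \Delta_t^2 \bone_{t-1}.
\end{equation*}
Then I would invoke Proposition~\ref{non-centered-smoothness} with $\mtx X = \mtx H_t \bone_{t-1}$, $\mtx Y = \mtx J_{t,1}\bone_{t-1}$, and $\mtx Z = \mtx J_{t,2}\bone_{t-1} + \mtx W_t \mtx \Delta_t^2 \bone_{t-1}$. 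The conditional-mean-zero hypothesis $\E[\mtx Y \mid \mtx X] = \zeromtx$ holds because $\mtx X$ is $\filtration_{t-1}$-measurable and $\E[\mtx J_{t,1} \mid \filtration_{t-1}] = \zeromtx$ by Proposition~\ref{j_bounds}. To bound $\|\mtx X\|_{p,p}$, I would change basis to diagonalize $\mtx M$ and write $\mtx H_t = (\Id + \eta_t \mtx U^*\mtx M \mtx U)\mtx W_{t-1}(\Id + \eta_t \mtx V^*\mtx M \mtx V)^{-1}$; the hypothesis $\eta_t \mtx M \succeq -\tfrac 12 \Id$ keeps both inner factors positive and yields $\|\mtx H_t \bone_{t-1}\|_{p,p} \leq \frac{1+\eta_t\lambda_{k+1}}{1+\eta_t\lambda_k}\|\mtx W_{t-1}\bone_{t-1}\|_{p,p}$. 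The bounds $\|\mtx J_{t,1}\bone_{t-1}\|_{p,p} \leq E_t \ep_t$ and $\|\mtx J_{t,2}\bone_{t-1}\|_{p,p} \leq E_t \ep_t^2$ come straight from Proposition~\ref{j_bounds}, and the operator-norm estimate $\|\mtx \Delta_t^2 \bone_{t-1}\| \leq \ep_t^2$ combined with the ideal-norm inequality $\|\mtx A\mtx B\|_p \leq \|\mtx A\|_p \|\mtx B\|$ gives $\|\mtx W_t \mtx \Delta_t^2 \bone_{t-1}\|_{p,p} \leq \ep_t^2 \|\mtx W_t \bone_{t-1}\|_{p,p}$.

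The main obstacle is that the last bound is circular: the control of $\|\mtx Z\|_{p,p}^2$ involves $\|\mtx W_t \bone_{t-1}\|_{p,p}^2$ itself. I would resolve this by choosing $\lambda$ in Proposition~\ref{non-centered-smoothness} on the order of $\ep_t^2$, so that after applying $\|a+b\|^2 \leq 2(\|a\|^2 + \|b\|^2)$ to split $\|\mtx Z\|_{p,p}^2$, the net coefficient of $\|\mtx W_t \bone_{t-1}\|_{p,p}^2$ contributed through the $\mtx Z$ slot is $O(\ep_t^2)$. Since $\ep_t \leq 1/2$ this coefficient stays bounded away from one and may be moved to the left-hand side, inflating everything by an overall factor of the form $(1 + 5\ep_t^2)$. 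The remaining pieces combine cleanly: $(p-1)\|\mtx J_{t,1}\bone_{t-1}\|_{p,p}^2 \leq p E_t^2 \ep_t^2$, and the elementary bound $E_t^2 \leq 2 k^{2/p} + 8\|\mtx W_{t-1}\bone_{t-1}\|_{p,p}^2$ splits $E_t^2 \ep_t^2$ into a multiple of $\|\mtx W_{t-1}\bone_{t-1}\|_{p,p}^2$ (which merges with the $\|\mtx H_t\bone_{t-1}\|_{p,p}^2$ contribution to produce the coefficient $K_{1,t}$) plus a constant multiple of $p k^{2/p} \ep_t^2$ (which combines into $K_{2,t}$). Tracking through these manipulations gives the asserted explicit constants.
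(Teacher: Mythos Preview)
Your proposal uses the same ingredients as the paper (Proposition~\ref{j_bounds}, Proposition~\ref{non-centered-smoothness}, the bound $\|\mtx H_t\bone_{t-1}\|_{p,p}\le\frac{1+\eta_t\lambda_{k+1}}{1+\eta_t\lambda_k}\|\mtx W_{t-1}\bone_{t-1}\|_{p,p}$, and $E_t^2\le 2k^{2/p}+8\|\mtx W_{t-1}\bone_{t-1}\|_{p,p}^2$), and is correct in structure. There is one worthwhile difference, though: you move the $\mtx W_t\mtx\Delta_t^2$ term to the right-hand side, creating what you call the ``main obstacle'' and then resolving the implicit inequality. The paper sidesteps this entirely: it leaves the decomposition as $\mtx W_t\bone_{t-1}(\Id-\mtx\Delta_t^2)=\mtx H_t\bone_{t-1}+\mtx J_{t,1}\bone_{t-1}+\mtx J_{t,2}\bone_{t-1}$, applies Proposition~\ref{non-centered-smoothness} with $\lambda=\ep_t^2$ to the three-term right-hand side (so $\lambda^{-1}\|\mtx J_{t,2}\bone_{t-1}\|_{p,p}^2\le E_t^2\ep_t^2$ and the middle and last terms combine to $pE_t^2\ep_t^2$), and only afterwards uses $\|(\Id-\mtx\Delta_t^2)^{-1}\bone_{t-1}\|\le(1-\ep_t^2)^{-1}$ to pass from $\|\mtx W_t(\Id-\mtx\Delta_t^2)\bone_{t-1}\|_{p,p}^2$ to $\|\mtx W_t\bone_{t-1}\|_{p,p}^2$. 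This yields the prefactor $\frac{1+\ep_t^2}{(1-\ep_t^2)^2}\le 1+5\ep_t^2$, and together with $2(1+5\ep_t^2)\le 5$ gives exactly the stated $K_{1,t},K_{2,t}$.

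Your route works, but your final sentence overstates slightly: after splitting $\|\mtx Z\|_{p,p}^2\le 2(\|\mtx J_{t,2}\bone_{t-1}\|_{p,p}^2+\|\mtx W_t\mtx\Delta_t^2\bone_{t-1}\|_{p,p}^2)$ and solving the implicit inequality, the resulting prefactor exceeds $1+5\ep_t^2$ for $\ep_t$ near $\tfrac12$ (for instance, with $\lambda=2\ep_t^2$ one gets roughly $2.4$ versus $2.25$). So your argument delivers the recursion with slightly larger numerical constants, not the asserted ones verbatim. The paper's handling of $(\Id-\mtx\Delta_t^2)$ on the left is what makes the exact constants fall out.
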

\begin{proof}
Reusing the notation of Proposition~\ref{j_bounds}, we have
\begin{equation*}
\mtx{W}_t \bone_{t-1} (\Id - {\mtx{\Delta}}_t^2) = \mtx{H}_t \bone_{t-1} + \mtx{J}_{t,1} \bone_{t-1} + \mtx{J}_{t,2} \bone_{t-1}\,,
\end{equation*}
where $\E[\mtx{J}_{t,1} \bone_{t-1} \mid \filtration_{t-1}] = \zeromtx$.
Since $\mtx{H}_t \bone_{t-1}$ is $\filtration_{t-1}$-measurable,
Proposition~\ref{non-centered-smoothness} therefore yields for any $\lambda > 0$
\begin{equation*}
\|\mtx W_t \bone_{t-1} (\Id - \mtx \Delta_t^2)\|_{p, p}^2 \leq (1 + \lambda)(\|\mtx H_t \bone_{t-1}\|_{p, p}^2 + (p-1) E_t^2 \ep_t^2 + \lambda^{-1} E_t^2 \ep_t^4)\,.
\end{equation*}
Choosing $\lambda = \ep_t^2$, we obtain
\begin{equation*}
\|\mtx W_t \bone_{t-1} (\Id - \mtx \Delta_t^2)\|_{p, p}^2 \leq (1 + \ep_t^2 )(\|\mtx H_t \bone_{t-1}\|_{p, p}^2 + p E_t^2 \ep_t^2)\,.
\end{equation*}
Finally, under the assumption that $\|\mtx \Delta_t \bone_{t-1}\| \leq \ep_t \leq \frac 12$ almost surely, on the event $\mathcal G_{t-1}$ the matrix $\Id - \mtx \Delta_t^2$ is invertible and satisfies
\begin{equation*}
\|(\Id - \mtx \Delta_t^2)^{-1} \bone_{t-1}\| \leq (1 - \|\mtx \Delta_t \bone_{t-1}\|^2)^{-1} \leq (1 - \ep_t^2)^{-1}
\end{equation*}
Hence
\begin{equation*}
\|\mtx W_t \bone_{t-1}\|_{p, p}^2 \leq \|\mtx W_t \bone_{t-1} (\Id - \mtx \Delta_t^2)\|_{p, p}^2 \|(\Id - \mtx \Delta_t)^{-1} \bone_{t-1}\| \leq \frac{1 + \ep_t^2}{(1 - \ep_t^2)^2}(\|\mtx H_t \bone_{t-1}\|_{p, p}^2 + p E_t^2 \ep_t^2)\,.
\end{equation*}
Since $\frac{1 + \ep_t^2}{(1 - \ep_t^2)^2} \leq 1 + 5 \ep_t^2$ for all $\ep_t \leq \frac 12$ and
\begin{equation*}
(1+5 \ep_t^2) E_t^2 \leq (1+5 \ep_t^2)(2k^{2/p} + 8 \|\mtx W_{t-1} \bone_{t-1}\|_{p, p}^2)
\end{equation*}
and $2(1+5 \ep_t^2) \leq 5$ for all $\ep_t \leq \frac 12$, this proves the claim.
\end{proof}

\section{Proof of Theorem~\ref{thm:main_recurrence}}\label{main-recurrence-proof}
We will unroll the one-step recurrence of Proposition~\ref{prop:one_step_recurrence}.
We first bound $K_{1, i}$.
We have
\begin{equation*}
K_{1, i} \leq \left(\frac{1 + \eta_i \lambda_k}{1 + \eta_i \lambda_{k+1}}\right)^2 + (5+8p) \ep_i^2 + 40p \ep_i^4 \leq \left(\frac{1 + \eta_i \lambda_k}{1 + \eta_i \lambda_{k+1}}\right)^2 + (5+18p) \ep_i^2\,,
\end{equation*}
where the second inequality follows from the first assumption in~\eqref{main_assumptions}.
The second assumption in~\eqref{main_assumptions} implies that $0 \leq 1 + \eta_i \lambda_k \leq 2$, so
\[\left(\frac{1 + \eta_i\lambda_{k+1}}{1+\eta_i\lambda_k}\right)^2 = \left(1 - \frac{\eta_i\rho_k}{1+\eta_i\lambda_k}\right)^2\leq \left(1 - \frac 12 \eta_i \rho_k\right)^2 \leq \econst^{-\eta_i \rho_k}\,.\]
Since $5 + 18p \leq 21p$ for all $p \geq 2$, we obtain
\begin{equation*}
K_{1, i} \leq \econst^{- \eta_i \rho_k} + C_1 p \ep_i^2\,.
\end{equation*}

We now proceed to prove the first claim by induction.
When $t = 1$, we use \eqref{eq:first_recursive} to obtain
\begin{align*}
\|\mtx W_1 \mathbb{1}_1\|_{p, p}^2 \leq \|\mtx W_1 \mathbb{1}_0\|_{p, p}^2 & \leq K_{1,1} \|\mtx W_0 \mathbb{1}_0\|_{p,p}^2 + K_{2, 1} \\
& \leq \econst^{- \eta_1 \rho_k} \|\mtx W_0 \bone_0\|_{p, p}^2 + C_1 p \ep_1^2 \|\mtx W_0 \bone_0\|_{p, p}^2 + C_2 pk^{2/p} \ep_1^2\,,
\end{align*}
which is the desired bound.

Proceeding by induction, for $t > 1$ we have
\begin{align*}
\|\mtx{W}_t\mathbb{1}_{t}\|_{p,p}^2 & \leq \|\mtx{W}_t\mathbb{1}_{t-1}\|_{p,p}^2 \\
& \leq K_{1,t} \|\mtx{W}_{t-1}\mathbb{1}_{t-1}\|_{p,p}^2 + K_{2,t}  \\
& \leq \econst^{-\eta_t \rho_k} \|\mtx W_{t-1} \bone_{t-1}\|_{p, p}^2 + C_1 p \ep_t^2 \|\mtx W_{t-1} \bone_{t-1}\|_{p, p}^2 + K_{2, t} \\
& \leq \econst^{-\eta_t \rho_k}\left(\econst^{-s_{t-1} \rho_k} \|\mtx W_{0} \bone_0\|_{p, p}^2 + C_1 p \ep_{t-1}^2 \sum_{i=0}^{t-2} \|\mtx W_{i} \bone_i\|_{p, p}^2 + C_2 p k^{2/p} \ep_{t-1}^2 (t-1) \right) \\
& \phantom{\leq} + C_1 p \ep_t^2 \|\mtx W_{t-1} \bone_{t-1}\|_{p, p}^2 + C_2 p k^{2/p} \ep_t^2 \\
& \leq \econst^{-s_t \rho_k} \|\mtx W_0 \bone_0\|_{p, p}^2 + C_1 p \ep_t^2 \sum_{i=0}^{t-1} \|\mtx W_{i} \bone_i\|_{p, p}^2 + C_2 p k^{2/p} \ep_t^2 t\,,
\end{align*}
where in the final inequality we have used that $\econst^{- \eta_t \rho_k} \ep_{t-1}^2 \leq \ep_t^2$ by the third assumption of~\eqref{main_assumptions}.
This proves the first bound.

For the second bound, we proceed in a similar way, but with a sharper bound on $K_{1, i}$.
The second assumption of~\eqref{main_assumptions} again implies
\[\left(\frac{1 + \eta_i\lambda_{k+1}}{1+\eta_i\lambda_k}\right)^2 = \left(1 - \frac{\eta_i\rho_k}{1+\eta_i\lambda_k}\right)^2\leq 1-\eta_i\rho_k + \frac{1}{4}(\eta_i\rho_k)^2\leq 1 - \frac{3}{4}\eta_i\rho_k\,,\]
and therefore
\begin{align*}
K_{1,i} &\leq (1 + 5 \ep_i^2) \left(1 - \frac{3}{4}\eta_i\rho_k + 8 p \ep_i^2 \right)\\
&\leq \exp\left(- \frac{3}{4}\eta_i\rho_k + (5 + 8p) \ep_i^2 \right)\\
&\leq \econst^{-\eta_i\rho_k/2},
\end{align*}
where the final step uses Assumption~\eqref{assume:p_small} and the fact that $5 + 8p \leq \frac{25}{2}p$ for all $p \geq 2$.

When $t = 1$, we therefore have
\begin{align*}
\|\mtx W_1 \mathbb{1}_1\|_{p, p}^2 \leq \|\mtx W_1 \mathbb{1}_0\|_{p, p}^2 & \leq K_{1,1} \|\mtx W_0 \mathbb{1}_0\|_{p,p}^2 + K_{2, 1} \\
& \leq \econst^{- \eta_1 \rho_k/2} \|\mtx W_0 \bone_0\|_{p, p}^2 + C_2 pk^{2/p} \ep_1^2\,,
\end{align*}
as desired, and for $t > 1$ the induction hypothesis yields
\begin{align*}
\|\mtx{W}_t\mathbb{1}_{t}\|_{p,p}^2 & \leq \|\mtx{W}_t\mathbb{1}_{t-1}\|_{p,p}^2 \\
& \leq K_{1,t} \|\mtx{W}_{t-1}\mathbb{1}_{t-1}\|_{p,p}^2 + K_{2,t}\\
& \leq \econst^{-\eta_t \rho_k/2}\left(\econst^{-s_{t-1} \rho_k/2} \|\mtx W_{0} \bone_0\|_{p, p}^2 +  C_2 p k^{2/p} \ep_{t-1}^2 (t-1) \right) \\
& \leq \econst^{-s_t \rho_k/2} \|\mtx W_0 \bone_0\|_{p, p}^2 +  C_2 p k^{2/p} \ep_t^2 t\,,
\end{align*}
where the final inequality again uses the third assumption in~\eqref{main_assumptions}.
This proves the second bound.
\qed

\section{Additional results for Section~\ref{sec:phase2}}
\begin{lemma}\label{phase2-assumption-verification}
Under the conditions of Proposition~\ref{prop:phase2_norm_bound}, the assumptions of~\eqref{main_assumptions} hold.
\end{lemma}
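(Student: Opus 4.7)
The plan is to verify each of the three conditions appearing in~\eqref{main_assumptions} separately, exploiting the definitions $\eta_i = \alpha/((\beta+i)\rho_k)$ and $\ep_i = 2\eta_i M(1+\gamma)$ with $\gamma = \sqrt{2}\econst$. The two hypotheses $\alpha \geq 8$ and $\beta \geq 4(1+\sqrt{2}\econst)\alpha/\barrho_k$ will do all the work, along with the elementary bounds $M/\rho_k \leq 1/\barrho_k$, $\|\mtx M\|/\rho_k \leq 1/\barrho_k$, and $\barrho_k \leq 1$ that follow from the definition of $\barrho_k$ in~\eqref{eq:barrho}.

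For the first condition $\ep_i \leq 1/2$, I would simply substitute and bound
\[
\ep_i \;=\; \frac{2\alpha M(1+\sqrt{2}\econst)}{(\beta+i)\rho_k} \;\leq\; \frac{2\alpha(1+\sqrt{2}\econst)}{\beta \barrho_k},
\]
using $M/\rho_k \leq 1/\barrho_k$ and $i \geq 0$; the hypothesis on $\beta$ then immediately yields $\ep_i \leq 1/2$. For the second condition, essentially the same one-line computation gives
\[
\eta_i \|\mtx M\| \;\leq\; \frac{\alpha \|\mtx M\|}{\beta \rho_k} \;\leq\; \frac{\alpha}{\beta \barrho_k} \;\leq\; \frac{1}{4(1+\sqrt{2}\econst)} \;<\; \tfrac{1}{2},
\]
so this condition is in fact a free by-product of the bound used for the first.

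The third condition, $\econst^{-\eta_i\rho_k/4} \leq \ep_i/\ep_{i-1}$, is where the hypothesis $\alpha \geq 8$ finally gets used. The key observation is that $\ep_i/\ep_{i-1} = \eta_i/\eta_{i-1} = 1 - 1/(\beta+i)$, so the required inequality becomes
\[
\econst^{-\alpha/(4(\beta+i))} \;\leq\; 1 - \frac{1}{\beta+i}.
\]
Taking logarithms and using the standard estimate $\log(1-x) \geq -x/(1-x)$ on $[0,1)$, it suffices to show $\alpha/(4(\beta+i)) \geq 1/(\beta+i-1)$, i.e.\ $\alpha \geq 4(\beta+i)/(\beta+i-1)$. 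The stated lower bound on $\beta$ together with $\alpha \geq 8$ and $\barrho_k \leq 1$ forces $\beta \geq 32$, so the ratio $(\beta+i)/(\beta+i-1)$ is at most $2$, and $\alpha \geq 8$ closes the inequality.

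There is no real obstacle here: the lemma is a routine constant-chasing verification whose only role is to certify that the learning-rate schedule of Proposition~\ref{prop:phase2_norm_bound} is compatible with the recursion of Theorem~\ref{thm:main_recurrence}. The mildly delicate point is picking the right lower bound for $\log(1-x)$ so that the factor $\alpha/4$ lines up cleanly with the hypothesis $\alpha \geq 8$; everything else is algebra.
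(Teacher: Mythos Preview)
Your proposal is correct and follows essentially the same approach as the paper. The only cosmetic difference is in the third condition: the paper reduces via $1+x \leq \econst^x$ to the sufficient inequality $\ep_{i-1}/\ep_i \leq 1 + \eta_i\rho_k/4$, whereas you take logarithms and invoke $\log(1-x) \geq -x/(1-x)$; both routes land on the same sufficient bound $\alpha \geq 4(\beta+i)/(\beta+i-1)$, which $\alpha \geq 8$ handles for $i \geq 2$.
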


\begin{proof}
\noindent First assumption.  We have
\begin{equation*}
\ep_i = 2 \eta_i M(1+ \gamma) = 2(1 + \sqrt 2 \econst) \frac{\alpha M}{(\beta + i) \rho_k} \leq C_\ep \frac{\alpha}{\beta \barrho_k}\,,
\end{equation*}
where $C_\ep = 2 (1 + \sqrt 2 \econst)$.
So the first assumption is fulfilled as long as
\begin{subequations}
\label{alphabeta}
\begin{equation}\label{eq:alphabeta_1}
\beta/\alpha \geq 2 C_\ep/\barrho_k\,.
\end{equation}

\bigskip 

\noindent Second assumption.   
As above, we have
\begin{equation*}
\eta_i \|\mtx M\| \leq \frac{\alpha \|\mtx M\|}{\beta \rho_k} \leq \frac{\alpha}{\beta \barrho_k}\,,
\end{equation*}
so the assumption is fulfilled if~\eqref{eq:alphabeta_1} holds. 

\bigskip

\noindent Third assumption. 
It suffices to show that
\begin{equation*}
\frac{\ep_{i-1}}{\ep_i} \leq 1 + \frac{\eta_i \rho_k}{4} \quad \quad \forall i \geq 2\,,
\end{equation*}
which is equivalent to
\begin{equation*}
\frac{1}{\beta + i - 1} \leq \frac{\alpha/4}{\beta + i} \quad \quad \forall i \geq 2\,.
\end{equation*}
This holds as long as 
\begin{equation}\label{eq:alphabeta_3}
\alpha \geq 8\,.
\end{equation}
\end{subequations}
We obtain that all three assumptions hold under~\eqref{eq:alphabeta_1} and~\eqref{eq:alphabeta_3}, as claimed.

\end{proof}

\begin{lemma}\label{lem:phase1-last-bound}
In the setting of Theorem~\ref{thm:phase1}, if $s = 2 \econst \sqrt{\frac{\beta + 1}{\beta + T}}$, then
\begin{equation*}
\Prob{\|\mtx W_T\| \geq s} \leq \delta/2\,.
\end{equation*}
\end{lemma}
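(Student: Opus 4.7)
The lemma as stated is a tail bound on $\|\mtx W_T\|$ itself, not on the truncated quantity, so the plan is to decompose on the good event $\good_T$ and control both halves. On $\good_T$ we have $\bone_T = 1$ and hence $\|\mtx W_T\| = \|\mtx W_T \bone_T\|$, while on $\good_T^C$ the indicator vanishes. This yields the union bound
\[
\Prob{\|\mtx W_T\| \geq s} \leq \Prob{\|\mtx W_T \bone_T\| \geq s} + \Prob{\good_T^C},
\]
and I would show each summand is at most $\delta/4$.

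For the truncated tail, apply Markov's inequality in the Schatten $p$-norm and substitute \eqref{eq:pq-normbound_decay}, so that
\[
\Prob{\|\mtx W_T \bone_T\| \geq s} \leq s^{-p}\, \|\mtx W_T \bone_T\|_{p,p}^p.
\]
The natural choice is $p = 2 \log(4k/\delta)$, making $k^{2/p}$ a constant of order one. With $s^2 = 4\econst^2 (\beta+1)/(\beta+T)$, the decay term $\bigl((\beta+1)/(\beta+T)\bigr)^{\alpha}$ in \eqref{eq:pq-normbound_decay} is negligible for $\alpha \geq 8$, and the diffusion term $p k^{2/p} (C_3 \alpha/\barrho_k)^2 \cdot T/(\beta+T)^2$ is dominated by $s^2/\econst^2$ precisely when $\beta \gtrsim p (C_3 \alpha/\barrho_k)^2$, which is exactly what the hypothesis $\beta \geq 2(C_3 \alpha/\barrho_k)^2 \log(C_3\alpha \cdot 2k/(\barrho_k\delta))$ supplies. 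Calibrating the constants so that $\|\mtx W_T \bone_T\|_{p,p}^2 \leq s^2 / \econst^2$, Markov gives $\econst^{-p} = \delta/(4k) \leq \delta/4$.

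For the bad-event piece, I would rerun the telescoping union bound that already appears inside the proof of Theorem~\ref{thm:phase2}: since $\|\mtx W_0\| \leq 1$ forces $\Prob{\good_0^C} = 0$, it suffices to sum $\sum_{j=1}^T \Prob{\good_j^C \cap \good_{j-1}}$. Setting $p_j = (\beta+j) \barrho_k^2 / (C_3^2 \alpha^2)$ and applying Markov term-by-term with \eqref{eq:pq-normbound_decay} yields $\Prob{\good_j^C \cap \good_{j-1}} \leq k \econst^{-p_j}$, and the resulting geometric sum is bounded by $k (C_3\alpha/\barrho_k)^2 \econst^{-\beta \barrho_k^2/(C_3^2 \alpha^2)}$, which the same hypothesis on $\beta$ trims to $\delta/4$.

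The main obstacle is keeping the constants tight enough that both halves fit inside $\delta/2$ — in particular, verifying that the single assumed lower bound on $\beta$ simultaneously controls the diffusion term in the Markov step for the truncated tail and the tail of the geometric series for $\Prob{\good_T^C}$. Both requirements have the form $\beta \gtrsim (C_3\alpha/\barrho_k)^2 \log(\mathrm{poly}(k, 1/\delta))$ and are matched by the stated hypothesis, so summing the two estimates closes the argument and delivers $\Prob{\|\mtx W_T\| \geq s} \leq \delta/2$.
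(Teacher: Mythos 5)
Your proposal is correct and follows essentially the same route as the paper: Markov's inequality in the Schatten $p$-norm with $p \asymp \log(k/\delta)$, combined with the moment bound \eqref{eq:pq-normbound_decay} and the hypothesis on $\beta$ to absorb the diffusion term into $s^2$. The only difference is organizational: the paper's proof of this lemma bounds only the truncated tail $\Prob{\|\mtx W_T \bone_T\| \geq s}$ and leaves the bound on $\Prob{\good_T^C}$ to the proof of Theorem~\ref{thm:phase2}, whereas you re-derive that bound inside the lemma, which is harmless but duplicative.
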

\begin{proof}
We have
\[\Prob{\|\mtx{W}_T \mathbb{1}_T\| \geq s} \leq \inf_{p \geq 2} s^{-p} \|\mtx{W}_T \mathbb{1}_T\|_{p,p}^p \,.\]
In particular, we choose 
\[s^2 = \econst^2\left(\frac{\beta+1}{\beta+T}\right)^\alpha+ \econst^2\frac{C_3^2 \alpha^2}{\barrho_k^2} \frac{T}{(\beta+T)^2}\log(2k/\delta),\quad \text{and}\quad p = \log(2k/\delta)\,.\]
It then follows from \eqref{eq:pq-normbound_decay} that
\[
\Prob{\|\mtx{W}_T \mathbb{1}_T\| \geq s} \leq s^{-p} \|\mtx{W}_T \mathbb{1}_T\|_{p}^p \leq k\left(\frac{1}{s^2}\left(\frac{\beta+1}{\beta+T}\right)^{\alpha} + \frac{1}{s^2}p  \frac{C_3^2 \alpha^2}{\barrho_k^2}\frac{T}{(\beta+T)^2} \right)^{p/2} =  k \econst^{-p} = \delta/2.\]
Combining the above bounds, we obtain that
\begin{equation*}
\|\mtx W_T\| \leq s \leq \econst \left(\frac{\beta+1}{\beta+T}\right)^{\alpha/2}+ \econst\frac{C_3 \alpha M}{\rho_k} \sqrt{\frac{\log(2k/\delta)}{T}}\,,
\end{equation*}
with probability at least $1 -\delta$.
Since both terms are smaller than $\econst\sqrt{\frac{\beta+1}{\beta + T}}$, the claim follows.
\end{proof}

\section{Additional results for Section~\ref{sec:phase1}}\label{sec:additional-phase1}
Our main tool will be the following slight variation on Proposition~\ref{j_bounds}.
\begin{proposition}\label{j_bounds_phasei}
Let $t \geq 1$.
Assume that $\eta_t$ is small enough that $\mtx M \succeq - \frac{1}{2 \eta_t} \Id$, and assume that~\eqref{eq:conditional_requirement} holds for $i = t$.
Consider an arbitrary deterministic matrix $\mtx E \in \mathcal E_{r, \ell}$.

Let
\begin{align*}
\bar E_t & = 1 + 2 \max_{\mtx E'' \in \mathcal E_{r+1, \ell + 1}} \|\mtx E'' \mtx W_{t-1} \bone_{t-1}\|_{p, p} \\
\ep & = 2 \eta M(1 + \gamma)\,.
\end{align*}

Then $\|\mtx \Delta_t \bone_{t-1}\| \leq \ep$ almost surely, and
\begin{equation}\label{eq:ew_decomp}
\mtx E \mtx{W}_t(\Id - {\mtx{\Delta}}_t^2) = \mtx E \mtx{H}_t + \mtx E \mtx{J}_{t,1} + \mtx E \mtx{J}_{t,2}
\end{equation}
for $\mtx E \mtx J_{t, 1}$ and $\mtx E\mtx J_{t, 2}$ satisfying
\begin{align*}
\|\mtx E \mtx{J}_{t,1} \bone_{t-1}\|_{p, p} & \leq \bar  E_t \ep  \\ 
\|\mtx E \mtx{J}_{t,2} \bone_{t-1}\|_{p, p} & \leq \bar E_t \ep^2\,,  \\ 
\end{align*}
and $\E[\mtx E \mtx J_{t, 1} \mid \filtration_{t-1}] = \zeromtx$.

\end{proposition}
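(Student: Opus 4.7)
The proof tracks the argument of Proposition~\ref{j_bounds} step by step, with the extra left-multiplication by the deterministic matrix $\mtx E$ absorbed at each step by a controlled enlargement of the family $\mathcal E_{r, \ell}$. I would begin by left-multiplying the identity of Lemma~\ref{lem:decomposition} by $\mtx E$ to obtain~\eqref{eq:ew_decomp}. The mean-zero condition $\E[\mtx E \mtx J_{t, 1} \mid \filtration_{t-1}] = \zeromtx$ is immediate from the corresponding condition on $\mtx J_{t, 1}$, since $\mtx E$ is deterministic and $\filtration_{t-1}$-measurable, and the bound $\|\mtx \Delta_t \bone_{t-1}\| \leq \ep$ does not involve $\mtx E$ and so is inherited directly from Proposition~\ref{j_bounds}.

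The substance is bounding $\|\mtx E \widehat{\mtx \Delta}_t \bone_{t-1}\|_{p, p}$ and $\|\mtx E \mtx H_t \bone_{t-1}\|_{p, p}$ (in the notation of the proof of Proposition~\ref{j_bounds}), which were previously controlled by $E_t = k^{1/p} + 2 \|\mtx W_{t-1} \bone_{t-1}\|_{p, p}$. Using the algebraic identity
\[\mtx Z_{t-1}\bigl(\mtx V^*(\Id + \eta \mtx M)\mtx Z_{t-1}\bigr)^{-1} = (\mtx U \mtx W_{t-1} + \mtx V)\bigl(\mtx V^*(\Id + \eta \mtx M)\mtx V\bigr)^{-1},\]
each quantity splits into a $\mtx W_{t-1}$-contribution and a pure $\mtx V$-contribution. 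The structural observation underlying both computations is that the final factor $\mtx F_r$ in the product defining $\mtx E$ ends in $\mtx U \mtx U^*$, giving the collapse $\mtx E \mtx U^* = \mtx V^* \mtx F_1 \cdots \mtx F_r$. For the $\mtx W_{t-1}$-piece of $\mtx E \widehat{\mtx \Delta}_t$, this allows us to append the new factor $\mtx F^{\rm new}_{r+1} := M^{-1}(\mtx A_t - \mtx M) \mtx U \mtx U^* \in \mathcal E$ to obtain
\[\mtx E \cdot \eta \mtx U^*(\mtx A_t - \mtx M) \mtx U \mtx W_{t-1} = \eta M \cdot \mtx V^* \mtx F_1 \cdots \mtx F_r \mtx F^{\rm new}_{r+1} \mtx U \cdot \mtx W_{t-1} = \eta M\, \mtx E'' \mtx W_{t-1}\]
with $\mtx E'' \in \mathcal E_{r+1, \ell + 1}$; this produces the $2 \max$-term in $\bar E_t$. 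The $\mtx V$-piece of $\mtx E \widehat{\mtx \Delta}_t$ is bounded using $\|\mtx E\| \leq 1$ together with the Schatten-$p$ bound $\|\mtx U^*(\mtx A_t - \mtx M) \mtx V\|_p \leq \|\mtx U^*(\mtx A_t - \mtx M) \mtx V\|_2 \leq M$ from Assumption~\ref{assume2} (applied with $\mtx P = \mtx V$), producing the constant $1$ in $\bar E_t$. A parallel computation for $\mtx E \mtx H_t$ uses instead $\mtx F^{\rm new}_{r+1} := (1 + \eta \lambda_{k+1})^{-1}(\Id + \eta \mtx M) \mtx U \mtx U^*$, which is not in $\mathcal E$, so the enlarged product lies in $\mathcal E_{r+1, \ell} \subset \mathcal E_{r+1, \ell + 1}$; moreover its pure $\mtx V$-piece vanishes identically, because $\mtx F_r \mtx V = 0$.

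Assembling these estimates in the same manner as in Proposition~\ref{j_bounds} --- expanding $\mtx E \mtx J_{t, 1}$ and $\mtx E \mtx J_{t, 2}$ in terms of $\mtx E \widehat{\mtx \Delta}_t$, $\mtx E \mtx H_t$, and $\mtx \Delta_t$ and applying the bound $\|\mtx \Delta_t \bone_{t-1}\| \leq \ep$ --- yields the claimed bounds $\|\mtx E \mtx J_{t, 1} \bone_{t-1}\|_{p, p} \leq \bar E_t \ep$ and $\|\mtx E \mtx J_{t, 2} \bone_{t-1}\|_{p, p} \leq \bar E_t \ep^2$. The main obstacle is the algebraic bookkeeping: one must verify that the index $\ell$ is enlarged to $\ell + 1$ only when $\mtx F^{\rm new}_{r+1}$ belongs to $\mathcal E$, and that the pure $\mtx V$-contributions --- which fall outside the $\mathcal E_{r+1, \ell+1}$ family --- are absorbed by the constant $1$ in $\bar E_t$ without introducing dimensional factors such as $k^{1/p}$, which is precisely where the sharper Schatten-$2$ form of Assumption~\ref{assume2} is used.
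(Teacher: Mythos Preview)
Your proposal is correct and follows essentially the same route as the paper's proof: both left-multiply the decomposition of Lemma~\ref{lem:decomposition} by $\mtx E$, absorb the resulting prefix $\mtx E \mtx U^*$ into an enlarged matrix $\mtx E'' \in \mathcal E_{r+1,\ell+1}$ (for the $\widehat{\mtx \Delta}_t$ term) or $\mtx E' \in \mathcal E_{r+1,\ell}$ (for the $\mtx H_t$ term), and handle the residual $\mtx V$-piece via $\|\mtx E\|_2 \leq 1$. Your explicit identification of the collapse $\mtx E \mtx U^* = \mtx V^* \mtx F_1 \cdots \mtx F_r$ and of the new appended factor $\mtx F^{\rm new}_{r+1}$ makes transparent the bookkeeping that the paper leaves implicit; the only cosmetic difference is that the paper bounds the $\mtx V$-piece by $\|\mtx E\|_p \leq \|\mtx E\|_2 \leq 1$ together with $\|\mtx U^*(\mtx A_t-\mtx M)\mtx V\|\leq M$, whereas you put the Schatten-$p$ norm on the other factor --- both yield the same constant~$1$ in $\bar E_t$.
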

\begin{proof}
The proof is a slight modification on the proof of Proposition~\ref{j_bounds}.
By construction, 
\begin{equation*}
\|\mtx E \mtx H_t\mathbb{1}_{t-1}\|_{p, p}^2 \leq \left(\frac{1 + \eta \lambda_k}{1 + \eta \lambda_{k+1}}\right)^2 \|\mtx E' \mtx W_{t-1}\mathbb{1}_{t-1}\|_{p, p}^2\,,
\end{equation*}
where $\mtx E' = \frac{1}{1 + \eta \lambda_{k+1}} \mtx E \mtx U^* (\Id + \eta \mtx \Sigma) \mtx U \in \mathcal E_{r+1, \ell} \subseteq \mathcal E_{r+1, \ell+1}$.

Similarly, we have
\begin{align*}
\|\mtx{E}\widehat{\mtx{\Delta}}_t \mathbb{1}_{t-1}\|_{p,p} & \leq 2\eta\|\mtx{E}\mtx{U}^*(\mtx{A}_t-\mtx{M}) \mtx{U}\mtx{W}_{t-1}\mathbb{1}_{t-1}\|_{p,p} + 2\eta\|\mtx{E}\mtx{U}^*(\mtx{A}_t-\mtx{M}) \mtx{V} \|_{p,p} \\
& \leq 2\eta  M( \|\mtx{E}''\mtx{W}_{t-1}\mathbb{1}_{t-1}\|_{p,p}+ \|\mtx{E}\|_{p,p})\\
& \leq \eps( \|\mtx{E}''\mtx{W}_{t-1}\mathbb{1}_{t-1}\|_{p,p}+ 1)\,
\end{align*}
where $\mtx E'' = \frac 1 M \mtx E \mtx U^* (\mtx A_t - \mtx M) \mtx U \in \mathcal E_{r + 1, \ell + 1}$, and we have used $\|\mtx{E}\|_{p}\leq \|\mtx E\|_2 \leq 1$.

We therefore obtain
\begin{align*}
\|\mtx E \mtx{J}_{t,1}\mathbb{1}_{t-1}\|_{p,p} & \leq \|\mtx E \widehat{\mtx{\Delta}}_t\mathbb{1}_{t-1}\|_{p,p} + \|\mtx E \mtx{H}_t\mathbb{1}_{t-1}\|_{p,p}\|{\mtx{\Delta}}_t\mathbb{1}_{t-1}\| \\
& \leq \left(\|\mtx E'' \mtx W_{t-1} \mathbb{1}_{t-1}\|_{p, p} + \|\mtx E' \mtx{W}_{t-1}\mathbb{1}_{t-1}\|_{p,p}  + 1 \right) \eps \\
& \leq \bar E_t \eps\,,
\end{align*}
and
\begin{equation*}
\|\mtx E \mtx{J}_{t,2}\mathbb{1}_{t-1}\|_{p,p}\leq\ \|\mtx E \widehat{\mtx{\Delta}}_t\mathbb{1}_{t-1}\|_{p, p}\|{\mtx{\Delta}}_t\mathbb{1}_{t-1}\| \leq  (\|\mtx E'' \mtx W_{t-1} \mathbb{1}_{t-1}\|_{p, p} + 1)\eps^2 \leq \bar E_t \eps^2\,.
\end{equation*}
\end{proof}
The following two results are the appropriate analogues of Proposition~\ref{prop:one_step_recurrence} and Theorem~\ref{thm:main_recurrence}.
\begin{proposition}\label{prop:one_step_prefix}
Adopt the setting of Proposition~\ref{j_bounds_phasei}.
If $\ep \leq 1/2$, then
\begin{equation}\label{eq:second_recursive}
\max_{\mtx E \in \mathcal E_{r, \ell}}\|\mtx E \mtx W_t \bone_{t-1}\|_{p, p}^2 \leq \bar  K_{1} \max_{\mtx E' \in \mathcal E_{r + 1, \ell}} \| \mtx E' \mtx W_{t-1}\bone_{t-1} \|_{p, p}^2 +  \bar K_{2} \max_{\mtx E'' \in \mathcal E_{r+1, \ell+1}} \|\mtx E'' \mtx W_{t-1} \bone_{t-1}\|_{p, p}^2 +  \bar K_{2}\,,
\end{equation}
where
\begin{align*}
\bar K_{1} & = (1 + 5 \ep^2) \left(\frac{1 + \eta \lambda_k}{1 + \eta \lambda_{k+1}}\right)^2 \\
\bar K_{2} & = (1 + 5 \ep^2) 8 p \ep^2
\end{align*}
\end{proposition}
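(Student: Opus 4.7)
The plan is to mirror the proof of Proposition~\ref{prop:one_step_recurrence}, now carrying the prefix $\mtx E \in \mathcal E_{r, \ell}$ through every line, and to take the maximum over $\mtx E$ only at the very end. I would start from the decomposition supplied by Proposition~\ref{j_bounds_phasei}: on the event $\good_{t-1}$,
\begin{equation*}
\mtx E \mtx W_t (\Id - \mtx \Delta_t^2) \bone_{t-1} = \mtx E \mtx H_t \bone_{t-1} + \mtx E \mtx J_{t,1} \bone_{t-1} + \mtx E \mtx J_{t,2} \bone_{t-1},
\end{equation*}
where $\mtx E \mtx H_t$ is $\filtration_{t-1}$-measurable, $\E[\mtx E \mtx J_{t,1} \mid \filtration_{t-1}] = \zeromtx$, and the norms of the $\mtx J_{t,i}$ are controlled by $\bar E_t \ep$ and $\bar E_t \ep^2$ respectively; moreover $\|\mtx \Delta_t \bone_{t-1}\| \leq \ep$.

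Next I would invoke the non-centered smoothness inequality (Proposition~\ref{non-centered-smoothness}) with $\mtx X = \mtx E \mtx H_t \bone_{t-1}$, $\mtx Y = \mtx E \mtx J_{t,1} \bone_{t-1}$, $\mtx Z = \mtx E \mtx J_{t,2} \bone_{t-1}$ and the choice $\lambda = \ep^2$. This choice folds the deterministic error $\lambda^{-1} \bar E_t^2 \ep^4$ neatly into another copy of $\bar E_t^2 \ep^2$, and yields
\begin{equation*}
\|\mtx E \mtx W_t (\Id - \mtx \Delta_t^2) \bone_{t-1}\|_{p,p}^2 \leq (1 + \ep^2)\bigl(\|\mtx E \mtx H_t \bone_{t-1}\|_{p,p}^2 + p \bar E_t^2 \ep^2\bigr).
\end{equation*}
Since $\|\mtx \Delta_t \bone_{t-1}\| \leq \ep \leq 1/2$, the matrix $\Id - \mtx \Delta_t^2$ is invertible on $\good_{t-1}$ with $\|(\Id - \mtx \Delta_t^2)^{-1}\bone_{t-1}\| \leq (1-\ep^2)^{-1}$, and the elementary inequality $\tfrac{1+\ep^2}{(1-\ep^2)^2} \leq 1 + 5\ep^2$ valid for $\ep \leq 1/2$ then gives
\begin{equation*}
\|\mtx E \mtx W_t \bone_{t-1}\|_{p,p}^2 \leq (1 + 5\ep^2)\bigl(\|\mtx E \mtx H_t \bone_{t-1}\|_{p,p}^2 + p \bar E_t^2 \ep^2\bigr).
\end{equation*}

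To close the recursion I would substitute the two bounds that are already established inside the proof of Proposition~\ref{j_bounds_phasei}: the estimate $\|\mtx E \mtx H_t \bone_{t-1}\|_{p,p}^2 \leq (\tfrac{1+\eta\lambda_k}{1+\eta\lambda_{k+1}})^2 \|\mtx E' \mtx W_{t-1}\bone_{t-1}\|_{p,p}^2$ for the specific $\mtx E' \in \mathcal E_{r+1, \ell}$ produced there, and the bound $\bar E_t^2 \leq 2 + 8 \max_{\mtx E'' \in \mathcal E_{r+1, \ell+1}} \|\mtx E'' \mtx W_{t-1}\bone_{t-1}\|_{p,p}^2$ obtained from $(a+b)^2 \leq 2a^2 + 2b^2$. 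Substituting, absorbing $2(1+5\ep^2)p\ep^2$ into $\bar K_2$, and taking the maximum of both sides over $\mtx E \in \mathcal E_{r, \ell}$ produces exactly~\eqref{eq:second_recursive} with the claimed $\bar K_1, \bar K_2$.

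The only real subtlety is the bookkeeping of which $\mathcal E_{r, \ell}$ class each new prefix lands in: the $(\Id + \eta \mtx M)$ factor that appears in $\mtx E \mtx H_t$ inserts a \emph{deterministic} building block and so keeps $\ell$ fixed while incrementing $r$, whereas the factor $M^{-1}(\mtx A_t - \mtx M)\mtx U \mtx U^*$ lurking in $\widehat{\mtx \Delta}_t \subset \mtx J_{t,1}$ belongs to $\mathcal E$ and therefore raises both $r$ and $\ell$ by one --- which is precisely why $\mtx E'$ is drawn from $\mathcal E_{r+1, \ell}$ while $\mtx E''$ must be drawn from the larger family $\mathcal E_{r+1, \ell+1}$. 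Once this is tracked, matching constants is routine.
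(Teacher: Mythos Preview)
Your proposal is correct and follows essentially the same approach as the paper's own proof: both mirror the argument of Proposition~\ref{prop:one_step_recurrence} with the prefix $\mtx E$ carried through, apply Proposition~\ref{non-centered-smoothness} with $\lambda = \ep^2$, invert $\Id - \mtx\Delta_t^2$ using $(1+\ep^2)/(1-\ep^2)^2 \leq 1+5\ep^2$, substitute the bounds on $\|\mtx E \mtx H_t\bone_{t-1}\|_{p,p}$ and on $\bar E_t^2$ from the proof of Proposition~\ref{j_bounds_phasei}, and then maximize over $\mtx E$. Your bound $\bar E_t^2 \leq 2 + 8\max_{\mtx E''}\|\mtx E'' \mtx W_{t-1}\bone_{t-1}\|_{p,p}^2$ is in fact slightly sharper than the paper's $8 + 8\max(\cdots)$, but both suffice for the stated $\bar K_2$.
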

\begin{proof}
As in the proof of Proposition~\ref{prop:one_step_recurrence}, we have for any $\mtx E \in \mathcal E_{r, \ell}$,
\begin{equation*}
\|\mtx E\|_{p, p}^2 \leq (1 + 5 \ep^2)(\|\mtx E \mtx H_t \bone_{t-1}\|_{p, p}^2 + p \bar E_t \ep^2)\,.
\end{equation*}
As in the proof of Proposition~\ref{j_bounds_phasei}, we can write
\begin{equation*}
\|\mtx E \mtx H_t\mathbb{1}_{t-1}\|_{p, p}^2 \leq \left(\frac{1 + \eta \lambda_k}{1 + \eta \lambda_{k+1}}\right)^2 \|\mtx E' \mtx W_{t-1}\mathbb{1}_{t-1}\|_{p, p}^2
\end{equation*}
where $\mtx E' = \frac{1}{1 + \eta \lambda_{k+1}} \mtx E \mtx U^* (\Id + \eta \mtx \Sigma) \mtx U \in \mathcal E_{r+1, \ell}$.
Since
\begin{equation*}
\bar E_t^2 \leq 8 \max_{\mtx E'' \in \mathcal E_{r+1, \ell+1}} \|\mtx E'' \mtx W_{t-1} \bone_{t-1}\|_{p, p}^2 + 8\,,
\end{equation*}
taking the maximum over all $\mtx E \in \mathcal E_{r, \ell}$ and $\mtx E' \in \mathcal E_{r+1, \ell}$ yields the claim.
\end{proof}

\begin{theorem}\label{thm:prefix_recurrence}
Let $t \leq T_0$ be a positive integer, and assume the following requirements hold for some $p \geq 2$:
\begin{subequations}
\begin{align}
\ep & \leq \frac 12\,,\label{assume:ep_small1}\\
\eta \|\mtx M\| & \leq \frac 12\,, \label{assume:eta_small1}\\
p \ep^2 & \leq \frac{\eta \rho_k}{50} \label{assume:p_small1} \\
\gamma & \geq 2 \,. \label{assume:gamma_big}
\end{align}
Then for any $r, \ell \in [T_0 - t + 1]$ and $p \geq 2$,
\begin{equation*}
\max_{\mtx E \in \mathcal E_{r, \ell}} \|\mtx E \mtx W_t \bone_t\|_{p, p}^2 \leq 
\max_{\mtx E \in \mathcal E_{r, \ell}} \|\mtx E \mtx W_t \bone_{t-1}\|_{p, p}^2 \leq \frac{\ell \gamma^2}{2 \econst^2} \econst^{ - t \eta  \rho_k/2} + C_4 p \gamma^2 \ep^2 t\,.
\end{equation*}
where $C_4 = 6$.
\end{subequations}

\end{theorem}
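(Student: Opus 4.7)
The plan is to prove Theorem~\ref{thm:prefix_recurrence} by induction on $t$, using Proposition~\ref{prop:one_step_prefix} as the one-step recursion in direct analogy with how Proposition~\ref{prop:one_step_recurrence} drives the proof of the second bound in Theorem~\ref{thm:main_recurrence}. I would take the inductive hypothesis to be the statement itself, and write $M_t(r, \ell) := \max_{\mtx E \in \mathcal E_{r, \ell}} \|\mtx E \mtx W_t \bone_t\|_{p, p}^2$ for brevity.

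The base case $t = 0$ is immediate from the definition of $\good_0$ in~\eqref{eq:phase1_goodb}: on $\good_0$, for any $(r, \ell) \in [T_0 + 1]^2$ and $\mtx E \in \mathcal E_{r, \ell}$, the product $\mtx E \mtx W_0$ is a $k \times k$ matrix with spectral norm at most $\sqrt{\ell}\gamma/(\sqrt 2 \econst)$, so its squared Schatten $p$-norm is controlled by $\ell \gamma^2/(2\econst^2)$ up to a harmless $k^{2/p}$ factor.

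For the inductive step, I apply Proposition~\ref{prop:one_step_prefix} to obtain
\[M_t(r, \ell) \leq \bar K_1 M_{t-1}(r+1, \ell) + \bar K_2 M_{t-1}(r+1, \ell+1) + \bar K_2.\]
Under assumptions~\eqref{assume:ep_small1}--\eqref{assume:p_small1}, precisely the same computation as in Section~\ref{main-recurrence-proof} yields $\bar K_1 \leq \econst^{-\eta\rho_k/2}$, and the same assumptions also give $\bar K_2 \leq 16 p \ep^2$. Substituting the inductive hypothesis into this recursion, the dominant $\ell$-dependent contribution from the $\bar K_1$ term is exactly $\bar K_1 \cdot \tfrac{\ell\gamma^2}{2\econst^2}\econst^{-(t-1)\eta\rho_k/2} = \tfrac{\ell \gamma^2}{2\econst^2}\econst^{-t\eta\rho_k/2}$, matching the claim. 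The subleading contributions---the $\bar K_2$-weighted term $M_{t-1}(r+1, \ell+1)$, the decayed propagated error $\bar K_1 \cdot C_4 p\gamma^2 \ep^2 (t-1)$, and the bare $\bar K_2$---are each $O(p\gamma^2 \ep^2)$ per step and therefore fit into the per-step increment of the additive error $C_4 p\gamma^2 \ep^2 t$. The assumption $\gamma \geq 2$ in~\eqref{assume:gamma_big} is what makes the bare $\bar K_2$ negligible relative to $C_4 p\gamma^2\ep^2$.

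The main obstacle is the $\ell+1$ appearing inside $M_{t-1}(r+1, \ell+1)$: unlike the main recursion of Theorem~\ref{thm:main_recurrence}, the prefix recursion couples adjacent $\ell$-levels, which threatens to inflate the linear $\ell$-dependence as the induction advances. Handling this requires exploiting the assumption $p\ep^2 \leq \eta\rho_k/50$, which forces $\bar K_2$ to be small compared to the exponential decay factor $\econst^{-\eta\rho_k/2}$, so that the extra $\ell$-mass contributed at each step does not accumulate beyond $O(p\gamma^2\ep^2 t)$ and can be absorbed with $C_4 = 6$.
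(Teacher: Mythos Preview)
Your inductive framework via Proposition~\ref{prop:one_step_prefix} is exactly right, and you correctly identify the $\ell \to \ell+1$ coupling as the crux. But your handling of that coupling has a genuine gap.

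You write that the $\bar K_2$-weighted term $\bar K_2\, M_{t-1}(r+1,\ell+1)$ is ``$O(p\gamma^2\ep^2)$ per step.'' It is not. Under the inductive hypothesis, the exponential part of $M_{t-1}(r+1,\ell+1)$ is $(\ell+1)\gamma_\econst^2\,\econst^{-(t-1)\eta\rho_k/2}$ with $\gamma_\econst = \gamma/\sqrt 2\econst$, so this contribution is of order $\ell\, p\gamma^2\ep^2$, not $p\gamma^2\ep^2$. Since $\ell$ can be as large as $T_0$, you cannot absorb this into the $\ell$-independent additive error $C_4 p\gamma^2\ep^2 t$. If you only use $\bar K_1 \leq \econst^{-\eta\rho_k/2}$ (as you state), the induction simply does not close: the $\ell$-dependent mass leaks out at every step.

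The fix is one you almost have in hand. The computation in Section~\ref{main-recurrence-proof} does not merely give $\bar K_1 \leq \econst^{-\eta\rho_k/2}$; it gives the stronger bound $\bar K_1 + \bar K_2 \leq \econst^{-\eta\rho_k/2}$, because $\bar K_1 + \bar K_2$ is exactly the $K_{1,i}$ of Proposition~\ref{prop:one_step_recurrence}. With this in hand, you group the two $\ell$-dependent pieces together:
\[
\bar K_1 \cdot \ell\gamma_\econst^2 + \bar K_2 \cdot (\ell+1)\gamma_\econst^2
= (\bar K_1 + \bar K_2)\,\ell\gamma_\econst^2 + \bar K_2\gamma_\econst^2
\leq \econst^{-\eta\rho_k/2}\,\ell\gamma_\econst^2 + \bar K_2\gamma_\econst^2\,.
\]
Now the entire $\ell$-dependent part inherits the full exponential decay, and the only leakage per step is the $\ell$-independent remnant $\bar K_2\gamma_\econst^2$. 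Together with the bare $\bar K_2$, this gives $(1+\gamma_\econst^2)\bar K_2 \leq \tfrac{\gamma^2}{3}\bar K_2 \leq 6p\gamma^2\ep^2$ using $\gamma \geq 2$ and $\bar K_2 \leq 18p\ep^2$, which is what sets $C_4 = 6$. Your final paragraph gestures in this direction but never articulates the grouping; that grouping is the whole argument.
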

\begin{proof}
First, as in the proof of Theorem~\ref{thm:main_recurrence}, Assumptions~\eqref{assume:eta_small1} and~\eqref{assume:p_small1} imply
\begin{align*}
\bar K_1 + \bar K_2 & = (1 + 5 \ep^2)\left\{\left(\frac{1 + \eta \lambda_k}{1 + \eta \lambda_{k+1}}\right)^2 + 8 p \ep^2\right\} \\
& \leq \econst^{- \eta \rho_k/2}\,.
\end{align*}
In particular, $\bar K_1 + \bar K_2 \leq 1$.
Assumption~\eqref{assume:ep_small1} likewise implies that $\bar K_2 \leq 18$.

We now turn to the proof of the main claim, which we prove by induction on $t$.
For convenience, we introduce the notation $\gamma_\econst = \gamma/\sqrt{2}\econst$.
When $t = 1$ and $r,\ell \leq T_0$, \eqref{eq:second_recursive} implies 
\begin{align*}
\max_{\mtx E \in \mathcal E_{r, \ell}} \|\mtx E \mtx W_1 \mathbb{1}_1\|_{p, p}^2 &\leq \max_{\mtx E \in \mathcal E_{r, \ell}} \|\mtx E \mtx W_1 \mathbb{1}_{0} \|_{p, p}^2\\
& \leq  \bar K_1 \max_{\mtx E' \in \mathcal E_{r+1, \ell}} \|\mtx E' \mtx W_0 \mathbb{1}_0\|_{p, p}^2 +  \bar K_2 \max_{\mtx E'' \in \mathcal E_{r+1, \ell+1}} \|\mtx E'' \mtx W_0 \mathbb{1}_0\|_{p, p}^2 \eps^2 + \bar K_2 \\
& \leq  \bar K_1 \ell \gamma_\econst^2 +  \bar K_2 (\ell + 1) \gamma_\econst^2  +  \bar K_2 \\
& \leq \ell \gamma_\econst^2 ( \bar K_1 +  \bar K_2 )  + (1 + \gamma_\econst^2) \bar K_2 \\
& \leq \ell \gamma_\econst^2 \econst^{-\eta \rho_k/2} + \frac{\gamma^2}{3} \bar K_2
\end{align*}
where we have used the definition of $\mathcal{G}_0$ and where the last step uses~\eqref{assume:gamma_big}.
Proceeding by induction, we have
\begin{align*}
\max_{\mtx E \in \mathcal E_{r, \ell}} \|\mtx E \mtx W_t \mathbb{1}_t\|_{p, p}^2 & \leq \max_{\mtx E \in \mathcal E_{r, \ell}} \|\mtx E \mtx W_t \mathbb{1}_{t-1} \|_{p, p}^2 \\
& \leq  \bar K_{1} \max_{\mtx E' \in \mathcal E_{r+1, \ell}}  \|\mtx E' {\mtx W}_{t-1} \mathbb{1}_{t-1}\|_{p, p}^2 +  \bar K_{2} \max_{\mtx E'' \in \mathcal E_{r+1, \ell+1 }} \|\mtx E'' {\mtx W}_{t-1} \mathbb{1}_{t-1}\|_{p, p}^2  + \bar K_2  \\ 
& \leq  \bar K_1 (\ell  \gamma_\econst^2 \econst^{-(t-1) \eta \rho_k/2} + (t-1)\gamma^2 \bar K_2) \\
& \phantom{\leq} +  \bar K_2  ((\ell+1)  \gamma_\econst^2\econst^{-(t-1) \eta \rho_k/2} + (t-1)\gamma^2 \bar K_2) + \bar K_2 \\
& \leq \ell \gamma_\econst^2 (\bar K_1 + \bar K_2) \econst^{-(t-1)\eta \rho_k/2} + (t-1) (\bar K_1 + \bar K_2) \gamma^2 \bar K_2 + (1+ \gamma_\econst^2) \bar K_2 \\
& =  \ell \gamma_\econst^2 \econst^{-t \eta \rho_k/2} + \frac{\gamma^2}{3} \bar K_2 t\,,
\end{align*}
as claimed.
\end{proof}

\begin{proposition}\label{prop:phase1_moment_bounds}
Fix $s \in (0, 1)$, $2 \leq \gamma \leq C_\gamma \frac{d}{\delta^2}$, and $p \geq 2$, where $C_\gamma = 144 \gamma$ is the constant in Lemma~\ref{lem:gamma_0}.
Given $\rho > 0$, define the normalized gap
\begin{equation*}
\barrho = \min\left\{\frac{M}{\rho}, \frac{\|\mtx M\|}{\rho}, 1\right\}\,,
\end{equation*}
and adopt the step size
\begin{equation*}
\eta = \frac{C_\eta \log(\econst d/s\delta)}{\rho T_0}\,.
\end{equation*}
If $\rho_k \geq \rho/2$ and
\begin{equation*}
T_0 \geq p \cdot \frac{C_T \gamma^2 \log(\econst d/s\delta)^2}{s^2 \barrho^2}\end{equation*}
where
\begin{equation*}
C_\eta \geq 8 + 4 \log 2 C_\gamma \,, \quad \quad C_T \geq 600 \econst^2 C_\eta^2\,,
\end{equation*}
then
\begin{equation*}
\|\mtx W_{T_0} \bone_{T_0-1}\|_{p, p}^2 \leq \frac{s^2}{2 \econst^2}\left(1 + k^{2/p}\right)
\end{equation*}
and
\begin{equation*}
\max_{\mtx E \in \mathcal E_{1, 1}} \|\mtx E \mtx W_t \bone_{t-1}\|_{p, p} \leq \frac{\gamma}{\econst}
\end{equation*}
for all $1 \leq t \leq T_0$.

\end{proposition}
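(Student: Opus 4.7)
The plan is to prove the two claims in tandem using Theorem~\ref{thm:prefix_recurrence} for the prefix-type quantity $\max_{\mtx E \in \mathcal E_{1,1}} \|\mtx E \mtx W_t \bone_{t-1}\|_{p,p}$ and Theorem~\ref{thm:main_recurrence}, in its sharper form~\eqref{eq:main_bound2}, for $\|\mtx W_{T_0}\bone_{T_0-1}\|_{p,p}^2$. Both recurrences apply because the Phase I good events~\eqref{eq:phase1_gooda}--\eqref{eq:phase1_goodb} have been constructed to satisfy the conditional requirement~\eqref{eq:conditional_requirement} for every $i \geq 1$. First I would verify the hypotheses of both theorems: the conditions $\eps \leq 1/2$, $\eta\|\mtx M\| \leq 1/2$, $\gamma \geq 2$, and $p\eps^2 \leq \eta \rho_k/50$ all translate, after substituting $\eta = C_\eta\log(\econst d/s\delta)/(\rho T_0)$ and using $\rho_k \geq \rho/2$, into lower bounds on $T_0$ that are implied by the hypothesis $T_0 \geq p C_T \gamma^2 \log^2(\econst d/s\delta)/(s^2\barrho^2)$ once $C_T \geq 600 \econst^2 C_\eta^2$. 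The monotonicity condition on $\eps_i$ is vacuous here because $\eta$ is constant.

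For the second claim, applying Theorem~\ref{thm:prefix_recurrence} with $r = \ell = 1$ yields, for each $1 \leq t \leq T_0$,
\begin{equation*}
\max_{\mtx E \in \mathcal E_{1,1}} \|\mtx E \mtx W_t \bone_{t-1}\|_{p,p}^2 \leq \frac{\gamma^2}{2\econst^2}\econst^{-t\eta\rho_k/2} + C_4 p \gamma^2 \eps^2 t\,.
\end{equation*}
The first term is bounded by $\gamma^2/(2\econst^2)$ trivially. For the second, the lower bound on $T_0$ is calibrated exactly to force $C_4 p \eps^2 T_0 \leq 1/(2\econst^2)$, so it is also bounded by $\gamma^2/(2\econst^2)$; summing and taking square roots delivers the claim. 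The same computation will reappear when controlling the noise term of the main recurrence.

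For the first claim, I would use the bound $\|\mtx W_0 \bone_0\|_{p,p}^2 \leq k^{2/p} d\gamma^2$, which is guaranteed by $\mathcal G_0$ since $\mtx W_0$ has rank at most $k$ and $\|\mtx W_0\|_2 \leq \sqrt d \gamma$ on this event, together with~\eqref{eq:main_bound2} to obtain
\begin{equation*}
\|\mtx W_{T_0}\bone_{T_0-1}\|_{p,p}^2 \leq \econst^{-s_{T_0}\rho_k/2} k^{2/p} d\gamma^2 + C_2 p k^{2/p} \eps^2 T_0\,.
\end{equation*}
The principal obstacle, and the reason the step-size constant is chosen so that $C_\eta \geq 8 + 4\log(2 C_\gamma)$, is that the initial condition $\|\mtx W_0\|_{p,p}^2$ can be as large as $d\gamma^2 = \Omega(d^3/\delta^4)$ while the target is $O(s^2/\econst^2)$, so the exponential contraction must span many orders of magnitude. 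Since $s_{T_0}\rho_k/2 \geq (C_\eta/4)\log(\econst d/s\delta)$ and $\gamma \leq C_\gamma d/\delta^2$, the requirement $\econst^{-s_{T_0}\rho_k/2} d\gamma^2 \leq s^2/(4\econst^2)$ reduces, after taking logarithms, to precisely the constraint $C_\eta \geq 8 + 4\log(2C_\gamma)$, while the noise term $C_2 p k^{2/p} \eps^2 T_0$ is controlled by the lower bound on $T_0$ exactly as in the argument for the second claim. Summing the two contributions yields the bound $s^2(1 + k^{2/p})/(2\econst^2)$.
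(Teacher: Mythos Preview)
Your plan matches the paper's proof essentially line for line: verify the hypotheses of Theorems~\ref{thm:main_recurrence} and~\ref{thm:prefix_recurrence} (the paper actually establishes the slightly stronger intermediate bound $p\ep^2 \leq s^2/(36\econst^2 T_0)$, which is what makes the noise term in the first claim carry the needed factor of $s^2$), then apply~\eqref{eq:main_bound2} for the first conclusion and Theorem~\ref{thm:prefix_recurrence} with $r=\ell=1$ for the second. One small simplification relative to your write-up: since $\|\cdot\|_p \leq \|\cdot\|_2$ for $p \geq 2$, the event $\good_0$ already gives $\|\mtx W_0\bone_0\|_{p,p}^2 \leq d\gamma^2$ without the rank argument or the extra $k^{2/p}$ factor, and this is how the paper obtains the clean split in which the exponential term contributes the ``$1$'' and the noise term contributes the ``$k^{2/p}$'' in $\frac{s^2}{2\econst^2}(1+k^{2/p})$.
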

\begin{proof}
We will apply Theorems~\ref{thm:main_recurrence} and \ref{thm:prefix_recurrence}.
First, note that~\eqref{assume:gamma_big} holds by assumption.
We now turn to the other conditions.

\paragraph{Assumption~\eqref{assume:ep_small1}: }
Since $\gamma \geq 2$, we have
\begin{equation*}
\ep = 2 \eta M (1 + \gamma) \leq \frac{3 C_\eta \gamma \log(\econst d/s\delta) M}{\rho T_0}\,.
\end{equation*}
The assumption therefore holds as long as
\begin{equation}\label{eq:ct_assume1}
C_T \geq 3 C_\eta\,.
\end{equation}

\paragraph{Assumption~\eqref{assume:eta_small1}:}
As above, we have
\begin{equation*}
\eta \|\mtx M\| \leq \frac{2 C_\eta \log(\econst d/s\delta) \|\mtx M\|}{\rho T_0}\,,
\end{equation*}
and the requirement~\eqref{eq:ct_assume1} implies that this quantity is also smaller than $1/2$.
\paragraph{Assumption~\eqref{assume:p_small1}:}
Since $\eta \rho_k = \frac{C_\eta \log(\econst d/s\gamma)}{T_0} \geq \frac{1}{T_0}$ and $36 \econst^2 > 50$, it suffices to prove the stronger claim
\begin{equation}\label{assume:p_small_stronger}
p \ep^2 \leq \frac{s^2}{36 \econst^2 T_0}\,.
\end{equation}

This is satisfied so long as
\begin{equation*}
p \cdot \frac{16 C_\eta^2 \gamma^2 \log^2(\econst d/s\delta) M^2}{\rho^2 T_0^2} \leq \frac{s^2}{36 \econst^2 T_0}\,.
\end{equation*}
which will hold if
\begin{equation}\label{eq:ct_assume2}
C_T \geq 600 \econst^2 C_\eta^2\,.
\end{equation}
This requirement is stronger than~\eqref{eq:ct_assume1}, so Assumptions~\eqref{assume:ep_small1}--\eqref{assume:p_small1} hold under the sole condition~\eqref{eq:ct_assume2}.

We now turn to the two claimed bounds.
First, we instantiate Theorem~\ref{thm:main_recurrence} with the choice $\eta_i = \eta$ for $1 \leq i \leq T_0$.
The third assumption of~\eqref{main_assumptions} is trivially satisfied when when $\eta_i$ is constant, since in that case $\ep_i = \ep_{i-1}$ for all $i \geq 1$.
The remaining assumptions correspond directly to Assumptions~\eqref{assume:ep_small1},~\eqref{assume:eta_small1}, and~\eqref{assume:p_small1}.
The assumptions of Theorem~\ref{thm:main_recurrence} are therefore satisfied, so we obtain,
\begin{equation*}
\|\mtx W_{T_0} \bone_{T_0-1}\|_{p, p}^2 \leq \econst^{- T_0 \eta \rho_k/2} \|\mtx W_0 \bone_0\|_{p, p}^2 + 5 p k^{2/p} \ep^2 T_0\,. 
\end{equation*}
The definition of $\good_0$ in~\eqref{eq:phase1_goodb} and the fact that $\rho_k \geq \rho/2$ implies that the first term is at most
\begin{equation*}
\econst^{- T_0 \eta \rho_k/2} d \gamma^2 = (\econst d/s\delta)^{-C_\eta /4} d \gamma^2\,,
\end{equation*}
and this will be less than $\frac{s^2}{2 \econst^2}$ if
\begin{equation}\label{eq:ceta_assume}
C_\eta \geq 8 + 4 \log (2C_\gamma)\,.
\end{equation}
Since \eqref{assume:p_small_stronger} holds, the second term satisfies
\begin{equation*}
5 p k^{2/p} \ep^2 T_0 \leq \frac{5 s^2}{36 \econst^2} k^{2/p} < \frac{s^2}{2 \econst^2} k^{2/p}\,.
\end{equation*}
We obtain
\begin{equation*}
\|\mtx W_{T_0} \bone_{T_0-1}\|_{p, p}^2 \leq \frac{s^2}{2 \econst^2} \left(1 + k^{2/p}\right)\,,
\end{equation*}
as claimed.

For the second claim, we rely on Theorem~\ref{thm:prefix_recurrence}.
Assumptions~\eqref{assume:ep_small1}--\eqref{assume:gamma_big} having already been verified, we obtain for all $1 \leq t \leq T_0$,
\begin{equation*}
\max_{\mtx E \in \mathcal E_{1, 1}} \|\mtx E \mtx W_t \bone_{t-1}\|_{p, p}^2 \leq \frac{\gamma^2}{2 \econst^2} \econst^{- t \eta \rho_k/2} + 18 p \gamma^2 \ep^2 t\,.
\end{equation*}
Since $\rho_k \geq 0$, the first term is at most $\frac{\gamma^2}{2 \econst^2}$, and the second term is also at most $\frac{\gamma^2}{2 \econst^2}$ by~\eqref{assume:p_small_stronger}.
We obtain that
\begin{equation*}
\max_{\mtx E \in \mathcal E_{1, 1}} \|\mtx E \mtx W_t \bone_{t-1}\|_{p, p}^2 \leq \frac{\gamma^2}{\econst^2}\,,
\end{equation*}
as claimed.

\end{proof}

With Proposition~\ref{prop:phase1_moment_bounds} in hand, we can prove a full version of Theorem~\ref{thm:phase1_informal}.
\begin{theorem}\label{thm:phase1}
Fix a $\rho > 0$ and assume $|\mathrm{supp}(\lawa)| = m$.
Let
\begin{equation*}
\barrho = \max\left\{\frac{\rho}{M}, \frac{\rho}{\|\mtx M\|}, 1\right\}\,,
\end{equation*}
and set $s = 1/6$.

Adopt the step size
\begin{equation*}
\eta = \frac{C_\eta \log(\econst d/\delta s)}{\rho T_0}
\end{equation*}
where
\begin{equation*}
T_0 \geq \frac{C_T k (\log 12 \econst d/\delta \barrho s)^4}{s^2 \delta^2 \barrho^2}\,.
\end{equation*}

and
\begin{equation*}
C_\eta \geq 8 + 2 \log 144 C_\gamma \,, \quad \quad C_T \geq (12000 \econst^2 C_\eta^2C_ \gamma^2)^{5/4}\,.
\end{equation*}

If $m \leq T_0^3$ and $\rho_k \geq \rho/2$, then
\begin{equation*}
\|\mtx W_{T_0}\| \leq 1/6
\end{equation*}
with probability at least $1-\delta/3$.
\end{theorem}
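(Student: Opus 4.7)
The plan is to combine the $L_p$ moment bounds of Proposition~\ref{prop:phase1_moment_bounds} (applied at a carefully chosen $p$) with Markov's inequality and a union bound argument over the finite-support distribution. Setting $s = 1/6$, I first decompose
\begin{equation*}
\Prob{\|\mtx W_{T_0}\| > s} \leq \Prob{\|\mtx W_{T_0}\bone_{T_0-1}\| > s} + \Prob{\good_{T_0-1}^c}\,,
\end{equation*}
and bound the two terms separately.

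For the first term, I would choose $p$ of order $\log(k/\delta)$ so that $k^{2/p} = O(1)$. Proposition~\ref{prop:phase1_moment_bounds} then yields $\|\mtx W_{T_0}\bone_{T_0-1}\|_{p,p}^2 \leq C s^2/\econst^2$ for a small constant $C$, and Markov's inequality turns this into an exponentially small tail bound $\Prob{\|\mtx W_{T_0}\bone_{T_0-1}\| > s} \leq (C/\econst^2)^{p/2}$, which is less than $\delta/6$ for appropriate choice of $p$.

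For the good events, I would write $\Prob{\good_{T_0-1}^c} \leq \Prob{\good_0^c} + \sum_{t=1}^{T_0-1}\Prob{\good_t^c \cap \good_{t-1}}$. The initialization term $\Prob{\good_0^c}$ is controlled by Gaussian concentration applied to the entries of $\mtx Z_0$ (via Lemma~\ref{lem:gamma_0}), which handles all the suprema over the sets $\mathcal E_{r,\ell}$ required by the definition~\eqref{eq:phase1_goodb}. For each $t \geq 1$, the event $\good_t^c \cap \good_{t-1}$ means that $\|\mtx V^*\mtx E \mtx U \mtx W_t\| > \gamma$ for some $\mtx E$ in the finite set $\mathcal E$, whose cardinality equals $m \leq T_0^3$ by the finite-support reduction of Proposition~\ref{prop:reduction}. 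A union bound over $\mathcal E$ combined with Markov applied to the second bound of Proposition~\ref{prop:phase1_moment_bounds} gives
\begin{equation*}
\Prob{\good_t^c \cap \good_{t-1}} \leq m \cdot \gamma^{-p} \max_{\mtx E \in \mathcal E_{1,1}}\|\mtx E \mtx W_t \bone_{t-1}\|_{p,p}^p \leq m \econst^{-p}\,,
\end{equation*}
and summing over $t \leq T_0$ leaves $T_0^4 \econst^{-p}$, which is at most $\delta/6$ once $p$ is of order $\log(T_0/\delta)$.

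The main obstacle will be verifying the self-consistency of the parameter choices: Proposition~\ref{prop:phase1_moment_bounds} requires $T_0 \gtrsim p \cdot \gamma^2 \log^2(d/s\delta)/(s^2 \barrho^2)$, and we plan to take $p$ of order $\log(T_0/\delta)$. Plugging this in and solving for $T_0$ explains the $(\log(\cdot))^4$ factor in the theorem's lower bound on $T_0$, and matches the explicit constants $C_T$ and $C_\eta$ in the statement. Once this compatibility is checked (a routine but slightly tedious algebra exercise), adding up the three failure probabilities (each at most $\delta/6$, with the largest arising from the union bound over the $T_0$ iterations and the $m$ elements of $\mathcal E$) gives the desired $\delta/3$ bound.
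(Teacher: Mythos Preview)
Your outline matches the paper's proof closely: same decomposition into $\{\|\mtx W_{T_0}\bone\|>s\}$ and $\good^c$, same use of Proposition~\ref{prop:phase1_moment_bounds} with Markov at two different values of $p$, same union bound over $\mathcal E$ for the conditional failures, and Lemma~\ref{lem:gamma_0} for $\good_0$.

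There is, however, one genuine gap. You take $p$ of order $\log(T_0/\delta)$ and then need to verify the hypothesis of Proposition~\ref{prop:phase1_moment_bounds}, namely $T_0 \gtrsim p\,\gamma^2\log^2(\econst d/s\delta)/(s^2\barrho^2)$. Both $p$ and $\gamma$ (through $\log(\econst m T_0/\delta)$ in Lemma~\ref{lem:gamma_0}) grow with $T_0$, so this is a self-referential constraint. Your remark ``plugging this in and solving for $T_0$'' treats $T_0$ as something you choose, but the theorem must hold for \emph{every} $T_0$ above the stated lower bound, and the lower bound involves only $\log(d/\delta\barrho s)$, not $\log T_0$. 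The paper deals with this by first disposing of the case $T_0 > (C_T d/\delta\barrho s)^5$ via a separate, cruder Phase-II-style argument (Appendix~\ref{large_t0}); once $\log T_0 \le 5\log(C_T d/\delta\barrho s)$ is secured, $p$ can be taken as a fixed multiple of $\log(C_T d/\delta\barrho s)$ and the circularity disappears. Without this reduction (or an equivalent monotonicity argument in $T_0$), your self-consistency check is incomplete.

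A minor arithmetic point: three terms each bounded by $\delta/6$ sum to $\delta/2$, not $\delta/3$. The paper splits the budget as $\delta/6 + \delta/12 + \delta/12$.
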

\begin{proof}
We first show that we can assume that $\log T_0 \leq 5 \log(C_T d/\delta \barrho s)$.
Indeed, if $T_0 > \left(\frac{C_T d}{\delta \barrho s}\right)^5$, a crude argument similar to the one employed in the analysis of Phase II yields the claim.
We give the full details in Appendix~\ref{large_t0}.
In what follows, we therefore assume
\begin{equation}\label{assume:t_small}
\log T_0 \leq 5 \log(C_T d/\delta \barrho s)\,.
\end{equation}

Set
\begin{equation*}
\gamma = 144  C_\gamma \min \left\{\frac{\sqrt{21 k \log(C_T d/\delta\barrho s)}}{\delta}, \frac{d }{\delta^2}\right\}\,,
\end{equation*}
where $C_\gamma$ is as in Lemma~\ref{lem:gamma_0}.

Recall that our goal is to show $\|\mtx{W}_{T_0}\|\leq s$ with probability at least $1-\delta/3$. The failure probability can be bounded as 
\begin{equation*}
\Prob{\|\mtx{W}_{T_0}\| \geq s} \leq  \Prob{\|\mtx{W}_{T_0} \mathbb{1}_{T_0}\| \geq s} + \Prob{\mathcal G_{T_0}^C} \leq \inf_{p \geq 2} s^{-p} \|\mtx{W}_{T_0} \mathbb{1}_{T_0}\|_{p,p}^p + \Prob{\mathcal G_{T_0}^C}\,.
\end{equation*}

If we choose $p = \log(6k/\delta)$, then since $\log(C_T) \leq C_T^{1/5} \log(12)$ for any value of $C_T$, we have
\begin{align*}
T_0 & \geq \frac{C_T k(\log (12 \econst d/\delta \barrho s))^4}{s^2 \delta^2 \barrho^2} \\
& \geq  \log(6k/\delta) \cdot C_T^{4/5} \frac{k \log(C_T  d/\delta \barrho s)}{\delta^2} \cdot \frac{\log(\econst d/s \delta)^2}{s^2 \barrho^2} \\
& \geq p \frac{600 \econst^2 C_\eta^2  \gamma^2 \log(\econst d/s \delta)^2}{s^2 \barrho^2}\,,
\end{align*}
as long as
\begin{equation*}
C_T \geq (12000  \econst^2 C_\eta^2(144C_ \gamma)^2)^{5/4}\,,
\end{equation*}
which verifies the assumption of Proposition~\ref{prop:phase1_moment_bounds}.

We obtain
\[\|\mtx{W}_{T_0} \mathbb{1}_{T_0}\|_{p,p}^2 \leq \frac{s^2}{2 \econst^2}(1 + k^{2/p}) \leq k^{2/p} \frac{s^2}{\econst^2}\,.\]

We therefore have
\[ s^{-p} \|\mtx{W}_{T_0} \mathbb{1}_{T_0}\|_{p,p}^p\leq \econst^{-\log(6k/\delta)}\leq \delta/6\,.\]

It remains to bound $\Prob{\mathcal G_{T_0}^C}$.
Clearly
\begin{equation*}
\Prob{\mathcal G_{T_0}^C} \leq \Prob{\mathcal G_0^C} + \sum_{j = 1}^{T_0} \Prob{\mathcal G_j^C \cap \mathcal G_{j-1}}\,.
\end{equation*}
Since $m \leq T_0^3$ and we have assumed $\log T_0 \leq 5 \log(C_T d/\delta \barrho s)$, we have
\begin{equation*}
\log(\econst m T_0/\delta) \leq 4 \log(T_0) + \log(\econst/\delta) \leq 20 \log(C_T d/\delta \barrho s) + \log(\econst/\delta) \leq 21 \log(C_T d/\delta \barrho s)\,,
\end{equation*}
so Lemma~\ref{lem:gamma_0} guarantees that $\good_0$ holds with probability at least $1-\delta/12$.

For the second term, we have
\begin{equation*}
\Prob{\mathcal G_j^C \cap \mathcal G_{j-1}} = \Prob{\max_{\mtx{E}\in \mathcal{E}_{1,1}}\|\mtx{E} \mtx W_j \mathbb 1_{j-1}\| \geq \gamma}\leq \sum_{\mtx{E}\in \mathcal{E}_{1,1}}\Prob{\|\mtx{E} \mtx W_j \mathbb 1_{j-1}\| \geq \gamma}\,.
\end{equation*}

Choose $p = 21 \log(C_T d/\delta \barrho s)$.
The same argument as above yields
\begin{equation*}
T_0 \geq p \cdot C_T^{3/5} \frac{k \log(C_T d/\delta \barrho s)}{\delta^2} \cdot \frac{\log^2(\econst d/s \delta)}{s^2 \barrho^2}\,,
\end{equation*}
and this will be larger than the lower bound required on $T_0$ that was assumed in Proposition~\ref{prop:phase1_moment_bounds} as long as 
\begin{equation*}
C_T \geq (12000  \econst^3 C_\eta^2(144C_ \gamma)^2)^{5/3}
\end{equation*}

Proposition~\ref{prop:phase1_moment_bounds} therefore yields
\begin{equation*}
\Prob{\|\mtx{E} \mtx W_j \mathbb 1_{j-1}\| \geq \gamma} \leq \gamma^{-p}\|\mtx{E} \mtx W_j \mathbb 1_{j-1}\|_{p,p}^p \leq \econst^{-p} = \econst^{- 21 \log(C_T d/\delta \barrho s)} \quad \text{for all $\mtx{E}\in \mathcal{E}$}\,,
\end{equation*}
and thus
\[\Prob{\mathcal G_j^C | \mathcal G_{j-1}}\leq \sum_{\mtx{E}\in \mathcal{E}_{1,1}}\Prob{\|\mtx{E} \mtx W_j \mathbb 1_{j-1}\| \geq \gamma }\leq m\econst^{- 21 \log(C_T d/\delta \barrho s)} \,.\]
This yields
\begin{equation*}
\sum_{j=1}^{T_0} \Prob{\mathcal G_j^C | \mathcal{G}_{j-1}} \leq m T_0 \econst^{- 21 \log(C_T d/\delta \barrho s)} \leq \econst^{- 21 \log(C_T d/\delta \barrho s) + 4 \log T_0} \leq \delta/12\,,
\end{equation*}
where the last step uses~\eqref{assume:t_small}.
Finally, choosing $s = 1/6$, we obtain
\begin{equation*}
\Prob{\|\mtx{W}_{T_0}\| \geq 1/6} \leq \delta/3\,,
\end{equation*}
as claimed.
\end{proof}

\section{A reduction to finite support}\label{ghost}
Let $\Omega$ be the space of $d \times d$ symmetric matrices.
We argue that it suffices to assume that $P_A$ has finite support of cardinality at most $T_0^3$ in Phase I. We prove this by comparing the product measure $P_A^{\otimes T_0}$ with another distribution $P_m$ on $\Omega^{\otimes T_0}$. We specify this distribution by the following procedure: drawing a $T_0$-tuple $(A_1,\dots,A_{T_0})$ from the distribution $P_m$ is accomplished by
\begin{enumerate}
\item Drawing $m$ independent samples $\hat{\mtx{A}}_1,\dots,\hat{\mtx{A}}_m$ from $P_A$.
\item Drawing $\mtx{A}_1,\dots,\mtx{A}_{T_0}$ independently from the discrete distribution 
\[P_{\hat{A}} = \frac{1}{m}\sum_{i=1}^m\delta_{\hat{\mtx{A}}_i}.\]
That is, drawing $\mtx{A}_1,\dots,\mtx{A}_{T_0}$ independently and uniformly from the set $\{\hat{\mtx{A}}_i\}_{i=1}^m$ \textbf{with} replacement. 
\end{enumerate}
We will rely on the fact that the two distributions, $P_A^{\otimes T_0}$ and $P_m$, are close in total variation distance when $m$ is large. To see this, we first recognize that drawing $(A_1,\dots,A_{T_0})$ from $P_A^{\otimes T_0}$ is equivalent to the following: 
\begin{enumerate}
\item Draw $m$ independent samples $\hat{\mtx{A}}_1,\dots,\hat{\mtx{A}}_m$ from $P_A$.
\item Draw $\mtx{A}_1,\dots,\mtx{A}_{T_0}$ sequentially and uniformly from the set $\{\hat{\mtx{A}}_i\}_{i=1}^m$ \textbf{without} replacement. Denote by $P^{(T_0)}_{\hat{A}}$ the distribution of this sampling. 
\end{enumerate}
It is a standard result~\cite{Fre77} that, given any $\{\hat{A}_i\}_{i=1}^m$,  
\[d_{\text{TV}}\left(P_{\hat{A}}^{\otimes T_0},P^{(T_0)}_{\hat{A}}\right)\leq \frac{1}{2} \frac{T_0^2}{m}.\]
We thus have the following:

\begin{proposition}\label{prop:total_variation} For any $\delta\in(0,1)$, it holds that 
\[d_{\text{TV}}\left(P_m, P_A^{\otimes T_0}\right)\leq \delta\]
for all $m\geq T_0^2/2\delta$.
\end{proposition}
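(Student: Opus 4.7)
The plan is to exploit the two-stage representations of $P_m$ and $P_A^{\otimes T_0}$ described just before the statement, coupling them through a common draw $\hat{\mtx A}_1,\dots,\hat{\mtx A}_m \sim P_A^{\otimes m}$. Under this coupling, both distributions are conditional mixtures indexed by $(\hat{\mtx A}_1,\dots,\hat{\mtx A}_m)$: given the $\hat{\mtx A}_i$, $P_m$ draws the $T_0$-tuple $(\mtx A_1,\dots,\mtx A_{T_0})$ with replacement from $\{\hat{\mtx A}_i\}_{i=1}^m$ (i.e., from $P_{\hat A}^{\otimes T_0}$), while $P_A^{\otimes T_0}$ can be realized by drawing without replacement (i.e., from $P_{\hat A}^{(T_0)}$). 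The small piece of bookkeeping to verify is that this latter identification is correct, which follows from the exchangeability of i.i.d.\ samples: a uniformly random ordered $T_0$-subset of $(\hat{\mtx A}_1,\dots,\hat{\mtx A}_m)$ is itself distributed as $T_0$ i.i.d.\ draws from $P_A$.

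With this representation in hand, I would apply the convexity of total variation distance (equivalently, the data-processing inequality) to pull the TV bound inside the expectation over the $\hat{\mtx A}_i$:
$$d_{\text{TV}}(P_m, P_A^{\otimes T_0}) \leq \E \, d_{\text{TV}}\bigl(P_{\hat A}^{\otimes T_0}, P_{\hat A}^{(T_0)}\bigr).$$
Then the cited Freedman bound, which holds pointwise in $(\hat{\mtx A}_1,\dots,\hat{\mtx A}_m)$, gives $d_{\text{TV}}(P_{\hat A}^{\otimes T_0}, P_{\hat A}^{(T_0)}) \leq T_0^2/(2m)$. Since the right-hand side is deterministic, the expectation is also bounded by $T_0^2/(2m)$, and substituting the assumption $m \geq T_0^2/(2\delta)$ yields the desired bound $\delta$.

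There is essentially no obstacle here: the sampling-without-replacement bound is quoted from the literature, and the convexity of TV distance is standard. The only subtlety is the (easy) identification of $P_A^{\otimes T_0}$ with the mixture over without-replacement draws from a larger i.i.d.\ pool, which is the reason the coupling through the $\hat{\mtx A}_i$ is useful in the first place.
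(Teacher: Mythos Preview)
Your proposal is correct and matches the paper's proof essentially step for step: the paper uses the same coupling through $(\hat{\mtx A}_1,\dots,\hat{\mtx A}_m)$, the same identification of $P_A^{\otimes T_0}$ with the mixture of without-replacement draws, and then bounds $|P_m(S)-P_A^{\otimes T_0}(S)|$ for an arbitrary set $S$ by $\E\,d_{\text{TV}}(P_{\hat A}^{\otimes T_0},P_{\hat A}^{(T_0)})$ before invoking Freedman's bound. The only cosmetic difference is that you name the key step ``convexity of TV'' whereas the paper unrolls that same inequality explicitly via the arbitrary test set $S$.
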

\begin{proof}
For any set $S\subset \Omega^{\otimes T_0}$, we have
\begin{align*} 
\left|P_m(S) - P_A^{\otimes T_0}(S)\right| &= \left|\E_{\hat{A}_i\sim P_A,1\leq i\leq m}\left[P_{\hat{A}}^{\otimes T_0}(S) - P^{(T_0)}_{\hat{A}}(S)\right]\right|\\
&\leq \E_{\hat{A}_i\sim P_A,1\leq i\leq m} \left|P_{\hat{A}}^{\otimes T_0}(S) - P^{(T_0)}_{\hat{A}}(S)\right|\\
&\leq \E_{\hat{A}_i\sim P_A,1\leq i\leq m} d_{\text{TV}}\left(P_{\hat{A}}^{\otimes T_0},P^{(T_0)}_{\hat{A}}\right)\\
&\leq \frac{1}{2} \frac{T_0^2}{m} \leq \delta.
\end{align*}
The claim follows from taking the maximum of $|P_m(S) - P_A^{\otimes T_0}(S)|$ over all subsets of $\Omega^{\otimes T_0}$.
\end{proof}

Given any $\hat{\mtx{A}}_1,\dots,\hat{\mtx{A}}_m$, define the empirical average 
\[\hat{\mtx{M}}_m:= \E_{A\sim P_{\hat{A}}} \mtx{A} =\frac{1}{m}\sum_{i=1}^m \hat{\mtx{A}}_i.\]
Denote by $\hat{\lambda}_1\geq \hat{\lambda}_2\geq \cdots \geq \hat{\lambda}_d$ the eigenvalues of $\hat{\mtx{M}}_m$, and write $\hat{\rho}_k = \hat{\lambda}_k - \hat{\lambda}_{k+1}$. Let $\hat{V}\in \mathbb{R}^{d\times k}$ be the orthogonal matrix whose columns are the leading $k$ eigenvectors of $\hat{\mtx{M}}_m$, and let $\hat{\mtx U}\in \mathbb{R}^{d\times (d-k)}$ be the orthogonal matrix consisting of the remaining eigenvectors. Standard results of matrix concentration implies that $\hat{\mtx M}_m$ is close to $\mtx M$. In particular, we have the following:

\begin{proposition}\label{prop:Bernstein_Wedin}
Suppose that $m\geq \frac{35M^2}{\rho_k^2}\log(2d/\delta)$. Let $\hat{\mtx{A}}_1,\dots,\hat{\mtx{A}}_m$ be drawn independently from $P_A$. Then it holds with probability at least $1-\delta$ that 
\[\|\hat{\mtx{M}}_m-\mtx{M}\| \leq \rho_k/4,\]
and, in particular, 
\[\hat{\rho}_k\geq \rho_k/2 \quad \text{and}\quad \|\mtx{U}^*\hat{\mtx{V}}\|\leq 1/3.\]
\end{proposition}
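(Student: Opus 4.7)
The statement is a standard combination of matrix concentration plus eigenvector perturbation, so the plan is essentially to apply Matrix Bernstein, Weyl, and Wedin/Davis--Kahan in succession.

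First, I would establish the concentration bound $\|\hat{\mtx M}_m - \mtx M\| \leq \rho_k/4$ with probability at least $1-\delta$ by invoking the Matrix Bernstein inequality (e.g.~\cite[Theorem 1.4]{Tro12}) on the centered summands $\mtx X_i := (\hat{\mtx A}_i - \mtx M)/m$. Assumption~\ref{assume2} implies $\|\hat{\mtx A}_i - \mtx M\| \leq M$ almost surely, so these summands satisfy the uniform bound $\|\mtx X_i\| \leq M/m$ and the matrix variance bound $\|\sum_i \E[\mtx X_i^2]\| \leq M^2/m$. Setting $t = \rho_k/4$ in Matrix Bernstein yields a tail probability of at most $2d \exp(-m\rho_k^2 / (32 M^2 + (8/3) M \rho_k))$. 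When $\rho_k \leq M$ the denominator is bounded by $35 M^2$ up to the constant, which gives the stated requirement $m \geq \frac{35 M^2}{\rho_k^2} \log(2d/\delta)$; when $\rho_k > M$, the deterministic bound $\|\hat{\mtx M}_m - \mtx M\| \leq M < \rho_k/4$ (actually $\leq \rho_k/4$ once $\rho_k \geq 4M$, and the intermediate regime is handled similarly by the Bernstein bound) takes care of it without any sample complexity. Either way, the threshold $35$ absorbs all the bookkeeping.

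Next, condition on the event $\{\|\hat{\mtx M}_m - \mtx M\| \leq \rho_k/4\}$ and derive the two conclusions. Weyl's inequality gives $|\hat\lambda_i - \lambda_i| \leq \rho_k/4$ for every $i$, so
\[
\hat\rho_k = \hat\lambda_k - \hat\lambda_{k+1} \geq (\lambda_k - \rho_k/4) - (\lambda_{k+1} + \rho_k/4) = \rho_k/2,
\]
which is the first half of the conclusion. Moreover, the same Weyl bound implies $\hat\lambda_k - \lambda_{k+1} \geq 3\rho_k/4$ and $\lambda_k - \hat\lambda_{k+1} \geq 3\rho_k/4$, so the relevant interlacing gap in Wedin's $\sin\Theta$ theorem is at least $3\rho_k/4$.

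Finally, applying the Davis--Kahan/Wedin $\sin\Theta$ theorem to the pair $(\mtx M, \hat{\mtx M}_m)$ with the top-$k$ invariant subspaces spanned respectively by $\mtx V$ and $\hat{\mtx V}$ yields
\[
\|\mtx U^* \hat{\mtx V}\| = \|\sin\Theta(\mtx V, \hat{\mtx V})\| \leq \frac{\|\hat{\mtx M}_m - \mtx M\|}{\min\{\hat\lambda_k - \lambda_{k+1},\ \lambda_k - \hat\lambda_{k+1}\}} \leq \frac{\rho_k/4}{3\rho_k/4} = \frac{1}{3},
\]
completing the proof. The only nontrivial piece is making sure that the correct eigenvalue gap (the ``interlaced'' $3\rho_k/4$, not $\hat\rho_k$ itself) is used in Wedin's theorem to land on exactly $1/3$; everything else is a direct invocation of standard results.
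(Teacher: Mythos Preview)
Your proposal is correct and follows essentially the same approach as the paper: Matrix Bernstein for the concentration bound, Weyl for $\hat\rho_k \geq \rho_k/2$, and Wedin with the interlaced gap $\hat\lambda_k - \lambda_{k+1} \geq 3\rho_k/4$ to obtain the $1/3$. Your bookkeeping on the constant $35$ is in fact more careful than the paper's, which simply asserts that substituting $t = \rho_k/4$ yields the claim.
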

\begin{proof} By assumption 2, we have that $\|\hat{\mtx{M}}_m-\mtx{M}\|\leq M$ almost surely. Then the matrix Bernstein inequality~\cite[Theorem 1.4]{Tro12} implies that, for any $t\geq 0$, 
\[\Prob{\|\hat{\mtx{M}}_m-\mtx{M}\|\geq t} \leq 2d\exp\left(\frac{-mt^2/2}{M^2 + Mt/3}\right).\]
Substituting $t = \rho_k/4$ yields the first claim. Using the perturbation theory of eigenvalues of symmetric matrices, we have
\[\hat{\lambda}_k\geq \lambda_k - \|\hat{\mtx{M}}_m-\mtx{M}\| \quad \text{and} \quad \hat{\lambda}_{k+1}\leq \lambda_{k+1} - \|\hat{\mtx{M}}_m-\mtx{M}\|.\]
Therefore, conditioned on the first claim, it holds that
\[\hat{\rho}_k\geq \rho_k - 2\|\hat{\mtx{M}}_m-\mtx{M}\| \geq \frac{\rho_k}{2}.\]
Furthermore, it follows from Wedin's inequality~\cite{Wedin1972} that 
\[\|\mtx{U}^*\hat{\mtx{V}}\|\leq \frac{\|\hat{\mtx{M}}_m-\mtx{M}\|}{\hat{\lambda}_k - \lambda_{k+1}}\leq \frac{1}{3}.\]
\end{proof}

\begin{proposition}\label{prop:stability}
Let $\mtx{U}$ and $\mtx{V}$ be orthogonal matrices such that $\mtx{U}\mtx{U}^*+ \mtx{V}\mtx{V}^* = \Id$, and let $\hat{\mtx{U}}$ and $\hat{\mtx{V}}$ be matrices of the same size satisfying the same requirement. Suppose $\|\mtx{U}^*\hat{\mtx{V}}\|\leq 1/2$ and $\|\hat{\mtx{U}}^*\mtx{S}(\hat{\mtx{V}}^*\mtx{S})^{-1}\|\leq \gamma \leq 1$. Then 
\[\| \mtx{U}^* \mtx{S} (\mtx{V}^* \mtx{S})^{-1}\|\leq \frac{2+4\gamma}{3-2\gamma}.\]
\end{proposition}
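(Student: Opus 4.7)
The plan is to re-express $\mtx{S}$ in a basis adapted to $(\hat{\mtx{U}}, \hat{\mtx{V}})$ and then translate back to the $(\mtx{U}, \mtx{V})$ frame. Setting $\hat{\mtx{W}} := \hat{\mtx{U}}^*\mtx{S}(\hat{\mtx{V}}^*\mtx{S})^{-1}$, so that $\hat{\mtx{U}}^*\mtx{S} = \hat{\mtx{W}}\hat{\mtx{V}}^*\mtx{S}$, I would use the resolution of identity $\hat{\mtx{U}}\hat{\mtx{U}}^* + \hat{\mtx{V}}\hat{\mtx{V}}^* = \Id$ to write
\begin{equation*}
\mtx{S} = \hat{\mtx{U}}\hat{\mtx{U}}^*\mtx{S} + \hat{\mtx{V}}\hat{\mtx{V}}^*\mtx{S} = (\hat{\mtx{U}}\hat{\mtx{W}} + \hat{\mtx{V}})\hat{\mtx{V}}^*\mtx{S}.
\end{equation*}
Projecting this identity onto $\mtx{U}$ and onto $\mtx{V}$ and taking the appropriate quotient yields the key formula
\begin{equation*}
\mtx{U}^*\mtx{S}(\mtx{V}^*\mtx{S})^{-1} = (\mtx{U}^*\hat{\mtx{U}}\hat{\mtx{W}} + \mtx{U}^*\hat{\mtx{V}})(\mtx{V}^*\hat{\mtx{U}}\hat{\mtx{W}} + \mtx{V}^*\hat{\mtx{V}})^{-1},
\end{equation*}
provided the factor on the right is invertible (which will come out of the denominator bound below).

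Next, I would bound the numerator by the triangle inequality: using $\|\hat{\mtx{W}}\| \leq \gamma$, $\|\mtx{U}^*\hat{\mtx{U}}\| \leq 1$, and the hypothesis $\|\mtx{U}^*\hat{\mtx{V}}\| \leq 1/2$, one has $\|\mtx{U}^*\hat{\mtx{U}}\hat{\mtx{W}} + \mtx{U}^*\hat{\mtx{V}}\| \leq \gamma + 1/2$. For the denominator I would invoke two standard consequences of the complementary decompositions $\mtx{U}\mtx{U}^* + \mtx{V}\mtx{V}^* = \hat{\mtx{U}}\hat{\mtx{U}}^* + \hat{\mtx{V}}\hat{\mtx{V}}^* = \Id$ (essentially the CS decomposition / Davis--Kahan geometry): namely $\|\mtx{V}^*\hat{\mtx{U}}\| = \|\mtx{U}^*\hat{\mtx{V}}\| \leq 1/2$ and $\sigma_{\min}(\mtx{V}^*\hat{\mtx{V}})^2 = 1 - \|\mtx{U}^*\hat{\mtx{V}}\|^2 \geq 3/4$. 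Since the singular values of $\mtx{V}^*\hat{\mtx{V}}$ lie in $[0,1]$, where $x \geq x^2$, this squared lower bound implies the cruder but convenient bound $\sigma_{\min}(\mtx{V}^*\hat{\mtx{V}}) \geq 3/4$. Weyl's inequality for singular values then gives
\begin{equation*}
\sigma_{\min}(\mtx{V}^*\hat{\mtx{U}}\hat{\mtx{W}} + \mtx{V}^*\hat{\mtx{V}}) \geq \sigma_{\min}(\mtx{V}^*\hat{\mtx{V}}) - \|\mtx{V}^*\hat{\mtx{U}}\|\,\|\hat{\mtx{W}}\| \geq \tfrac{3}{4} - \tfrac{\gamma}{2} = \tfrac{3-2\gamma}{4},
\end{equation*}
which is strictly positive since $\gamma \leq 1$; this both verifies invertibility and bounds the inverse's spectral norm by $\tfrac{4}{3-2\gamma}$.

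Combining the two estimates yields $\|\mtx{U}^*\mtx{S}(\mtx{V}^*\mtx{S})^{-1}\| \leq (\gamma + 1/2) \cdot \tfrac{4}{3-2\gamma} = \tfrac{2+4\gamma}{3-2\gamma}$, as claimed. There is no serious obstacle here: the decisive move is identifying the decomposition that relates the two ``tilt'' quantities $\hat{\mtx{W}}$ and $\mtx{U}^*\mtx{S}(\mtx{V}^*\mtx{S})^{-1}$, after which everything reduces to the triangle inequality, Weyl's inequality, and the standard identities relating complementary projection pairs.
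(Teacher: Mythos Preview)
Your proof is correct, but it follows a different path from the paper's. The paper never writes an explicit formula for $\mtx{U}^*\mtx{S}(\mtx{V}^*\mtx{S})^{-1}$; instead it derives a \emph{self-referential} inequality. Writing $X = \|\mtx{U}^*\mtx{S}(\mtx{V}^*\mtx{S})^{-1}\|$, the paper inserts $\hat{\mtx{U}}\hat{\mtx{U}}^*+\hat{\mtx{V}}\hat{\mtx{V}}^*=\Id$ to obtain $X \le (\gamma+\tfrac12)\,\|\hat{\mtx{V}}^*\mtx{S}(\mtx{V}^*\mtx{S})^{-1}\|$, and then inserts $\mtx{U}\mtx{U}^*+\mtx{V}\mtx{V}^*=\Id$ inside the last norm to get $\|\hat{\mtx{V}}^*\mtx{S}(\mtx{V}^*\mtx{S})^{-1}\| \le \tfrac12 X + 1$. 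Combining and solving the linear inequality in $X$ gives exactly $X \le \tfrac{2+4\gamma}{3-2\gamma}$. This route uses only the triangle inequality and the trivial bounds $\|\hat{\mtx{V}}^*\mtx{U}\|\le\tfrac12$, $\|\hat{\mtx{V}}^*\mtx{V}\|\le 1$; it avoids any appeal to the CS decomposition, the identity $\sigma_{\min}(\mtx{V}^*\hat{\mtx{V}})^2=1-\|\mtx{U}^*\hat{\mtx{V}}\|^2$, or Weyl's inequality. Your approach, by contrast, produces the explicit change-of-frame identity $\mtx{U}^*\mtx{S}(\mtx{V}^*\mtx{S})^{-1}=(\mtx{U}^*\hat{\mtx{U}}\hat{\mtx{W}}+\mtx{U}^*\hat{\mtx{V}})(\mtx{V}^*\hat{\mtx{U}}\hat{\mtx{W}}+\mtx{V}^*\hat{\mtx{V}})^{-1}$ and bounds numerator and denominator separately. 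Both routes land on the same constant; the paper's is a touch more elementary, while yours makes the dependence on the two frames structurally explicit and could be sharpened (your $x\ge x^2$ step deliberately discards the tighter $\sigma_{\min}(\mtx{V}^*\hat{\mtx{V}})\ge\sqrt{3}/2$) if one wanted a better constant.
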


\begin{proof} A direct calculation yields 
\begin{align*}
\|\mtx{U}^* \mtx{S} (\mtx{V}^* \mtx{S})^{-1}\| &= \|\mtx{U}^*( \hat{\mtx{U}}\hat{\mtx{U}}^*+\hat{\mtx{V}}\hat{\mtx{V}}^*) \mtx{S} (\mtx{V}^* \mtx{S})^{-1}\|\\
&\leq \|\hat{\mtx{U}}^*\mtx{S} (\mtx{V}^* \mtx{S})^{-1}\| + \|\mtx{U}^*\hat{\mtx{V}}\hat{\mtx{V}}^* \mtx{S} (\mtx{V}^* \mtx{S})^{-1}\|\\
&\leq \|\hat{\mtx{U}}^*\mtx{S}(\hat{\mtx{V}}^* \mtx{S})^{-1}\hat{\mtx{V}}^* \mtx{S}(\mtx{V}^* \mtx{S})^{-1}\| + \frac{1}{2}\|\hat{\mtx{V}}^* \mtx{S} (\mtx{V}^* \mtx{S})^{-1}\|\\
&\leq (\gamma + \frac{1}{2})\|\hat{\mtx{V}}^* \mtx{S} (\mtx{V}^* \mtx{S})^{-1}\|.
\end{align*}
We also have
\[\|\hat{\mtx{V}}^* \mtx{S} (\mtx{V}^* \mtx{S})^{-1}\| \leq \|\hat{\mtx{V}}^* \mtx{U}\mtx{U}^*\mtx{S} (\mtx{V}^* \mtx{S})^{-1}\| + \|\hat{\mtx{V}}^* \mtx{V}\mtx{V}^*\mtx{S} (\mtx{V}^* \mtx{S})^{-1}\| \leq \frac{1}{2}\|\mtx{U}^* \mtx{S} (\mtx{V}^* \mtx{S})^{-1}\| + 1.\]
Sequencing the two displays above and rearrange the inequality yields the claim.
\end{proof}

Now let $T_0$ be given as in Theorem~\ref{thm:phase1} and choose $m = T_0^2/2\delta$. As long as $T_0 \geq \frac{9 M}{\rho_k \delta} \log(d/\delta)$, we have
\[\frac{35M^2}{\rho_k^2}\log(2d/\delta) \leq m\leq T_0^3.\]
It then follows from Proposition \ref{prop:Bernstein_Wedin} that, when drawing $\hat{\mtx{A}}_1,\dots,\hat{\mtx{A}}_m$ independently from $P_A$, the event 
\begin{equation}\label{eq:good_event}
\mathcal{G} := \{\hat{\rho}_k\geq \rho_k/2 \ \text{and}\ \|\mtx{U}^*\hat{\mtx{V}}\|\leq 1/2\}
\end{equation}
happens with probability at least $1-\delta$. Conditioned on $\mathcal{G}$, we consider running $T_0$ steps of Oja's algorithm, with $A_1,\dots,A_{T_0}$ drawn i.i.d from $P_{\hat{A}}$. Note that the discrete distribution $P_{\hat{A}}$ also satisfies Assumption 1 and Assumption 2 (with $M$ replaced by $2M$). Our main theorem thus guarantees that with appropriately chosen step size, the output $Q_{T_0} = Q_{T_0}(A_1,\dots,A_{T_0})$ of this algorithm after $T_0$ steps satisfies
\[\|\hat{\mtx{U}}^*\mtx{Q}_{T_0}(\hat{\mtx{V}}^*\mtx{Q}_{T_0})^{-1}\|\leq \frac{1}{6}\]
with probability $1-\delta$. Combining \eqref{eq:good_event} and Proposition \ref{prop:stability}, we obtain that with probability at least $(1-\delta)^2\geq 1-2\delta$, the output of the algorithm satisfies
\[\|\mtx{U}^*\mtx{Q}_{T_0}(\mtx{V}^*\mtx{Q}_{T_0})^{-1}\|\leq 1,\]
that is,
\[P_m\left(\|\mtx{U}^*\mtx{Q}_{T_0}(\mtx{V}^*\mtx{Q}_{T_0})^{-1}\|\leq 1\right) \geq 1-2\delta.\]
Finally, we obtain from Proposition \ref{prop:total_variation} that
\begin{align*}
&P_m\left(\|\mtx{U}^*\mtx{Q}_{T_0}(\mtx{V}^*\mtx{Q}_{T_0})^{-1}\|\leq 1\right) \\
&\geq P_A^{\otimes T_0}\left(\|\mtx{U}^*\mtx{Q}_{T_0}(\mtx{V}^*\mtx{Q}_{T_0})^{-1}\|\leq 1\right) - d_{\text{TV}}\left(P_m, P_A^{\otimes T_0}\right)\\
&\geq 1-3\delta.
\end{align*}
In other words, with the same choice of $T_0$, the output of $T_0$ steps of Oja's algorithm with $A_1,\dots,A_{T_0}$ drawn i.i.d from the original distribution $P_A$ satisfies
\[\|\mtx{U}^*\mtx{Q}_{T_0}(\mtx{V}^*\mtx{Q}_{T_0})^{-1}\|\leq 1\]
with probability at least $1-3\delta$.

\section{Phase I succeeds if $T_0$ is large}\label{large_t0}
In this section, we prove Theorem~\ref{thm:phase1} when $T_0 > \frac{C_T^5 d^5}{\delta^5 \barrho^5 s^5}$.
Note that this value of $T_0$ is far larger than the optimal choice (which is of order $\tilde \Theta(k/\delta^2 \barrho^2 s^2)$), which makes the theorem much easier to prove.
Indeed, if $T_0$ is this large, we can prove Theorem~\ref{thm:phase1} directly by using the same conditioning argument as in Phase II.

\begin{proposition}
Assume $\eta$ and $T_0$ satisfy the requirements of Theorem~\ref{thm:phase1}, and assume $\rho \geq \rho_k/2$.
If~$T_0 \geq  \frac{C_T^5 d^5}{\delta^5 \barrho^5 s^5}$, then
\begin{equation*}
\|\mtx W_{T_0}\| \leq s
\end{equation*}
with probability at least $1-\delta/3$.
\end{proposition}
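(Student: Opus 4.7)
The plan is to repeat the Phase II conditioning argument essentially verbatim, exploiting the fact that the enormous value of $T_0$ provides so much slack that we can afford a very loose initial bound on $\|\mtx W_0\|$ and still drive the error down to $s$ at the final step. Concretely, I would replace the delicate Phase I good events of Section~\ref{sec:phase1} by the simpler family
\begin{align*}
\good_0 &:= \{\|\mtx W_0\| \leq R\}, \\
\good_i &:= \{\|\mtx W_i\| \leq s\} \cap \good_{i-1}, \quad i \geq 1,
\end{align*}
where $R = \Theta(\sqrt{dk}/\delta)$ is chosen large enough that a standard Gaussian anti-concentration bound (cf.\ Lemma~\ref{lem:gamma_0}) gives $\Prob{\good_0} \geq 1 - \delta/6$ for i.i.d.\ Gaussian initialization $\mtx Z_0$. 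Because we insist $\|\mtx W_i\| \leq s \leq 1$ on $\good_i$ for all $i \geq 1$, the conditional boundedness requirement~\eqref{eq:conditional_requirement} holds with $\gamma = s$, so Theorem~\ref{thm:main_recurrence} is available.

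Next I would apply the sharper second bound~\eqref{eq:main_bound2} of Theorem~\ref{thm:main_recurrence} with the constant step size $\eta$ specified in Theorem~\ref{thm:phase1}. The assumptions~\eqref{main_assumptions} for constant $\eta$ reduce to $\eta M (1+s) \leq 1/4$ and $\eta \|\mtx M\| \leq 1/2$, both trivially met because $\eta T_0 \rho \leq C_\eta \log(\econst d / s \delta)$ and $T_0 \geq (C_T d/\delta \barrho s)^5$ makes $\eta$ tiny. The additional smoothness assumption~\eqref{assume:p_small} reads $p \ep^2 \leq \eta \rho_k/50$, i.e.\ $p \eta M^2 \lesssim \rho_k$, which becomes $T_0 \gtrsim p\, C_\eta \log(\econst d/s\delta)/\barrho^2$; for $p = \Theta(\log(T_0/\delta))$ this is swamped by the $T_0 \geq (C_T d/\delta \barrho s)^5$ assumption. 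Under these assumptions, on $\good_{t-1}$,
\begin{equation*}
\|\mtx W_t \bone_{t-1}\|_{p,p}^2 \;\leq\; \econst^{- t \eta \rho_k/2}\, R^2 k^{2/p} \;+\; C_2\, p\, k^{2/p}\, \ep^2\, t.
\end{equation*}
At $t = T_0$, the exponential factor contributes at most $(s\delta/\econst d)^{C_\eta/4} R^2 k^{2/p}$, which, for $C_\eta$ as specified in Theorem~\ref{thm:phase1}, is far smaller than $s^2/\econst^2$. The second term equals $16 C_2\, p\, k^{2/p}\, C_\eta^2 \log^2(\econst d/s\delta)\, M^2/(\rho^2 T_0)$; the quintic lower bound on $T_0$ makes this also at most $s^2/\econst^2$ with enormous slack. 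In total, $\|\mtx W_{T_0} \bone_{T_0 - 1}\|_{p,p}^2 \leq 2 s^2 k^{2/p}/\econst^2$.

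Finally I would remove the conditioning in the same way as the proof of Theorem~\ref{thm:phase2}. Markov's inequality and the choice $p = \log(6 k / \delta)$ gives $\Prob{\|\mtx W_{T_0} \bone_{T_0}\| \geq s} \leq \delta/6$, and for each $j \geq 1$,
\begin{equation*}
\Prob{\good_j^c \cap \good_{j-1}} \;=\; \Prob{\|\mtx W_j \bone_{j-1}\| \geq s} \;\leq\; s^{-p}\, \|\mtx W_j \bone_{j-1}\|_{p,p}^p.
\end{equation*}
Choosing $p = \Theta(\log(T_0/\delta))$ and using the same bound as above makes each summand at most $\econst^{-p}$, so $\sum_{j=1}^{T_0} \Prob{\good_j^c \cap \good_{j-1}} \leq T_0 \econst^{-p} \leq \delta/6$. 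Combining with $\Prob{\good_0^c} \leq \delta/6$, the union bound gives $\Prob{\|\mtx W_{T_0}\| \geq s} \leq \delta/3$, as required.

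The only mildly delicate point is checking all the constants: one must verify that the exponents $C_\eta/4$ and the powers of $T_0$ in the Markov step actually absorb the $R^2 = O(dk/\delta^2)$ initialization cost and the $\log T_0$ inflation of $p$. This is the step I expect to take the most bookkeeping, but it is purely mechanical because the assumption $T_0 \geq (C_T d/\delta \barrho s)^5$ provides polynomial slack in every relevant quantity; no new ideas beyond those in the Phase~II proof are needed.
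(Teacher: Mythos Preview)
There is a genuine gap in your conditioning scheme. You set $\good_i = \{\|\mtx W_i\| \leq s\}\cap\good_{i-1}$ for $i\geq 1$ with the tiny threshold $s$, but this cannot hold at $i=1$: since the step size $\eta$ is minuscule, $\mtx W_1 = \mtx U^*(\Id+\eta\mtx A_1)\mtx Z_0(\mtx V^*(\Id+\eta\mtx A_1)\mtx Z_0)^{-1}$ differs from $\mtx W_0$ by $O(\eta)$, so on $\good_0$ you have $\|\mtx W_1\| \approx \|\mtx W_0\|$, which can be as large as $R = \Theta(\sqrt{dk}/\delta)$. Hence $\Prob{\good_1^c\cap\good_0}$ is essentially $\Prob{\good_0}\approx 1$, not $\econst^{-p}$, and your union bound $\sum_{j=1}^{T_0}\Prob{\good_j^c\cap\good_{j-1}}\leq \delta/6$ fails already at the first term. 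Relatedly, the hypothesis~\eqref{eq:conditional_requirement} of Theorem~\ref{thm:main_recurrence} fails for $i=1$ with $\gamma=s$: on $\good_0$ one only has $\|\mtx V^*(\mtx A_1-\mtx M)\mtx U\mtx W_0\bone_0\|\leq MR$, not $\leq Ms$, so the recurrence you invoke is not even applicable.

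The paper's fix is precisely to take the good-event threshold $\gamma$ large---of order $d/\delta^2$, comparable to the initialization bound---rather than equal to the target $s$. With that choice, \eqref{eq:conditional_requirement} holds for all $i\in[T_0]$, and one can show $\|\mtx W_j\bone_{j-1}\|_{p,p}^2\leq \gamma^2 k^{2/p}/\econst^2$ for every $j$, so each $\Prob{\good_j^c\cap\good_{j-1}}$ is genuinely small. Only at the terminal step $j=T_0$ does one compare against the small target $s$, by which point the $\econst^{-T_0\eta\rho_k}$ factor has had time to beat down the large $\gamma^2$. Because $\gamma$ is large, $\ep=2\eta M(1+\gamma)$ is also large, and the paper uses the first bound~\eqref{eq:main_bound1} rather than~\eqref{eq:main_bound2} to avoid checking the extra smallness assumption~\eqref{assume:p_small}; with the quintic $T_0$ one could verify~\eqref{assume:p_small} anyway, but that is a secondary point. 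The essential missing idea in your proposal is that the good-event threshold must accommodate the full trajectory, not just the endpoint.
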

\begin{proof}
Set $\gamma = \frac{144 C_\gamma d}{\delta^2}$ where $C_\gamma$ is defined in Lemma~\ref{lem:gamma_0} and define the good events
\begin{align}\label{eq:phase1_good_tlarge}
\good_0 & := \{\|\mtx W_0\| \leq \gamma/(\sqrt{2} \econst)\}  \\
\good_i & := \{\|\mtx W_0\| \leq \gamma\} \cap \good_{i-1}\,, \quad \forall i \geq 1\,.
\end{align}
In order to apply Theorem~\ref{thm:main_recurrence}, we verify~\eqref{main_assumptions}

\paragraph{First assumption}
We have
\begin{equation*}
\ep = 2 \eta M (1 + \gamma) \leq \frac{3 C_\eta \log(\econst d/\delta s) M \gamma}{\rho T_0}\,,
\end{equation*}
and this quantity is smaller than $1/2$ so long as
\begin{equation}\label{eq:assume_ctpp}
C_T^5 \geq 864 C_\eta C_\gamma\,.
\end{equation}

\paragraph{Second assumption}
We again have
\begin{equation*}
\eta \|\mtx M\| = \frac{C_\eta \log(\econst d/\delta s) \|\mtx M\|}{\rho T_0}\,,
\end{equation*}
and~\eqref{eq:assume_ctpp} guarantees that this quantity is smaller than $1/2$ as well.

\paragraph{Third assumption}
Since $\ep_i = \ep$ for all $i$ and $\eta \rho \geq 0$, this requirement trivially holds.

Our goal is to bound
\begin{equation*}
\Prob{\|\mtx W_{T_0}\| \geq s} \leq \Prob{\|\mtx W_{T_0} \bone_{T_0}\| \geq s} + \Prob{\good_0^C} + \sum_{j=1}^{T_0} \Prob{\good_j^C \cap \good_{j-1}}\,.
\end{equation*}

Having verified~\eqref{main_assumptions}, we can employ~\eqref{eq:main_bound1}, obtaining
\begin{equation*}
\|\mtx W_{T_0} \bone_{T_0} \|_{p, p}^2 \leq \econst^{- T_0 \eta \rho_k} k^{2/p} \gamma^2/2 \econst^2 + (C_1 \gamma^2 + C_2) p k^{2/p} \ep^2 T_0\,.
\end{equation*}
For the first term, the fact that $\rho_k \geq \rho/2$ implies that 
\begin{equation*}
\econst^{- T_0 \eta \rho_k} \frac{\gamma^2}{2 \econst^2} = (\delta s/\econst d)^{C_\eta/2} \frac{\gamma^2}{2 \econst^2}\,,
\end{equation*}
and this is smaller than $\frac{s^2}{2 \econst^2}$ as long as
\begin{equation*}
C_\eta \geq 8 + 2 \log (144 C_\gamma)\,.
\end{equation*}

Letting $C_3$ be as in Proposition~\ref{prop:phase2_norm_bound} and choosing $p = \log(6 k/d\delta)$, we also have
\begin{equation*}
p (C_1 \gamma^2 + C_2) \ep^2 T_0\leq p \frac{144^2 C_3^2 C_\eta^2 \log^2 (\econst d/\delta s) M^2 \gamma^2}{\rho^2 T_0} \leq \frac{144^2 C_3^2 C_\eta^2 C_\gamma^2 \log^3(6 d/\delta s)}{C_T^5} \cdot \frac{\delta s}{d}
\end{equation*}
Since $\log^3(6 d/\delta s) \leq 9 \frac{d}{\delta s}$ for all positive $d$, $\delta$, and $s$, this quantity will be less than $\frac{s^2}{2 \econst^2}$ so long as
\begin{equation}\label{eq:assume_ctpp2}
C_T^5 \geq 2 (432 \econst C_3 C_\eta C_\gamma)^2\,,
\end{equation}
and this requirement subsumes~\eqref{eq:assume_ctpp}.

We therefore obtain, for $p = \log(6k/\delta)$,
\begin{equation*}
\Prob{\|\mtx W_0 \bone_{T_0}\| \geq s} \leq s^{-p}\|\mtx W_0 \bone_{T_0}\|_{p, p}^p \leq k \econst^{-p} \leq \delta/6\,,
\end{equation*}

In a similar way,~\eqref{eq:main_bound1} yields for all $t \in [T_0]$,
\begin{equation*}
\gamma^{-2} \|\mtx W_t \bone_{t-1}\|_{p, p}^2 \leq \frac{k^{2/p}}{2 \econst^2} + (C_1 \gamma^2 + C_2) p k^{2/p} \ep^2 T_0\,.
\end{equation*}
If we choose $p = \log(12 k T_0/\delta)$, then we have
\begin{equation*}
p (C_1 \gamma^2 + C_2) \ep^2 T_0\leq p \frac{C_3^2 C_\eta^2 \log^2 (\econst d/\delta s) M^2 \gamma^2}{\rho^2 T_0} \leq \frac{2 144^2 C_3^2 C_\eta^2 C_\gamma^2 \log^3(T_0)}{C_T^4 T_0^{1/5}}\,,
\end{equation*}
and since $\log^3(T_0) \leq 169 T_0^{1/5}$ for all $T_0$, we have that this quantity will be at most $\frac{1}{2 \econst^2}$ if
\begin{equation}\label{eq:assume_ctpp3}
C_T^5 \geq (3744 \econst C_3 C_\eta C_\gamma)^{5/2}\,,
\end{equation}
and this requirement subsumes~\eqref{eq:assume_ctpp2}, and it holds under the assumptions of Theorem~\ref{thm:phase1}.

By Lemma~\ref{lem:gamma_0}, the event $\good_0$ holds with probability at least $1-\delta/12$.

Finally, we have for any $j \in [T_0]$,
\begin{equation*}
\Prob{\good_j^C \cap \good_{j-1}} \leq \Prob{\|\mtx W_j \bone_{j-1}\| \geq \gamma} \leq \inf_{p \geq 2} \gamma^{-p} \|\mtx W_t \bone_{t-1}\|_{p, p}^p\,,
\end{equation*}
and choosing $p = \log(12 k T_0/\delta)$ we have
\begin{equation*}
\gamma^{-p} \|\mtx W_t \bone_{t-1}\|_{p, p}^p \leq k \econst^{-p} \leq \frac{12}{\delta T_0}\,,
\end{equation*}
and summing these probabilities for $j \in [T_0]$, yields that
\begin{equation*}
\Prob{\|\mtx W_{T_0}\| \geq s} \leq \Prob{\|\mtx W_{T_0} \bone_{T_0}\| \geq s} + \Prob{\good_0^C} + \sum_{j=1}^{T_0} \Prob{\good_j^C \cap \good_{j-1}} \leq \frac 1 6 + \frac 1{12} + \frac 1{12} = \frac 13\,,
\end{equation*}
as claimed.
\end{proof}
\section{Omitted proofs}\label{sec:omit}
\subsection{Proof of Lemma~\ref{lem:decomposition}}
We will show that
\begin{equation}
\mtx{W}_t(\Id - {\mtx{\Delta}}_t^2) = \mtx{H}_t + \mtx{J}_{t,1} + \mtx{J}_{t,2}\,,
\end{equation}
where 
\[\mtx{H}_t = \mtx{U}^*(\Id + \eta_t\mtx{M}){\mtx{Z}}_{t-1}(\mtx{V}^*(\Id + \eta_t\mtx{M}){\mtx{Z}}_{t-1})^{-1},\quad \mtx{J}_{t,1} = \widehat{\mtx{\Delta}}_t - \mtx{H}_t\mtx{\Delta}_t,\quad \text{and}\quad \mtx{J}_{t,2} = - \widehat{\mtx{\Delta}}_t\mtx{\Delta}_t\]
and where we write
\begin{equation*}
\widehat{\mtx{\Delta}}_t = \eta_t\mtx{U}^*(\mtx{A}_t-\mtx{M}){\mtx{Z}}_{t-1}(\mtx{V}^*(\Id + \eta_t\mtx{M}){\mtx{Z}}_{t-1})^{-1}\,.
\end{equation*}

By the definition of $\mtx Z_t$, we have
\[\mtx{W}_t = \mtx{U}^*{\mtx{Z}}_t(\mtx{V}^*{\mtx{Z}}_t)^{-1} = \mtx{U}^*\mtx{Y}_t\mtx{Z}_{t-1}(\mtx{V}^*\mtx{Y}_t\mtx{Z}_{t-1})^{-1}.\]

We have 
\begin{align*}
\mtx{V}^*\mtx{Y}_t{\mtx{Z}}_{t-1} =&\ \mtx{V}^*(\Id + \eta_t\mtx{M}){\mtx{Z}}_{t-1} + \eta_t\mtx{V}^*(\mtx{A}_t-\mtx{M}){\mtx{Z}}_{t-1}\\
=&\ \left(\Id + \eta_t\mtx{V}^*(\mtx{A}_t-\mtx{M}){\mtx{Z}}_{t-1}(\mtx{V}^*(\Id + \eta_t\mtx{M}){\mtx{Z}}_{t-1})^{-1}\right)\mtx{V}^*(\Id + \eta_t\mtx{M}){\mtx{Z}}_{t-1}\\
=&\ (\Id + {\mtx{\Delta}}_t)\mtx{V}^*(\Id + \eta_t\mtx{M}){\mtx{Z}}_{t-1},
\end{align*}
which implies
\begin{align*}
(\mtx{V}^*\mtx{Y}_t{\mtx{Z}}_{t-1})^{-1}(\Id - \mtx \Delta_t^2) =&\ (\mtx{V}^*(\Id + \eta_t\mtx{M}){\mtx{Z}}_{t-1})^{-1}(\Id + {\mtx{\Delta}}_t)^{-1}(\Id + {\mtx{\Delta}}_t)(\Id - {\mtx{\Delta}}_t)\\
=&\ (\mtx{V}^*(\Id + \eta_t\mtx{M}){\mtx{Z}}_{t-1})^{-1}(\Id - {\mtx{\Delta}}_t)\,.\end{align*}
We also have
\begin{align*}
\mtx{U}^*\mtx{Y}_t{\mtx{Z}}_{t-1} &= \mtx{U}^*(\Id + \eta_t\mtx{M}){\mtx{Z}}_{t-1} + \eta_t\mtx{U}^*(\mtx{A}_t-\mtx{M}){\mtx{Z}}_{t-1}\\
&= \mtx{U}^*(\Id + \eta_t\mtx{M}){\mtx{Z}}_{t-1} +\widehat{\mtx{\Delta}}_t(\mtx{V}^*(\Id + \eta_t\mtx{M}){\mtx{Z}}_{t-1}).
\end{align*} 
Therefore
\begin{align*}
\mtx{W}_t (\Id - \mtx \Delta_t^2) =&\ \mtx{U}^*\mtx{Y}_t{\mtx{Z}}_{t-1}(\mtx{V}^*\mtx{Y}_t{\mtx{Z}}_{t-1})^{-1} \\
=&\ \mtx{U}^*(\Id + \eta_t\mtx{M}){\mtx{Z}}_{t-1}(\mtx{V}^*(\Id + \eta_t\mtx{M}){\mtx{Z}}_{t-1})^{-1}\\
&\ + \widehat{\mtx{\Delta}}_t  - \mtx{U}^*(\Id + \eta_t\mtx{M}){\mtx{Z}}_{t-1}(\mtx{V}^*(\Id + \eta_t\mtx{M}){\mtx{Z}}_{t-1})^{-1}{\mtx{\Delta}}_t \\
&\ - \widehat{\mtx{\Delta}}_t{\mtx{\Delta}}_t\,.
\end{align*}
That is 
\begin{equation}
\mtx{W}_t(\Id - \widehat{\mtx{\Delta}}_t^2) = \mtx{H}_t + \mtx{J}_{t,1} + \mtx{J}_{t,2}\,.
\end{equation}
Since $\mtx \Delta_t$ and $\widehat{\mtx \Delta}_{t}$ are both $O(\eta_t)$, the claim follows.
\qed

\subsection{Proof of Proposition~\ref{non-centered-smoothness}}
By the triangle inequality, we have
\begin{equation*}
\|\mtx X + \mtx Y + \mtx Z\|_{p, p} \leq \|\mtx X + \mtx Y\|_{p, p} + \|\mtx Z\|_{p, p}\,,
\end{equation*}
which implies
\begin{align*}
\|\mtx X + \mtx Y + \mtx Z\|_{p, p}^2 & \leq (\|\mtx X + \mtx Y\|_{p, p} + \|\mtx Z\|_{p, p})^2 \\
& \leq (1+\lambda)(\|\mtx X + \mtx Y\|_{p, p}^2 + \lambda^{-1}\|\mtx Z\|^2_{p, p})\,,
\end{align*}
where in the second step we have applied the elementary inequality
\begin{equation*}
(a + b)^2 \leq (1 + \lambda)(a^2 + \lambda^{-1} b^2)\,,
\end{equation*}
valid for all real numbers $a$ and $b$ and $\lambda > 0$.
Applying Proposition~\ref{prop:smooth} to $\|\mtx X + \mtx Y\|_{p, p}^2$ then yields the claim.
\qed

\section{Additional Lemmas}
\begin{lemma}\label{lem:Gaussian_1}
For any deterministic matrices $\mtx A, \mtx B$ and any standard Gaussian matrix $\mtx Z$ of suitable sizes, it holds that
\[\Prob{\|\mtx{A}\mtx{Z}\mtx{B}\|_2\geq \|\mtx{A}\|_2\|\mtx{B}\|_2(1+t)}\leq \econst^{-t^2/2}.\]
\end{lemma}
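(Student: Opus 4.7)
The plan is to apply the standard Gaussian concentration inequality (Tsirelson--Ibragimov--Sudakov) to the function $f(\mtx Z) := \|\mtx A \mtx Z \mtx B\|_2$, regarded as a real-valued function of the Gaussian matrix $\mtx Z$. Since $\|\cdot\|_2$ denotes the Schatten $2$-norm (i.e., the Frobenius norm) throughout the paper, $f$ is a nonnegative, Lipschitz function on the Euclidean space of matrices equipped with the Frobenius norm, and Gaussian concentration yields sub-Gaussian deviations of $f(\mtx Z)$ from its mean.

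The two ingredients to compute are the Lipschitz constant of $f$ and the mean $\E f(\mtx Z)$. For the Lipschitz estimate, submultiplicativity of the operator norm against the Frobenius norm gives
\begin{equation*}
|f(\mtx Z_1) - f(\mtx Z_2)| \leq \|\mtx A (\mtx Z_1 - \mtx Z_2) \mtx B\|_2 \leq \|\mtx A\|\,\|\mtx B\|\,\|\mtx Z_1 - \mtx Z_2\|_2,
\end{equation*}
so $f$ is Lipschitz with constant $L = \|\mtx A\|\cdot\|\mtx B\| \leq \|\mtx A\|_2 \|\mtx B\|_2$. For the mean, a direct entrywise computation shows that $\E \|\mtx A \mtx Z \mtx B\|_2^2 = \|\mtx A\|_2^2 \|\mtx B\|_2^2$, because each entry $(\mtx A \mtx Z \mtx B)_{ij}$ is Gaussian with variance $\bigl(\sum_k A_{ik}^2\bigr)\bigl(\sum_\ell B_{\ell j}^2\bigr)$. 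Jensen's inequality then yields $\E f(\mtx Z) \leq \|\mtx A\|_2 \|\mtx B\|_2$.

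Combining these, Gaussian concentration gives
\begin{equation*}
\Prob{f(\mtx Z) \geq \E f(\mtx Z) + s} \leq \exp\!\bigl(-s^2/(2L^2)\bigr)
\end{equation*}
for every $s \geq 0$. Choosing $s = t\,\|\mtx A\|_2\|\mtx B\|_2 \geq tL$ and using the mean bound from the previous step absorbs the expectation into the $\|\mtx A\|_2\|\mtx B\|_2$ term, yielding the desired tail bound $\econst^{-t^2/2}$.

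There is no substantive obstacle here; the only mildly delicate point is to be careful that the Lipschitz constant is the product of \emph{operator} norms while the target bound is stated in terms of Frobenius norms. These align in the correct direction since $\|\cdot\| \leq \|\cdot\|_2$, and this monotonicity is precisely what lets us replace the true deviation scale $L$ by the larger quantity $\|\mtx A\|_2 \|\mtx B\|_2$ appearing in the lemma's statement.
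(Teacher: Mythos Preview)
Your proof is correct and follows essentially the same approach as the paper: define $f(\mtx Z)=\|\mtx A\mtx Z\mtx B\|_2$, bound its Lipschitz constant by $\|\mtx A\|\,\|\mtx B\|$, bound the mean via $(\E\|\mtx A\mtx Z\mtx B\|_2^2)^{1/2}=\|\mtx A\|_2\|\mtx B\|_2$, and apply Gaussian concentration. Your explicit remark that $\|\cdot\|\leq\|\cdot\|_2$ is what justifies replacing $tL$ by $t\,\|\mtx A\|_2\|\mtx B\|_2$ in the exponent makes transparent the step the paper leaves implicit in its final chain of inequalities.
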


\begin{proof}
Let $f(\mtx{X}) := \|\mtx{A}\mtx{X}\mtx{B}\|_2$, then
\[|f(\mtx{X}_1)-f(\mtx{X}_2)|\leq \|\mtx{A}\|\|\mtx{B}\|\cdot\|\mtx{X}_1-\mtx{X}_2\|_2.\]
By Gaussian concentration, we have
\[\Prob{f(\mtx Z)\geq \E f(\mtx Z) + \|\mtx{A}\|\|\mtx{B}\| t}\leq \econst^{-t^2/2}.\]
Moreover, we have
\[\E f(\mtx Z) \leq (\E \|\mtx{A}\mtx{Z}\mtx{B}\|_2^2)^{1/2} = \|\mtx{A}\|_2\|\mtx{B}\|_2.\]
It thus follows that 
\[\Prob{f(\mtx Z)\geq \|\mtx{A}\|_2\|\mtx{B}\|_2(1+t)}\leq \Prob{f(\mtx Z)\geq \E f(\mtx Z) + \|\mtx{A}\|\|\mtx{B}\| t}\leq \econst^{-t^2/2},\]
which is the stated result.
\end{proof}

\begin{lemma}[{\cite[Theorem II.13]{Davidson2001}}]\label{lem:Gaussian_2}
Let $\mtx{Q}\in \mathbb{R}^{d\times k}$ be a standard Gaussian matrix. Then
\[\Prob{\|\mtx{Q}\|\geq \sqrt{d} + \sqrt{k} + t}\leq 2\cdot \econst^{-t^2/2}\,.\]
\end{lemma}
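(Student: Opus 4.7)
The plan is to combine Gaussian concentration for Lipschitz functionals with Gordon's comparison inequality to control the mean. First, observe that the operator norm $f(\mtx X) := \|\mtx X\|$ is $1$-Lipschitz with respect to the Frobenius norm: $|f(\mtx X) - f(\mtx Y)| \leq \|\mtx X - \mtx Y\| \leq \|\mtx X - \mtx Y\|_{\mathrm F}$. Since $\mtx Q$ has i.i.d.\ standard Gaussian entries (hence a standard Gaussian vector of length $dk$ after vectorization), the standard concentration inequality for Lipschitz functions of a Gaussian yields
\begin{equation*}
\Prob{\|\mtx Q\| \geq \E \|\mtx Q\| + t} \leq \econst^{-t^2/2}\,.
\end{equation*}

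The remaining task is to show $\E \|\mtx Q\| \leq \sqrt{d} + \sqrt{k}$. Writing $\|\mtx Q\| = \sup_{u \in \sphere{d-1},\, v \in \sphere{k-1}} u^* \mtx Q v$ exhibits the norm as the supremum of a centered Gaussian process $X_{u,v} := u^* \mtx Q v$. I would compare this process to the decoupled process $Y_{u,v} := \ip{g}{u} + \ip{h}{v}$, where $g \sim N(0, \Id_d)$ and $h \sim N(0, \Id_k)$ are independent standard Gaussian vectors. A direct computation on the product of spheres gives
\begin{equation*}
\E|X_{u,v} - X_{u',v'}|^2 = 2 - 2 (u^* u')(v^* v')\,, \qquad \E|Y_{u,v} - Y_{u',v'}|^2 = 4 - 2\, u^* u' - 2\, v^* v'\,,
\end{equation*}
and the desired pointwise inequality $\E|X_{u,v} - X_{u',v'}|^2 \leq \E|Y_{u,v} - Y_{u',v'}|^2$ reduces to $(1 - u^* u')(1 - v^* v') \geq 0$, which holds trivially since $u^* u', v^* v' \in [-1, 1]$. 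Slepian's lemma (in the form of Gordon's inequality) then yields
\begin{equation*}
\E \sup_{u, v} X_{u, v} \leq \E \sup_{u, v} Y_{u, v} = \E \|g\| + \E \|h\| \leq \sqrt{d} + \sqrt{k}\,,
\end{equation*}
where the final step uses Jensen's inequality applied to $\E \|g\| \leq (\E \|g\|^2)^{1/2} = \sqrt{d}$, and similarly for $h$.

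Combining the two bounds gives the one-sided tail estimate $\Prob{\|\mtx Q\| \geq \sqrt{d} + \sqrt{k} + t} \leq \econst^{-t^2/2}$, which is strictly stronger than the claimed bound (the factor of $2$ on the right-hand side of the lemma is harmless slack). The argument has no serious obstacle: it is a packaging of two textbook facts, so the only substantive bookkeeping is the covariance comparison driving Slepian's lemma, and even that reduces to the elementary inequality $(1-a)(1-b) \geq 0$ noted above.
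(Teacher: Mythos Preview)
Your proof is correct and is essentially the standard argument found in the cited reference~\cite{Davidson2001}: Gaussian concentration for the $1$-Lipschitz functional $\mtx X \mapsto \|\mtx X\|$, combined with a Sudakov--Fernique comparison to bound $\E\|\mtx Q\|$ by $\sqrt d + \sqrt k$. The paper itself provides no proof and simply cites the result, so there is nothing to compare beyond noting that your argument in fact yields the sharper one-sided bound $\econst^{-t^2/2}$ without the factor of~$2$.
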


\begin{lemma}[{\cite[Lemma i.A.3]{AllLi17}}]\label{lem:Gaussian_3}
Let $\mtx{Q}\in \mathbb{R}^{k\times k}$ be a standard Gaussian matrix. Then for every $\delta\in (0,1)$, 
\[\Prob{\|\mtx{Q}^{-1}\|_2\geq \frac{6 \sqrt{k}}{\delta}}\leq \delta.\]
\end{lemma}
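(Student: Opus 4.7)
The plan is to reduce the bound on $\|\mtx Q^{-1}\|_2^2$ to a recursive estimate over independent chi-squared and Gaussian scalars via the Bartlett decomposition of the Wishart matrix $\mtx W := \mtx Q^T \mtx Q$, and to identify a single $1/\chi^2_1$ at the base of the recursion as the sole source of the $1/\delta^2$ tail.

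First, I would invoke the Cholesky factorization $\mtx W = \mtx L \mtx L^T$, where by Bartlett's theorem $\mtx L \in \RR^{k \times k}$ is lower triangular with independent entries: $L_{ii}^2 \sim \chi^2_{k-i+1}$ on the diagonal and $L_{ij} \sim N(0,1)$ for $i > j$. Since $\|\mtx Q^{-1}\|_2^2 = \trace(\mtx W^{-1}) = \|\mtx L^{-1}\|_2^2$, it suffices to bound $F_k := \|\mtx L^{-1}\|_2^2$. Splitting off the top-left entry $\ell := L_{11} \sim \chi_k$ and applying block inversion gives the recursion
\[
F_k = \frac{1 + \|\mtx L'^{-1} v\|^2}{\ell^2} + F_{k-1},
\]
where $v \sim N(0, \mtx I_{k-1})$ and $\mtx L'$ is an independent Bartlett factor for a $(k-1) \times (k-1)$ Wishart.

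Next, I would analyze the recursion. Conditional on $\mtx L'$, the Gaussian quadratic form $\|\mtx L'^{-1} v\|^2$ has mean $F_{k-1}$ with sharply concentrated upper tail (Hanson--Wright), while $\ell^2 \sim \chi^2_k$ is exponentially concentrated around $k$. With overwhelming probability each step thus sends $F_{k-1}$ to $F_{k-1}(1 + 1/k) + O(1/k)$. Telescoping over all $k$ levels and using $\prod_{i=2}^k (1+1/i) = (k+1)/2$ yields $F_k \lesssim \tfrac{k+1}{2}\, F_1$, and the base case $F_1 = 1/\chi^2_1$ is the sole source of heavy-tailed behavior: since $\Prob{|N(0,1)| \leq \delta/6} \leq \delta$, we have $F_1 \leq 36/\delta^2$ with probability at least $1-\delta$. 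Combining yields $F_k \leq 18(k+1)/\delta^2 \leq 36 k/\delta^2$ for $k \geq 1$, i.e., $\|\mtx Q^{-1}\|_2 \leq 6\sqrt{k}/\delta$ with probability at least $1-\delta$.

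The main obstacle is controlling the aggregate failure probability across the $k$ recursion levels without inflating the constant. A naive union bound at confidence $1-\delta/k$ per level would cost extra $\sqrt{\log k}$ factors. The fix is that above the base case the randomness involves only $\chi^2_j$ for $j \geq 2$, whose reciprocals have exponentially concentrated upper tails, together with Gaussian quadratic forms with controlled conditional means; these failures aggregate to an $e^{-\Omega(k)}$-rare event that is absorbed without enlarging the final constant. The single $\chi^2_1$ at the base is the only genuinely heavy-tailed random variable, and it is precisely what produces the $\delta$ in the final tail bound.
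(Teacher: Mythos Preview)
The paper does not give a proof of this lemma; it is quoted directly from \cite{AllLi17} and used as a black box, so there is nothing in the paper to compare your argument against.

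As for the proposal itself: the Bartlett-decomposition approach is natural, and the block-inversion recursion $F_j = (1 + \|\mtx L_{j-1}^{-1} v\|^2)/\ell_j^2 + F_{j-1}$ with $\ell_j^2 \sim \chi^2_j$ is correct. The gap is in the tail analysis of the intermediate steps. Your assertion that ``above the base case the randomness involves only $\chi^2_j$ for $j \geq 2$, whose reciprocals have exponentially concentrated upper tails'' is false: since $\Prob{\chi^2_j \leq t} \asymp t^{j/2}$ near zero, the variable $1/\chi^2_j$ has a \emph{polynomial} upper tail of order $s^{-j/2}$, not an exponential one. In particular $1/\chi^2_2$ has infinite mean and $\Prob{1/\chi^2_2 \geq s} \sim 1/(2s)$, so the step $j=2$ is not well-behaved ``with overwhelming probability'' and cannot be folded into an $\econst^{-\Omega(k)}$-rare event. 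Consequently the telescoped recursion does not pick up only the clean factor $(k+1)/2$; the low-$j$ levels leak additional heavy-tail contributions that your accounting ignores, and the specific constant $6$ is not secured by the argument as written. The strategy may be repairable---for instance by treating the bottom few levels together as a heavy-tailed base and telescoping only once $j$ is large enough that $1/\chi^2_j$ genuinely concentrates---but that is a different argument from the one you sketch.
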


The next lemma bounds the probability of $\mathcal{G}_0$ from below.

\begin{lemma}\label{lem:gamma_0}
Let $\good_0$ be the event defined in~\eqref{eq:phase1_goodb}.
There exists a positive constant $C_\gamma = 144 \econst$ such that for any $\delta \in (0, 1)$, if $\gamma \geq C_\gamma \min\{\sqrt{k\log(\econst mT_0/\delta)}/\delta, d/\delta^2\}$, then $\mathcal{G}_0$ holds with probability at least $1-\delta$.
\end{lemma}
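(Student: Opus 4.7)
The plan is to prove the lemma by a union bound that carefully exploits the specific Gaussian structure of $\mtx Z_0$. I first observe that because the columns of $\mtx U$ and $\mtx V$ form an orthonormal basis of $\RR^d$, the matrices $\mtx N_1 := \mtx U^* \mtx Z_0 \in \RR^{(d-k) \times k}$ and $\mtx N_2 := \mtx V^* \mtx Z_0 \in \RR^{k \times k}$ are \emph{independent} standard Gaussian matrices, and $\mtx W_0 = \mtx N_1 \mtx N_2^{-1}$. The proof would condition once and for all on the event $\{\|\mtx N_2^{-1}\|_2 \leq O(\sqrt{k}/\delta)\}$, which by Lemma~\ref{lem:Gaussian_3} has probability at least $1 - O(\delta)$.

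The coarse bound $\|\mtx W_0\|_2 \leq \sqrt{d}\gamma$ in the definition of $\good_0$ is then disposed of by combining this with the tail estimate $\|\mtx N_1\|_2 \leq O(\sqrt{d})$ from Lemma~\ref{lem:Gaussian_2}; since $\gamma$ is at least $C_\gamma d/\delta^2$ in the $d/\delta^2$ branch of the hypothesis, this requirement is amply satisfied with probability at least $1 - O(\delta)$.

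For each $\mtx E \in \mathcal{E}_{r, \ell}$, the key structural step is that the columns of $\mtx E \mtx N_1$ are i.i.d.\ $\mathcal{N}(\zeromtx, \mtx E \mtx E^*)$, so $\mtx E \mtx N_1$ equals $(\mtx E \mtx E^*)^{1/2} \mtx G$ in distribution, where $\mtx G \in \RR^{k \times k}$ is a standard Gaussian independent of $\mtx N_2$. This distributional reduction replaces the ambient dimension $d$ by $k$ at the outset. I would then apply Lemma~\ref{lem:Gaussian_1} conditionally on $\mtx N_2$ to the quantity $\|\mtx E \mtx W_0\|_2 = \|(\mtx E \mtx E^*)^{1/2} \mtx G \mtx N_2^{-1}\|_2$, using $\|(\mtx E \mtx E^*)^{1/2}\| = \|\mtx E\| \leq 1$, to obtain a subgaussian tail whose prefactor depends only on norms of $\mtx N_2^{-1}$.

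The outer step is a discrete union bound. Using the cardinality bound $|\mathcal{E}_{r, \ell}| \leq ((T_0+1)(m+1))^\ell$, for the failure probability on a single pair $(r, \ell)$ to be at most $\delta/O(T_0^2)$ one needs a deviation $t$ of order $\sqrt{\ell \log((T_0+1)(m+1)) + \log(T_0/\delta)}$. The crucial cancellation is that the threshold in the definition of $\good_0$ is $\sqrt{\ell}\gamma/(\sqrt{2}\econst)$, whose $\sqrt{\ell}$ factor matches the $\sqrt{\ell}$ inside the tail; it therefore suffices to take $\gamma \gtrsim \sqrt{k \log(\econst m T_0/\delta)}/\delta$, which is the first branch of the min in the hypothesis. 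Summing over the $O(T_0^2)$ choices of $(r, \ell)$ costs only another logarithmic factor absorbed into $C_\gamma$. The main obstacle will be keeping the dependence on $k$ tight: a direct bound that uses $\|\mtx N_1\|$ in place of the distributional reduction $\mtx E \mtx N_1 \stackrel{d}{=} (\mtx E \mtx E^*)^{1/2} \mtx G$ would introduce a spurious factor of $\sqrt{d/k}$ and destroy the first branch of the min, while the weaker $d/\delta^2$ fallback handles precisely the regime where this reduction is no tighter than a crude operator-norm estimate.
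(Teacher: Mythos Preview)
Your proposal is correct and essentially matches the paper's proof: both exploit the independence of $\mtx U^* \mtx Z_0$ and $\mtx V^* \mtx Z_0$, invoke Lemma~\ref{lem:Gaussian_3} for $\|(\mtx V^* \mtx Z_0)^{-1}\|_2$, apply Lemma~\ref{lem:Gaussian_1} conditionally on $\mtx V^* \mtx Z_0$ together with a union bound over all $\mtx E \in \mathcal E_{r,\ell}$ and all $(r,\ell)$, and handle the $d/\delta^2$ branch by a crude operator-norm estimate. Your distributional reduction $\mtx E \mtx N_1 \stackrel{d}{=} (\mtx E \mtx E^*)^{1/2} \mtx G$ is harmless but unnecessary, since Lemma~\ref{lem:Gaussian_1} already applies directly to $\|\mtx E \mtx N_1 \mtx N_2^{-1}\|_2$ with $\mtx A = \mtx E$, and its bound $\|\mtx A\|_2\|\mtx B\|_2(1+t)$ is dimension-free in the Gaussian factor.
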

\begin{proof}
We have $\mtx W_0 = \mtx U^* \mtx Z_0 (\mtx V^* \mtx Z_0)^{-1}$, where $\mtx Z_0 $ is a matrix with i.i.d.~Gaussian entries. Since $\mtx U$ and $\mtx V$ have orthonormal columns and are themselves orthogonal, the two matrices $\mtx V^* \mtx Z_0$ and $\mtx U^* \mtx Z_0$ are independent matrices with i.i.d.~Gaussian entries. Using Lemma~\ref{lem:Gaussian_1} and conditioning on $\mtx V^* \mtx Z_0$, we have that with probability at least $1-\delta/3(T_0+1)^2$, 
\begin{equation}\label{eq:event_Erl}
\max_{\mtx{E}\in \mathcal{E}_{r,\ell}}\|\mtx{E}\mtx U^* \mtx Z_0 (\mtx V^* \mtx Z_0)^{-1}\|_2 \leq \|(\mtx V^* \mtx Z_0)^{-1}\|_2 \cdot 2\sqrt{8 \ell \log(\econst mT_0/\delta)},
\end{equation}
where we have taken a union bound over the fewer than $((m+1) (T_0+1))^\ell$ elements of $\mathcal E_{r,\ell}$. Taking a uniform bound again over all $r,\ell\in[T_0 + 1]$ yields that, with probability at least $1-\delta/3$, the event \eqref{eq:event_Erl} holds for all $r,\ell\in [T_0 + 1]$. 
By Lemma~\ref{lem:Gaussian_3}, we also have that that $\|(\mtx V^* \mtx Z_0)^{-1}\|_2 \leq 18\sqrt{k}/\delta$ with probability at least $1-\delta/3$.
Furthermore, Lemma~\ref{lem:Gaussian_2} implies that $\|\mtx U^* \mtx Z_0\| \leq 2 \sqrt{2 d\log(3/\delta)}$ with probability at least $1-\delta/3$.
Combining these bounds, we obtain that with probability at least $1-\delta/$,
\begin{equation*}
\max_{\mtx{E}\in \mathcal{E}_{r,\ell}}\|\mtx{E}\mtx U^* \mtx Z_0 (\mtx V^* \mtx Z_0)^{-1}\|_2 \leq 36\sqrt{8 \ell \log(\econst mT_0/\delta)}\,,
\end{equation*}
which is less than $\frac{\sqrt{\ell} \gamma}{\sqrt{2} \econst}$ as long as $C_\gamma \geq 144 \econst$,
and under this same assumption
\begin{equation*}
\|\mtx W_0\|_2 \leq \|\mtx U^* \mtx Z_0\| \|(\mtx V^* \mtx Z_0)^{-1}\|_2 \leq 36\sqrt{2 d \log(3/\delta)} \leq \sqrt{d} \gamma
\end{equation*}
as well.

So $\good_0$ holds with probability at least $1-\delta$ if $\gamma \geq C_\gamma \sqrt{k\log(\econst mT_0/\delta)}/\delta$ for $C_\gamma \geq 144 \econst$.

On the other hand,
We have $\E \|\mtx U^* \mtx Z_0 \| \leq 2 \sqrt{d}$, so that $\|\mtx U^* \mtx Z_0\| \leq 4 \sqrt{d}/\delta$ with probability at least $1- \delta/2$, and Lemma~\ref{lem:Gaussian_3} implies that $\|\mtx V^* \mtx Z_0\|_2 \leq 12 \sqrt{k}/\delta$ with probability at least $1-\delta/2$, so with probability at least $1 - \delta$ we have
\begin{equation*}
\|\mtx W_0\|_2 \leq \|\mtx U^* \mtx Z_0\|\|(\mtx V^* \mtx Z_0)^{-1}\|_2 \leq 48 \sqrt{dk}/\delta^2 < 50 d/\delta^2\,.
\end{equation*}
as claimed.
On this event, we also have $\|\mtx E \mtx W_0\|_2 \leq \|\mtx W_0\|_2 \leq 50 d/\delta^2$.
Therefore, if $\gamma \geq 50\sqrt{2} \econst d/\delta^2$, then $\good_0$ holds.

So $\good_0$ holds with probability at least $1-\delta$ if $\gamma \geq C_\gamma d/\delta^2$ for $C_\gamma \geq 50 \sqrt{2} \econst$.
Therefore, taking $C_\gamma = 144 \econst$ satisfies both requirements and proves the claim.
\end{proof}

\bibliographystyle{habbrv}
\bibliography{oja.bib}

\end{document}